\DeclareMathOperator{\Cov}{Cov}
\DeclareMathOperator{\Var}{Var}
\newcommand{\iid}{i.i.d.\ }
\newcommand{\LB}{L^{\mathcal B}}
\newcommand{\Or}{\mathcal{O}}
\newcommand{\Pb}{\mathbb{P}}
\newcommand{\E}{\mathbbm{E}}
\newcommand{\Id}{\mathbbm{1}}
\newcommand{\e}{\varepsilon}
\newcommand{\R}{\mathbb{R}}
\newcommand{\Z}{\mathbb{Z}}
\newtheorem{prop}{Proposition}[section]
\newtheorem{thm}[prop]{Theorem}
\newtheorem{lem}[prop]{Lemma}
\newtheorem{defin}[prop]{Definition}
\newtheorem{cor}[prop]{Corollary}
\newtheorem{cla}[prop]{Claim}
\newtheorem{rem}[prop]{Remark}
\newenvironment{remark}{\begin{rem}\normalfont}{\end{rem}}
\numberwithin{equation}{section}
\title{Time-time covariance for last passage\\ percolation with generic initial profile}
\author{P.L. Ferrari\thanks{Institute for Applied Mathematics, Bonn University, Endenicher Allee 60, 53115 Bonn, Germany. E-mail: {\tt ferrari@uni-bonn.de}}
\and  A. Occelli\thanks{Institute for Applied Mathematics, Bonn University, Endenicher Allee 60, 53115 Bonn, Germany. E-mail: {\tt occelli@iam.uni-bonn.de}}}
\date{December 20, 2018}
\begin{document}
\sloppy
\maketitle

\begin{abstract}
We consider time correlation for KPZ growth in 1+1 dimensions in a neighborhood of a characteristics. We prove convergence of the covariance with droplet, flat and stationary initial profile. In particular, this provides a rigorous proof of the exact formula of the covariance for the stationary case obtained in~\cite{FS16}. Furthermore, we prove the universality of the first order correction when the two observation times are close and provide a rigorous bound of the error term. This result holds also for random initial profiles which are not necessarily stationary.
\end{abstract}

\section{Introduction}
Stochastic growth models in the Kardar-Parisi-Zhang (KPZ) universality class~\cite{KPZ86} on a one-dimensional substrate are described by a height function $h(x,t)$ with $x$ denoting space and $t$ time. The height function evolves microscopically according to a random and local dynamics, while on a macroscopic scale the evolution is a deterministic PDE and the limit shape is non-random. In particular, if the speed of growth as a function of the gradient of the interface is a strictly convex or concave function, then the model is in the KPZ universality class. One expects large time universality under an appropriate scaling limit.

By studying special models in the KPZ class, the law of the one-point fluctuations and of the spatial statistics are well-known. In particular, the fluctuations scales as $t^{1/3}$ and the correlation length as $t^{2/3}$ (see surveys and lecture notes~\cite{FS10,Cor11,QS15,BG12,Qua11,Fer10b,Tak16})\footnote{This holds true around point with smooth limit shape. Around shocks there are some differences, see e.g.~\cite{FN13,FN16,FF94b}.}. Furthermore, it is known that non-trivial correlations survive on the macroscopic time scale if one considers space-time points along characteristic lines of the PDE for the macroscopic evolution~\cite{Fer08,CFP10b}. This phenomenon is called slow-decorrelation and it indicates that non-trivial processes in a spatial $t^{2/3}$-neighborhood of a characteristic and for macroscopic temporal scale is to be expected. The limit process depends on the initial condition, since this is already the case for the processes at a fixed time.

The study of the time-time process started much more recently. On the experimental and numerical simulation side observables like the persistence probability or the covariance of an appropriately rescaled height function have been studied~\cite{TS12,Tak13,TA16,Tak12}. On the analytic and rigorous side, the two-time joint distribution of the height function is known for special initial conditions: Johansson analyzed a model on full space~\cite{Jo15,Jo18}, while Baik and Liu considered a model on a torus~\cite{BL16,BL17}. There are also non-rigorous works on the time-time covariance and on the upper tail of distributions using replica approach~\cite{ND17,ND18,NDT17}. For general (random) initial conditions exact formulas on the joint distributions are not yet available. Also, the analysis of the covariance starting from the available formulas~\cite{Jo18,BL17} seems to be a difficult task.

In~\cite{FS16} Ferrari and Spohn made some predictions for the behavior of the two-time covariance for three typical initial conditions based on a last passage percolation (LPP) picture. In particular, for the stationary case, an exact formula for the covariance of two points along a characteristic has been derived. Furthermore, the behavior when the macroscopic times were either close or far from each others were provided. However, the work is not mathematically rigorous since the exchange of the large time limit and maximum over sums of Airy processes as well as justification for convergence of the covariances are not provided. The work by Corwin, Liu and Wang~\cite{CLW16} showed the way to obtain a rigorous convergence of distribution in terms of the variational process used in~\cite{FS16}, by lifting the finite-dimensional slow-decorrelation result of~\cite{Fer08,CFP10b} to a functional slow-decorrelation statement.

In this paper we consider a last passage percolation model, which can be also seen as a (version of the) polynuclear growth model. As initial condition we consider the three standard cases (called droplet, flat and stationary) as in~\cite{FS16}, but we extend the study to random but not stationary initial profiles (see~\cite{CFS16} for a related model). In the first three cases by the method of~\cite{CLW16} (simplified in some aspects in~\cite{CFS16,FO17}) one knows that the limiting distributions of (rescaled) LPP times can be expressed as a variational problem in terms of some Airy processes. The first result proven in this paper is the convergence of the covariance of the LPP time to the covariance of the limiting processes, see Theorem~\ref{Thm1}. As a corollary, this provides a proof for the exact formula of the covariance for the stationary case of~\cite{FS16}. We actually extend the result by taking points not exactly on the characteristics, but in a $t^{2/3}$-neighborhood of it.

Our second result concerns the behavior of the covariance when the two times are close to each other on a macroscopic scale. Physically we expect to see the signature of the stationary state as first approximation. This was noticed also in numerical experiments~\cite{Tak13}. This is proven in Theorem~\ref{thm:covGeneral} for all the initial conditions considered. We also provide a rigorous error term, which is compatible with the experiments\footnote{The next order correction is sensitive to the scaling used to define the process. For the scaling used in this paper the error term seems to be optimal. However, if one scales the random variables to have the same one-point distribution function, then experimentally the error term is smaller: instead of an error term with exponent $1^-$, one gets an exponent $\min\{5/3, 2/3 + \alpha\}$, where $\alpha$ is the exponent controlling the convergence of the variance of the height difference to that of the Baik-Rains distribution~\cite{TakPriv18}.}. To obtain the result, we need to control the spatial process at fixed time on small scales. This is achieved by comparing with stationary cases on sets of high probability. The idea goes back to Cator and Pimentel~\cite{CP15b} for the droplet case (extended to general case in~\cite{Pi17}). The control on the high probability sets requires bounds on exit point probabilities, which has to be obtained for each initial profile. In particular, to achieve a good control in the error term, one can not use soft bounds as in~\cite{BCS06,Pi17}. Finally, for droplet initial condition we derive a result also when times are far apart, see Theorem~\ref{thm:covstep0}.

A few weeks after we finished our paper, for the droplet geometry, Basu and Ganguly obtained the same exponents for the behaviour at close or far away points~\cite{BG18}. Unlike in our paper, they did not identify the prefactor, but on the other hand, their result are non-asymptotic as well. One input often used in their paper are the bounds on transversal fluctuations of~\cite{BSS14}.

\bigskip

\emph{Outline:} In Section~\ref{sectModel} we introduce the model, state some known limiting results necessary for the rest of the paper and provide the main results. In Section~\ref{sec:stationaryComparison} we recall the stationary LPP and the comparison lemmas. In Section~\ref{sec:ConvergenceCovariance} we prove Theorem~\ref{Thm1} on the convergence of the covariance. In Section~\ref{sec:CloseTimeBehavior} we prove Theorem~\ref{thm:covGeneral} on the close time behaviour, while in Section~\ref{sec:SeparateTimeBehavior} we sketch the proof of Theorem~\ref{thm:covstep0}. The appendix contains several bounds on the one-point distribution or on increments, which are used in the proofs.

\bigskip

\emph{Acknowledgments:} This work is supported by the German Research Foundation in the Collaborative Research Center 1060 ”The Mathematics of Emergent Effects”, project B04.

\section{Model and results}\label{sectModel}

\subsection{LPP and polynuclear growth}\label{sec:PNGandLPP}
Consider a collection of \iid random variables ${\omega_{i,j},i,j\in \Z}$ with exponential distribution of parameter one. An \emph{up-right path} $\pi=(\pi(0),\pi(1),\ldots,\pi(n))$ on $\Z^2$ from a point $A$ to a point $E$ is a sequence of points in $\Z^2$ with \mbox{$\pi(k+1)-\pi(k)\in \{(0,1),(1,0)\}$}, with $\pi(0)=A$ and $\pi(n)=E$, and $n$ is called the length $\ell(\pi)$ of $\pi$. Given a set of points $S_A$ with some random variables (not necessarily independent) $h^0$ on $S_A$, but independent of the $\omega$'s, and given a point $E$, one defines the last passage time $L_{S_A\to E}$ as
\begin{equation}\label{eq1i}
 L_{S_A\rightarrow E}=\max_{\begin{smallmatrix} \pi:A\rightarrow E \\ A \in S_A \end{smallmatrix}} \bigg( h^0(\pi(0))+\sum_{1\le k\le n} \omega_{\pi(k)}\bigg).
\end{equation}
Also, for two points $P,Q$ which are not on the initial set $S_A$, we define $L_{P\to Q}$ as above but without the term $h^0(\pi(0))$.
$\pi_{S_A\rightarrow E}^{max}$ indicates the maximizer of the last passage time. For continuous random variables, the maximizer is a.s.\ unique\footnote{The only exception will be if $h^0$ is not random, since then the maximizer is unique up to the initial point, which has weight $0$ and thus it is irrelevant.}.

LPP can be though as a stochastic growth model, a version of the polynuclear growth model, as follows.
Let $S_A={\cal L}:=\{(i,j)\in\Z^2\, | \, i+j=0\}$ and let $h^0$ represents a height function at time $t=0$. Then one defines the height function at time $t$ by the relation
\begin{equation}
h(x,t)=L_{{\cal L}\to ((x+t)/2,(t-x)/2)}
\end{equation}
for all $x-t$ being even numbers (and set $h(x,t)=L_{{\cal L}\to ((x+t-1)/2,(t-x-1)/2)}$ for $x-t$ odd). The dynamics of the height function is
\begin{equation}
h(x,t)=\max\{h(x-1,t-1),h(x,t-1),h(x+1,t-1)\}+\omega_{(x+t)/2,(t-x)/2}
\end{equation}
with initial conditions $h(x,0)=h^0(x/2,-x/2)$ (here $\omega_{(x+t)/2,(t-x)/2}=0$ if $x-t$ is odd).

We are interested in the scaling limit of the height function
\begin{equation}
 (w,\tau)\mapsto \lim_{t\to\infty}\frac{h(w 2^{1/3} t^{2/3},\tau t)-\tau t}{2^{2/3}t^{1/3}}
\end{equation}
or, equivalently, setting $E=(\tau N,\tau N)+w(2N)^{2/3}(1,-1)$,
\begin{equation}
 (w,\tau)\mapsto \lim_{N\to\infty}\frac{L_{S_A\to E}-4\tau N}{2^{4/3}N^{1/3}},
\end{equation}
for different initial conditions\footnote{The choice of zero-slope is just for convenience as it avoids to introduce a further parameter in the scaling. However, the inputs used in the proofs are available for non-zero slopes as well.}
\begin{enumerate}
 \item \emph{Droplet case.} In this case one sets $h^0=0$ and further set $\omega(i,j)=0$ whenever $(i,j)\not\in\Z_+^2$. In terms of LPP this is equivalent to take $S_A={(0,0)}$ and $h^0=0$.
 \item \emph{Flat with zero-slope.} This means that we take $h^0=0$.
 \item \emph{Stationary with zero-slope.} Let $\{X_k,Y_k\}_{k\in\Z}$ be i.i.d.\ random variable ${\rm Exp}(1/2)$-distributed. Then define
 \begin{equation}\label{eq1.7}
h^0(x,-x)=\begin{cases}
  \sum_{k=1}^x (X_k-Y_k),& \textrm{for }x\geq 1,\\
  0, & \textrm{for }x=0,\\
  -\sum_{k=x+1}^0 (X_k-Y_k),& \textrm{for }x\leq -1.
 \end{cases}
\end{equation}
 \item \emph{A family of random initial conditions.} We consider the case where for a given $\sigma\geq 0$, $h^0$ is given by (\ref{eq1.7}) multiplied by $\sigma$. Clearly, the cases $\sigma=0$ and $\sigma=1$ correspond to the flat and to the stationary cases.
\end{enumerate}

\begin{rem}
In the setting of TASEP, a random initial condition maps to a LPP starting from a random line. Due to functional slow-decorrelation, the weight $h^0$ should be taken to reflect the first order LPP from a point on the line to its projection onto the antidiagonal. Thus a-priori one could try to start with the random line used in~\cite{CFS16,FO17}, but since in the scaling limit the result is identical to the one of our choice, we did not attempt to use this precise mapping.
\end{rem}

\subsubsection*{Limiting variational formulas}
For $0<\tau\leq 1$, we set\footnote{Throughout the paper we do not write explicitly integer parts.} $E_\tau=(\tau N,\tau N)+(2N)^{2/3} w_\tau (1,-1)$ and define the LPP and its limit as
\begin{equation}\label{eqL}
L^\star_N(\tau)=\frac{L^\star_{S_A\to E_\tau}-4\tau N}{2^{4/3}N^{1/3}},\quad \chi^{\star}(\tau):=\lim_{N\to\infty} L^{\star}_N(\tau),
\end{equation}
where the superscript $\star$ denotes the different configurations, point-to-point ($\bullet$), point-to-line ($\diagdown$), stationary ($\mathcal{B}$) and random ($\sigma$).

The convergence in distribution of the random variables $L^{\star}_N(\tau)$ are well-known. Recall that for LPP we have the identity
\begin{equation}\label{eqLstar}
L^{\star}_{S_A\to E_1}=\max_{u\in\R}\{L^{\star}_{S_A\to I(u)}+L_{I(u)\to E_1}\}
\end{equation}
with
\begin{equation}\label{eqI}
I(u)=(\tau N,\tau N)+u(2N)^{2/3}(1,-1).
\end{equation}
Provided that the limit $N\to\infty$ and $\max_{u\in\R}$ can be exchanged (which is the case in all the cases considered here, see~\cite{CLW16,FO17,CFS16} for related works), the limiting processes can be written in terms of Airy processes as follows.
\begin{enumerate}
\item \emph{Droplet case.} Let ${\cal A}_2$ and $\tilde {\cal A}_2$ be two independent Airy$_2$ processes. Then
\begin{equation}\label{chistep}
\begin{aligned}
 \chi^{\bullet}(\tau)&=\tau^{1/3}\left[\tilde{\cal A}_2(\tfrac{w_\tau}{\tau^{2/3}})-\tfrac{w_\tau^2}{\tau^{4/3}}\right],\\
 \chi^{\bullet}(1)&=\max_{u\in\R}\left\{\tau^{1/3}\left[\tilde{\cal A}_2(\tfrac{u}{\tau^{2/3}})-\tfrac{u^2}{\tau^{4/3}}\right]+(1-\tau)^{1/3}\left[{\cal A}_2\bigl(\tfrac{u-w_1}{(1-\tau)^{2/3}}\bigr) - \tfrac{(u-w_1)^2}{(1-\tau)^{4/3}}\right]\right\},
\end{aligned}
\end{equation}
The Airy$_2$ process has been discovered in a related polynuclear growth model setting~\cite{PS02} (see~\cite{Jo03b} for the case of geometric random variables, or~\cite{BP07} for a two-parameter generalization). Tightness in this setting was shown in~\cite{FO17}, building on the approach of~\cite{CP15b} (while for the geometric case tightness was shown already in~\cite{Jo03b}).
\item \emph{Flat case.} Let ${\cal A}_1$ be an Airy$_1$ process and ${\cal A}_2$ an Airy$_2$ process, independent of each other. Then
\begin{equation}\label{chiflat}
\begin{aligned}
 \chi^{\diagdown}(\tau)&=(2\tau)^{1/3}{\cal A}_{1}(\tfrac{w_\tau}{(2\tau)^{2/3}}),\\
 \chi^{\diagdown}(1)&=\max_{u\in\R} \left\{ (2\tau)^{1/3}{\cal A}_1 (\tfrac{u}{(2\tau)^{2/3}}) + (1-\tau)^{1/3}\left[{\cal A}_2\bigl(\tfrac{u-w_1}{(1-\tau)^{2/3}}\bigr) - \tfrac{(u-w_1)^2}{(1-\tau)^{4/3}}\right]\right\}.
\end{aligned}
\end{equation}
The Airy$_1$ process has been discovered in the framework of the totally asymmetric simple exclusion process~\cite{Sas05,BFPS06}, equivalent through slow-decorrelation to the LPP~\cite{CFP10a,CFP10b,Fer08}.
\item \emph{Stationary case.} Let ${\cal A}_2$ be an Airy$_2$ process and ${\cal A}_{\rm stat}$ an Airy$_{\rm stat}$ process, independent of each other. Then
\begin{equation}\label{chistat}
\begin{aligned}
\chi^{\mathcal{B}}(\tau)&=\tau^{1/3}{\cal A}_{\rm stat}(\tfrac{w_\tau}{\tau^{2/3}}),\\
\chi^{\mathcal{B}}(1)&=\max_{u\in\R}\Big\{\tau^{1/3}{\cal A}_{\rm stat}(\tfrac{u}{\tau^{2/3}})+(1-\tau)^{1/3}\left[{\cal A}_2\bigl(\tfrac{u-w_1}{(1-\tau)^{2/3}}\bigr) - \tfrac{(u-w_1)^2}{(1-\tau)^{4/3}}\right]\Big\}.
\end{aligned}
\end{equation}
The limit process Airy$_{\rm stat}$ (which, in spite of the name, is not stationary) was obtained in~\cite{BFP09}.
\item \emph{Random initial conditions.} For this case, the one-point distribution is given by the following expression\footnote{This was actually proven for the LPP model where instead of the random function on the antidiagonal one has a random line in~\cite{CFS16}, see also~\cite{FO17} for general slope. These works were based on the approach in the geometric random variables case of~\cite{CLW16}. Adapting the proof of~\cite{FO17} to this setting to get the variational formula is straightforward (it is actually even slightly simpler).}
\begin{equation}
\Pb(\chi^\sigma(1)\leq s)=\Pb\left(\max_{u\in\R} \{{\cal A}_2(u)-u^2+\sqrt{2}\sigma B(u)\}\leq s\right),
\end{equation}
where the Airy$_2$ process and the two-sided standard Brownian motion $B$ are independent of each other. Furthermore, we could write formulas similar to the one of the first three cases in terms of \emph{an} Airy sheet~\cite{MQR17}. However uniqueness in law of Airy sheet is so-far not proven~\cite{MQR17,Pi17b}. Therefore we state the convergence of the covariance to the covariance of its limit process only for the other cases. However, the proof could be adapted to the general $\sigma$ as well, once uniqueness of the limit is established.
\end{enumerate}

\subsection{Main results}
\subsubsection*{Convergence of the covariance}
As our first result we give a rigorous proof of the convergence of the covariances.
\begin{thm}\label{Thm1}We have
\begin{equation}\label{EqThm1a}
 \lim_{N\to\infty}\Cov\left(L^{\star}_N(\tau),L^{\star}_N(1)\right)=\Cov\left(\chi^{\star}(\tau),\chi^{\star}(1)\right),
\end{equation}
for $\star\in\{\bullet,\diagdown,\mathcal{B}\}$.
\end{thm}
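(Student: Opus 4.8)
The plan is to upgrade the known convergence in distribution to convergence of moments via uniform integrability. First I would record joint convergence: the pair $(L^{\star}_N(\tau),L^{\star}_N(1))$ converges in distribution to $(\chi^{\star}(\tau),\chi^{\star}(1))$ for $\star\in\{\bullet,\diagdown,\mathcal B\}$. Convergence of the marginal $L^{\star}_N(1)$ to the variational formulas \eqref{chistep}, \eqref{chiflat}, \eqref{chistat} is exactly the content of the functional slow-decorrelation results of \cite{CLW16,FO17,CFS16} (the exchange of $\lim_N$ and $\max_u$ in \eqref{eqLstar}), and the marginal $L^{\star}_N(\tau)$ converges to the corresponding first Airy-type process; joint convergence of the pair holds because both quantities are continuous functionals, through \eqref{eqLstar}, of the \emph{same} pair of rescaled LPP fields (the field $S_A\to I(u)$ and the independent point-to-point field $I(u)\to E_1$), which converge jointly to the pair of independent Airy processes entering the formulas. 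In particular $L^{\star}_N(\tau)L^{\star}_N(1)\Rightarrow \chi^{\star}(\tau)\chi^{\star}(1)$.

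Second, I would establish uniform-in-$N$ moment bounds. The point-to-point, point-to-line and stationary LPP times obey one-point upper- and lower-tail estimates of the form $\Pb(|L^{\star}_N(\tau)|>s)\le C e^{-c s^{3/2}}$, uniformly in $N\ge N_0$ and for $s$ in the relevant range (the lower tail in fact decays faster); these are precisely the one-point bounds collected in the appendix. For the composite quantity $L^{\star}_N(1)$ one combines \eqref{eqLstar} with the same bounds: its upper tail is controlled by that of the single LPP time $L^{\star}_{S_A\to E_1}$, while its lower tail is bounded below by that of $L^{\star}_{S_A\to I(0)}+L_{I(0)\to E_1}$. Integrating these tails gives $\sup_{N\ge N_0}\E\big[|L^{\star}_N(\tau)|^{p}\big]<\infty$ and $\sup_{N\ge N_0}\E\big[|L^{\star}_N(1)|^{p}\big]<\infty$ for every $p\ge 1$; by Fatou the limits $\chi^{\star}(\tau),\chi^{\star}(1)$ then also have all moments finite, so the right-hand side of \eqref{EqThm1a} is well defined.

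Third, a short uniform-integrability argument closes the proof. By Cauchy--Schwarz, $\E\big[|L^{\star}_N(\tau)L^{\star}_N(1)|^{3/2}\big]\le \E\big[|L^{\star}_N(\tau)|^{3}\big]^{1/2}\E\big[|L^{\star}_N(1)|^{3}\big]^{1/2}$, which is bounded uniformly in $N$, so $\{L^{\star}_N(\tau)L^{\star}_N(1)\}_{N}$ is uniformly integrable, and likewise $\{L^{\star}_N(\tau)\}_{N}$ and $\{L^{\star}_N(1)\}_{N}$. Combining uniform integrability with the weak convergence of the first step yields $\E[L^{\star}_N(\tau)L^{\star}_N(1)]\to\E[\chi^{\star}(\tau)\chi^{\star}(1)]$, $\E[L^{\star}_N(\tau)]\to\E[\chi^{\star}(\tau)]$ and $\E[L^{\star}_N(1)]\to\E[\chi^{\star}(1)]$, whence $\Cov(L^{\star}_N(\tau),L^{\star}_N(1))=\E[L^{\star}_N(\tau)L^{\star}_N(1)]-\E[L^{\star}_N(\tau)]\E[L^{\star}_N(1)]$ converges to $\Cov(\chi^{\star}(\tau),\chi^{\star}(1))$.

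The main obstacle is the second step: producing tail bounds that are genuinely uniform in $N$, especially for the stationary initial condition, where the lower tail of $L^{\mathcal B}_N$ is governed by Baik--Rains statistics and needs the more delicate exit-point estimates, and making sure that the bound on the maximum $L^{\star}_N(1)$ — a supremum over $u$ of a sum of two correlated LPP times — does not deteriorate with $N$. Once those estimates (gathered in the appendix) are available, the weak-convergence input and the uniform-integrability bookkeeping are soft.
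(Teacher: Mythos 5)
Your plan — joint weak convergence of $(L^{\star}_N(\tau),L^{\star}_N(1))$ plus uniform integrability from uniform-in-$N$ tail bounds — reaches the same conclusion as the paper and uses the same hard inputs, but it glosses over one point that the paper treats as the central technical step. The supremum over $u\in\R$ in \eqref{eqLstar} is \emph{not} a continuous functional of the fields $u\mapsto L^{\star}_N(u,\tau)$, $u\mapsto L^{\rm pp}_N(u,\tau)$ in the topology of local uniform convergence, so joint weak convergence of the pair does not follow from weak convergence of the fields by a continuity argument alone; one must first localize the argmax. The paper does this explicitly: Lemma~\ref{lemma1} and Corollary~\ref{cor1} show the maximizer lies in $|u|\leq M$ up to an error $\Or(e^{-cM^2})$ uniformly in $N$, Lemma~\ref{lem2.5} then gives joint weak convergence of the truncated maximum $L^{\star}_{N;M}(1)$ with $L^{\star}_N(\tau)$ (here the functional \emph{is} continuous on $C([-M,M])$, and tightness does the rest), Proposition~\ref{prop1} upgrades this to moment convergence via tail bounds and dominated convergence (essentially your UI step), and Lemma~\ref{lemma2} bounds $\E[(L^{\star}_N(1)-L^{\star}_{N;M}(1))^2]\leq Ce^{-cM^2}$ uniformly in $N$ so that the double limit $\lim_{M\to\infty}\lim_{N\to\infty}$ closes the argument. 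The localization is not a soft corollary of the cited functional slow-decorrelation results; it rests on exit-point bounds and, for flat initial data, the comparison with stationary increments in Lemma~\ref{lemma6}, which are part of what you lump under "producing tail bounds." So your step~2 is doing more work than your write-up lets on: the same estimates that give uniform moments also supply the localization that your step~1 silently presupposes. Two minor points: the upper-tail bounds in the appendix are $e^{-cs}$, not $e^{-cs^{3/2}}$ (only the lower tail has the $3/2$ exponent), and the paper's explicit $M$-truncation makes the bookkeeping cleaner than a direct UI argument on the untruncated pair, though both can be made to work.
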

\begin{remark}
The motivation of this paper is the study of the covariance. However, by inspecting the proof, one sees that one can generalize the proof to get  convergence of any joint moments of $L^\star_N(\tau)$ and $L^\star_N(1)$ without the need to new ideas and bounds.
\end{remark}

For the stationary process ${\cal A}_{\rm stat}(w)\stackrel{(d)}{=}\max_{v\in\R}\{\sqrt{2}B(v)+{\cal A}_2(v)-(v-w)^2\}$ where the Airy$_2$ process, ${\cal A}_2$, and the two-sided standard Brownian motion, $B(v)$, are independent~\cite{QR13}. We denote
\begin{equation}
F_w(s)=\Pb\Big(\max_{v\in\R}\{\sqrt{2}B(v)+{{\cal A}}_2(v)-(v-w)^2\}\leq s\Big)
\end{equation}
and use the notation $\xi_{{\rm stat},w}$ for a random variable distributed according to $F_w$. Due to stationarity one has the property~\cite{PS01,BR00} $\E({\cal A}_{\rm stat}(w))=0$, which implies
\begin{equation}
\Var(\xi_{{\rm stat},w})=\E\Big(\max_{v\in\R}\{\sqrt{2}B(v)+{{\cal A}}_2(v)-(v-w)^2\}\leq s\Big)^2.
\end{equation}

For the stationary case, an exact expression for the covariance has been obtained in~\cite{FS16} for $\tau$ in the entire interval [0,1], in the special case $w_\tau=w_1=0$. For general values of $w_\tau$ and $w_1$, we obtain
\begin{cor}\label{Cor:covstat}
For the stationary LPP, the covariance of the limiting height function for all $\tau\in (0,1)$ can be expressed as
\begin{equation}\label{eq1.13}
\begin{aligned}
\Cov\left(\chi^{\mathcal{B}}(\tau),\chi^{\mathcal{B}}(1)\right)&=\frac{\tau^{2/3}}{2}\Var\left(\xi_{{\rm stat},\tau^{-2/3}w_\tau})\right)+\frac{1}{2}\Var\left(\xi_{{\rm stat},w_1}\right)\\
&-\frac{(1-\tau)^{2/3}}{2}\Var\left(\xi_{{\rm stat},(1-\tau)^{-2/3}(w_1-w_\tau)}\right).
\end{aligned}
\end{equation}
\end{cor}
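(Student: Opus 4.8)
The plan is to reduce the statement, via Theorem~\ref{Thm1}, to three variance limits and one application of the polarization identity
\[
\Cov(X,Y)=\tfrac12\left(\Var(X)+\Var(Y)-\Var(X-Y)\right).
\]
Indeed, by Theorem~\ref{Thm1} we have $\Cov(\chi^{\mathcal{B}}(\tau),\chi^{\mathcal{B}}(1))=\lim_{N\to\infty}\Cov(L^{\mathcal{B}}_N(\tau),L^{\mathcal{B}}_N(1))$, and since the polarization identity holds for the pair $(L^{\mathcal{B}}_N(\tau),L^{\mathcal{B}}_N(1))$ at every finite $N$, it suffices to identify $\lim_N\Var(L^{\mathcal{B}}_N(\tau))$, $\lim_N\Var(L^{\mathcal{B}}_N(1))$ and $\lim_N\Var(L^{\mathcal{B}}_N(1)-L^{\mathcal{B}}_N(\tau))$.

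The first two are one-point facts. Viewing $E_\tau$ as a point at macroscopic time $\tau N$ displaced from the characteristic origin by $w_\tau(2N)^{2/3}(1,-1)=\tau^{-2/3}w_\tau(2\tau N)^{2/3}(1,-1)$ and rescaling in the units natural for time $\tau N$, the known convergence of the stationary one-point distribution shows that $L^{\mathcal{B}}_N(\tau)$ converges in law to $\tau^{1/3}\xi_{{\rm stat},\tau^{-2/3}w_\tau}$ --- in agreement with $\chi^{\mathcal{B}}(\tau)=\tau^{1/3}\mathcal{A}_{\rm stat}(\tau^{-2/3}w_\tau)$ and $\mathcal{A}_{\rm stat}(w)\stackrel{(d)}{=}\xi_{{\rm stat},w}$ --- and similarly $L^{\mathcal{B}}_N(1)$ converges in law to $\xi_{{\rm stat},w_1}$. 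Upgrading these to convergence of second moments by means of the one-point tail bounds of the appendix, we obtain $\lim_N\Var(L^{\mathcal{B}}_N(\tau))=\tau^{2/3}\Var(\xi_{{\rm stat},\tau^{-2/3}w_\tau})$ and $\lim_N\Var(L^{\mathcal{B}}_N(1))=\Var(\xi_{{\rm stat},w_1})$.

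The heart of the argument is the increment $L^{\mathcal{B}}_N(1)-L^{\mathcal{B}}_N(\tau)=\bigl(L^{\mathcal{B}}_{S_A\to E_1}-L^{\mathcal{B}}_{S_A\to E_\tau}-4(1-\tau)N\bigr)/(2^{4/3}N^{1/3})$. Let $\mathcal{L}_\tau=\{(i,j):i+j=2\tau N\}$ be the antidiagonal through $E_\tau$. Decomposing along $\mathcal{L}_\tau$ as in~\eqref{eqLstar}, $L^{\mathcal{B}}_{S_A\to E_1}=\max_{A'\in\mathcal{L}_\tau}\{L^{\mathcal{B}}_{S_A\to A'}+L_{A'\to E_1}\}$, while $L^{\mathcal{B}}_{S_A\to E_\tau}$ is the value at $A'=E_\tau$ of the same family $\{L^{\mathcal{B}}_{S_A\to A'}\}_{A'\in\mathcal{L}_\tau}$; hence the increment equals $\max_{A'\in\mathcal{L}_\tau}\{(L^{\mathcal{B}}_{S_A\to A'}-L^{\mathcal{B}}_{S_A\to E_\tau})+L_{A'\to E_1}\}$. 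The family $\{L_{A'\to E_1}\}_{A'\in\mathcal{L}_\tau}$ is a function of the $\omega$'s lying strictly above $\mathcal{L}_\tau$ only, hence is independent of $\{L^{\mathcal{B}}_{S_A\to A'}\}_{A'\in\mathcal{L}_\tau}$, while the Burke-type property of the stationary LPP recalled in Section~\ref{sec:stationaryComparison} identifies the law of the increments of $\{L^{\mathcal{B}}_{S_A\to A'}\}_{A'\in\mathcal{L}_\tau}$ about its value at $E_\tau$ with that of the stationary boundary profile~\eqref{eq1.7} pinned at its origin. Consequently, for every $N$, the increment has the law of a fresh stationary LPP from $\mathcal{L}_0$ to $((1-\tau)N,(1-\tau)N)+(w_1-w_\tau)(2N)^{2/3}(1,-1)$, that is, a stationary LPP at macroscopic time $(1-\tau)N$ with displacement parameter $(1-\tau)^{-2/3}(w_1-w_\tau)$ in the associated natural units, so that $L^{\mathcal{B}}_N(1)-L^{\mathcal{B}}_N(\tau)$ converges in law to $(1-\tau)^{1/3}\xi_{{\rm stat},(1-\tau)^{-2/3}(w_1-w_\tau)}$. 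With the same tail bounds this gives $\lim_N\Var(L^{\mathcal{B}}_N(1)-L^{\mathcal{B}}_N(\tau))=(1-\tau)^{2/3}\Var(\xi_{{\rm stat},(1-\tau)^{-2/3}(w_1-w_\tau)})$, and substituting the three limits into the polarization identity yields~\eqref{eq1.13}.

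The main obstacle is making the Burke step airtight: one must check that $\mathcal{L}_\tau$ is an admissible boundary for the stationary LPP, keep careful track of which vertex weight is attached to which last-passage time so that the two families above are genuinely functions of disjoint (hence independent) collections of $\omega$'s, and verify that the increments about $E_\tau$ reproduce~\eqref{eq1.7} exactly, up to the irrelevant integer-part and parity adjustments. The remaining ingredient, uniform integrability to pass from convergence in law to convergence of variances, is routine given the one-point estimates already used in the paper. A variant of the argument stays at the level of the limit and works directly with~\eqref{chistat}, using that $\chi^{\mathcal{B}}(1)-\chi^{\mathcal{B}}(\tau)$ involves $\mathcal{A}_{\rm stat}$ only through the increment $\mathcal{A}_{\rm stat}(\cdot)-\mathcal{A}_{\rm stat}(\tau^{-2/3}w_\tau)$, whose law is that of $\sqrt{2}$ times a two-sided Brownian motion (the $N\to\infty$ form of Burke's property); Brownian scaling together with the stationarity of $\mathcal{A}_2$ and its independence of $\mathcal{A}_{\rm stat}$ then brings $\chi^{\mathcal{B}}(1)-\chi^{\mathcal{B}}(\tau)$ to the claimed form $(1-\tau)^{1/3}\xi_{{\rm stat},(1-\tau)^{-2/3}(w_1-w_\tau)}$.
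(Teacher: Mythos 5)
Your proposal is correct, and its main line of argument differs from the paper's in a meaningful way. The paper proves the corollary entirely at the level of the limit processes: starting from the variational representation \eqref{chistat}, it writes $\chi^{\mathcal{B}}(1)-\chi^{\mathcal{B}}(\tau)$ as a max over $u$, changes variables $u=w_\tau+z(1-\tau)^{2/3}$, uses stationarity of $\mathcal{A}_2$, the Brownian increments of $\mathcal{A}_{\rm stat}$, and Brownian scaling to identify the law of the difference with $(1-\tau)^{1/3}\xi_{{\rm stat},(w_1-w_\tau)(1-\tau)^{-2/3}}$ (this is exactly the ``variant'' you sketch in your last paragraph). Your main argument instead establishes the distributional identity \emph{at every finite $N$}: decomposing $L^{\mathcal{B}}_{S_A\to E_1}$ over the antidiagonal $\mathcal{L}_\tau$ through $E_\tau$, using Burke (Lemma 4.2 of~\cite{BCS06}, as quoted in Section~\ref{sec:stationaryComparison}) to identify the law of the increments $L^{\mathcal{B}}_{S_A\to A'}-L^{\mathcal{B}}_{S_A\to E_\tau}$ along $\mathcal{L}_\tau$ with the boundary profile~\eqref{eq1.7}, and observing that the bulk family $\{L_{A'\to E_1}\}$ depends on a disjoint set of $\omega$'s, you conclude that $L^{\mathcal{B}}_N(1)-L^{\mathcal{B}}_N(\tau)$ has \emph{exactly} the law of a fresh stationary LPP over macroscopic time $(1-\tau)N$. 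This is a genuinely different route: it avoids manipulating the limit variational formula and only needs the one-point stationary convergence plus the tail bounds to upgrade to convergence of variances; conversely, it requires keeping the finite-$N$ Burke/independence bookkeeping exact, whereas the paper offloads all that to the already-proven convergence of Theorem~\ref{Thm1} and a clean computation on $\mathcal{A}_{\rm stat}$. One further remark: since you apply polarization at finite $N$ and identify all three variance limits directly, you do not actually need the covariance-convergence statement of Theorem~\ref{Thm1} for this corollary; the paper, by contrast, invokes it and applies polarization only to the limits. Both approaches are valid and yield~\eqref{eq1.13}.
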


\subsubsection*{Universal behavior for $\tau\to 1$}
In~\cite{FS16} there is a conjecture on the behaviour of the covariance of the limit process for $\tau\to 1$ for the other initial profiles as well. Our second goal is to provide a proof of such statements together with a rigorous error bound. We also extend the result to all initial conditions 1-4. Recall that for any random variables $X_1,X_2$ it holds
\begin{equation}\label{covariance}
\Cov\left(X_1,X_2\right)=\tfrac{1}{2}\Var\left(X_1\right)+\tfrac{1}{2}\Var\left(X_2\right)-\tfrac{1}{2}\Var(X_2-X_1).
\end{equation}

\begin{thm}\label{thm:covGeneral} Let us scale $w_1=\tilde w_1 (1-\tau)^{2/3}$ and $w_\tau=\tilde w_\tau (1-\tau)^{2/3}$. Then as $\tau\to 1$ we have\footnote{One could also reformulate the result by saying that the error term is $\Or((1-\tau)/\ln(1-\tau))$.}
 \begin{equation}
   \Var\left(\chi^{\star}(\tau)-\chi^{\star}(1)\right)=(1-\tau)^{2/3} \Var\left(\xi_{{\rm stat},\tilde w_1-\tilde w_\tau}\right) +\Or(1-\tau)^{1-\delta},
 \end{equation}
 for any $\delta>0$.
 In particular, by (\ref{covariance}), for $\star=\{\bullet,\diagdown,\mathcal{B}\}$, we can rewrite
  \begin{equation}
 \begin{aligned}
   \Cov\left(\chi^{\star}(\tau),\chi^{\star}(1)\right)&=\frac{1}{2} \Var\left(\xi^{\star}(w_1)\right)+\frac{\tau^{2/3}}{2}\Var\left(\xi^{\star}(w_\tau\tau^{-2/3})\right)\\
   &-\frac{(1-\tau)^{2/3}}{2} \Var\left(\xi_{{\rm stat},\tilde w_1-\tilde w_\tau}\right) +\Or(1-\tau)^{1-\delta}.
\end{aligned}
 \end{equation}
Here $\xi^\bullet(w)+w^2$ (resp.\ $2^{2/3}\xi^\diagdown(w)$) is distributed according to a GUE (resp.\ GOE) Tracy-Widom law and $\xi^{\mathcal{B}}(w)=\xi_{{\rm stat},w}$.
\end{thm}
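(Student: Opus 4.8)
The plan is to transfer the statement to the prelimit last passage times and run a comparison with the stationary model there. By Theorem~\ref{Thm1}, and since the uniform integrability established in its proof applies to the difference $L^\star_N(1)-L^\star_N(\tau)$, we have $\Var(\chi^\star(\tau)-\chi^\star(1))=\lim_{N\to\infty}\Var(L^\star_N(1)-L^\star_N(\tau))$, so it suffices to prove that, with the error uniform in all large $N$,
\begin{equation*}
\Var\big(L^\star_N(1)-L^\star_N(\tau)\big)=(1-\tau)^{2/3}\Var\big(\xi_{{\rm stat},\tilde w_1-\tilde w_\tau}\big)+\Or(1-\tau)^{1-\delta},
\end{equation*}
and then let $N\to\infty$. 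Write $M=(1-\tau)N$. Using (\ref{eqLstar})--(\ref{eqI}) together with $E_\tau=I(w_\tau)$,
\begin{equation*}
L^\star_{S_A\to E_1}-L^\star_{S_A\to E_\tau}=\max_{u\in\R}\Big\{\big[L^\star_{S_A\to I(u)}-L^\star_{S_A\to I(w_\tau)}\big]+L_{I(u)\to E_1}\Big\},
\end{equation*}
so after centering by $4(1-\tau)N$ and dividing by $2^{4/3}N^{1/3}=(1-\tau)^{-1/3}2^{4/3}M^{1/3}$, the increment $L^\star_N(1)-L^\star_N(\tau)$ equals $(1-\tau)^{1/3}$ times a maximum, over a transversal window of width of order $M^{2/3}$, of the centered \emph{first--leg} profile $u\mapsto L^\star_{S_A\to I(u)}-L^\star_{S_A\to I(w_\tau)}$ plus a point-to-point last passage time $L_{I(u)\to E_1}$ of macroscopic size $M$ living on its own KPZ scale.

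The core is the comparison of this maximum with the stationary one. If the first--leg profile were replaced by a two-sided random walk with ${\rm Exp}(1/2)$-type increments at the critical slope (which, in the $M$-rescaled window, is the zero slope up to the $\Or(M^{-1/3})$ correction encoded by $\tilde w_1$), the object inside the maximum would be a \emph{stationary} LPP, whose rescaled fluctuation converges, by~\cite{BFP09} and Brownian scaling of the limit together with the computation behind Corollary~\ref{Cor:covstat}, to $\xi_{{\rm stat},\tilde w_1-\tilde w_\tau}$. At the level of the limiting formulas this is the algebraic rearrangement
\begin{equation*}
\max_{v\in\R}\Big\{\sqrt{2}\big(B(v)-B(\tilde w_\tau)\big)+{\cal A}_2(v-\tilde w_1)-(v-\tilde w_1)^2\Big\}\stackrel{(d)}{=}\xi_{{\rm stat},\tilde w_1-\tilde w_\tau},
\end{equation*}
obtained by the substitution $v\mapsto v+\tilde w_\tau$ (which turns $B(v)-B(\tilde w_\tau)$ into a two-sided Brownian motion) and the shift-invariance in law of the Airy$_2$ process, the local Brownian coefficient $\sqrt2$ being the one common to all three fixed-time profiles so that the scaling factors $\tau^{1/3}$, $(2\tau)^{1/3}$ exactly cancel. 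To make the replacement rigorous I would use the comparison lemmas of Section~\ref{sec:stationaryComparison} to sandwich $L_{I(\cdot)\to E_1}$ with the true first--leg profile between two stationary LPPs whose boundary slopes differ from the critical one by $\Or(M^{-1/3})$, on a high-probability event on which the relevant maximizers and exit points stay inside the width-$\Or(M^{2/3})$ window. That the first--leg profile is, on such events and at this scale, a two-sided random walk up to a controllable drift, is precisely the small-scale control of the fixed-time spatial process, and it is here that the exit-point bounds — which have to be derived separately for droplet, flat and stationary, and for which the soft bounds of~\cite{BCS06,Pi17} do not suffice — are used.

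It then remains to collect the error: the curvature of the first--leg limit shape over the window, the $\Or(M^{-1/3})$ mismatch of the comparison slopes, the probability that a maximizer or exit point leaves the window, and the rate at which the rescaled variance of the stationary LPP attains $\Var(\xi_{{\rm stat},\cdot})$. Each is bounded using the one-point and increment estimates collected in the appendix; the binding contribution produces the stated $\Or(1-\tau)^{1-\delta}$, equivalently $\Or((1-\tau)/\ln(1-\tau))$, the logarithm arising from summing a dyadic family of scales. Sending $N\to\infty$ gives the formula for $\Var(\chi^\star(\tau)-\chi^\star(1))$, and the covariance statement follows from the polarization identity (\ref{covariance}) together with $\Var(\chi^\star(1))=\Var(\xi^\star(w_1))$ and $\Var(\chi^\star(\tau))=\tau^{2/3}\Var(\xi^\star(w_\tau\tau^{-2/3}))$, read off from (\ref{chistep})--(\ref{chistat}) with the one-point laws being the shifted GUE resp.\ rescaled GOE Tracy--Widom distributions and the Baik--Rains distribution. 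The main obstacle is exactly this quantitative comparison: producing exit-point bounds sharp enough for all three initial conditions, and propagating them through the comparison lemmas and the $M$-rescaling so that the final error is genuinely $\Or((1-\tau)^{1-\delta})$ and, crucially, uniform in $N$, which is what legitimizes exchanging $\tau\to1$ with $N\to\infty$.
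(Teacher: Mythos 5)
Your plan follows the paper's approach: sandwich the first-leg increments between stationary LPPs with slightly perturbed densities via the exit-point comparison lemmas of Section~\ref{sec:stationaryComparison} on a high-probability event, identify the resulting Brownian limit with $\xi_{{\rm stat},\tilde w_1-\tilde w_\tau}$ exactly as in the proof of Corollary~\ref{Cor:covstat}, collect the curvature, slope-mismatch and escape errors, and finish with the polarization identity (\ref{covariance}) -- this is precisely Lemmas~\ref{lemma4} and~\ref{lemma5} together with Lemma~\ref{lemma2}. One small misattribution worth correcting: the $\Or((1-\tau)^{1-\delta})$ error is not produced by summing a dyadic family of scales; the paper obtains it from a single direct choice of the transversal window $M$ and slope perturbation $\kappa$, both of order $(1-\tau)^{-\delta/2}$, which balances the terms in (\ref{eq3.37}), and this slowly growing $v$-window (rather than a unit-order one) is also what makes the localization contribution $Ce^{-cM^2}$ from Lemma~\ref{lemma2} vanish as $\tau\to1$.
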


\subsubsection*{Small $\tau$ behavior for droplet initial conditions}
\begin{thm}\label{thm:covstep0} For point-to-point LPP, let $w_\tau=\hat w_\tau \tau^{2/3}$. Then the covariance of the limiting height function for $\tau\to 0$ can be expressed as
 \begin{equation}
   \Cov\left(\chi^{\bullet}(\tau),\chi^{\bullet}(1)\right)=\tau^{2/3}\E({\cal A}_2(\hat w_\tau) \max_{u\in\R}\{{\cal A}_2(u)-u^2+\sqrt{2}B(u)\})+\Or(\tau^{1-\delta}).
 \end{equation}
\end{thm}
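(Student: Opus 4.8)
The plan is to reduce the statement, which a priori only concerns the limit variables $\chi^{\bullet}(\tau),\chi^{\bullet}(1)$ of~(\ref{chistep}), to a computation at the level of the prelimit LPP, so that the quantitative estimates of the Appendix (the ones also used for Theorem~\ref{thm:covGeneral}) become available. The starting observation is that, via the splitting~(\ref{eqLstar})--(\ref{eqI}),
\begin{equation}
L^{\bullet}_{S_A\to E_1}=\max_{u\in\R}\{L^{\bullet}_{S_A\to I(u)}+L_{I(u)\to E_1}\},
\end{equation}
and since $E_\tau=I(\hat w_\tau)$, the inner field $u\mapsto L^{\bullet}_{S_A\to I(u)}$ (centered by $4\tau N$ and rescaled by $2^{4/3}N^{1/3}$) is precisely the prelimit of $\chi^{\bullet}(\tau)$: in the variable $u=v\tau^{2/3}$ it equals $\tau^{1/3}(\tilde{\cal A}^{(N)}_2(v)-v^2)$, where $\tilde{\cal A}^{(N)}_2$ is the point-to-point LPP from $S_A$ to time $\tau N$ in Airy$_2$ scaling, $\tilde{\cal A}^{(N)}_2\to\tilde{\cal A}_2$ by~\cite{FO17}, and in particular $L^{\bullet}_N(\tau)=\tau^{1/3}(\tilde{\cal A}^{(N)}_2(\hat w_\tau)-\hat w_\tau^2)$. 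Crucially, $L_{I(u)\to E_1}$ uses a disjoint set of weights and is therefore independent of the inner field.

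Next I would run the localization plus Brownian-comparison argument of Theorem~\ref{thm:covGeneral}, but with the two times interchanged: here it is the long outer LPP $L_{I(u)\to E_1}$ that is seen only through its local (Brownian) behaviour. The outer point-to-point LPP $L_{I(v\tau^{2/3})\to E_1}$ depends on $v$ only through a transversal window of size $\Or(\tau^{2/3})$ in its natural coordinate; by the transversal fluctuation / exit point estimates for point-to-point LPP (\cite{CP15b,FO17} and the Appendix) the maximizer satisfies $|v|\le\tau^{-\delta'}$ for some small $\delta'>0$ with overwhelming probability, and on this event, sandwiching the increments $L_{I(v\tau^{2/3})\to E_1}-L_{I(0)\to E_1}$ between those of stationary LPP with suitably chosen densities à la Cator--Pimentel yields, after centering and rescaling,
\begin{equation}
L^{\bullet}_N(1)=R_N+\tau^{1/3}\max_{v\in\R}\{\tilde{\cal A}^{(N)}_2(v)-v^2+\sqrt{2}B^{(N)}(v)\}+\mathcal{E}_N,
\end{equation}
where $R_N$ is the rescaled $L_{I(0)\to E_1}$, $B^{(N)}$ a process converging to a standard two-sided Brownian motion, both independent of $\tilde{\cal A}^{(N)}_2$, and $\|\mathcal{E}_N\|_{L^2}=\Or(\tau^{2/3-\delta})$ uniformly in $N$. (The smooth part of the drift produced by $-(u-w_1)^2/(1-\tau)^{4/3}$ is $\Or(\tau^{2/3})$ on this window, hence negligible at order $\tau^{1/3}$, and the local variance of the outer process equals $2$ irrespective of $w_1$, which is why the final answer does not see $w_1$.)

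The covariance is then immediate. The deterministic term $\hat w_\tau^2$ drops, the covariance with $R_N$ vanishes by independence, and $\mathcal{E}_N$ contributes $\Or(\tau^{1/3}\cdot\tau^{2/3-\delta})=\Or(\tau^{1-\delta})$ by Cauchy--Schwarz together with the Tracy--Widom/Baik--Rains moment bounds, so
\begin{equation}
\Cov(L^{\bullet}_N(\tau),L^{\bullet}_N(1))=\tau^{2/3}\Cov\Bigl(\tilde{\cal A}^{(N)}_2(\hat w_\tau),\max_{v\in\R}\{\tilde{\cal A}^{(N)}_2(v)-v^2+\sqrt{2}B^{(N)}(v)\}\Bigr)+\Or(\tau^{1-\delta}).
\end{equation}
The same moment bounds give uniform integrability, so one may let $N\to\infty$ inside the covariance; since $\max_{v}\{\sqrt{2}B(v)+{\cal A}_2(v)-v^2\}\stackrel{(d)}{=}{\cal A}_{\rm stat}(0)$ by~\cite{QR13} and $\E({\cal A}_{\rm stat}(0))=0$ by~\cite{PS01,BR00}, while $B$ is independent of $\tilde{\cal A}_2$, the covariance collapses to the plain expectation $\E(\tilde{\cal A}_2(\hat w_\tau)\max_{u}\{\tilde{\cal A}_2(u)-u^2+\sqrt{2}B(u)\})$, which is the asserted expression after relabelling $\tilde{\cal A}_2$ as ${\cal A}_2$.

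I expect the main obstacle to be exactly the one in Theorem~\ref{thm:covGeneral}: carrying out the Brownian comparison with a quantitatively controlled error. To keep the final error at $\Or(\tau^{1-\delta})$ the approximation of $\chi^{\bullet}(1)$ in the second display must be accurate to $\Or(\tau^{2/3-\delta})$ in $L^2$, which forces the use of the sharp exit-point bounds of the Appendix (the soft bounds of~\cite{BCS06,Pi17} would not suffice), and one must in addition propagate enough moment control to interchange the limit $N\to\infty$ with the covariance.
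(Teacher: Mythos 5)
Your proposal is correct and takes essentially the same route as the paper's (brief) proof sketch: both replace the outer point-to-point field, which the inner short LPP only sees on a window of transversal size $\Or(\tau^{2/3})$, by an independent constant plus $\sqrt{2}B$, using the Cator--Pimentel stationary comparison with sharpened exit-point bounds to control the error at order $\tau^{2/3-\delta}$ in $L^2$, and then kill the constant by independence and pass to $\E(\cdot)$ via $\E({\cal A}_{\rm stat}(0))=0$. The paper phrases this by citing Pimentel's local Brownian limit of the Airy$_2$ process and stating that Lemmas~\ref{lemma4}--\ref{lemma5} ``can be readjusted''; your write-up simply spells out what that readjustment amounts to at the prelimit LPP level, which is where those lemmas actually live.
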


\section{The stationary LPP and its comparison lemmas}\label{sec:stationaryComparison}
As shown in~\cite{BCS06} the stationary situation can be realized in different ways. For the purpose of this paper, we will consider the following situations
\begin{itemize}
\item On $\Z_+^2$: consider the LPP from $S_A=\{(0,0)\}$ with
\begin{equation}\label{stationaryboundary}
 \omega_{i,j}= \begin{cases}
  0 & \textrm{for }i=0, j=0,\\
  {\rm Exp}(1-\rho) & \textrm{for }i\geq 1, j=0,\\
  {\rm Exp}(\rho) & \textrm{for }i=0, j\geq 1,\\
  {\rm Exp}(1) & \textrm{for }i\geq 1, j\geq 1.
 \end{cases}
\end{equation}
This is called stationary LPP with density $\rho$ since the increments of the LPP along horizontal lines are still sums of iid.\ ${\rm Exp}(1-\rho)$ random variables, as a special case of Lemma~4.2 of~\cite{BCS06}. More generically, the increments along a down-right path are sums of independent random variables, ${\rm Exp}(1-\rho)$ for horizontal steps, and $-{\rm Exp}(\rho)$ for vertical steps.
\item Consider $S_A={\cal L}=\{(i,j)\in \Z^2\, |\, i+j=0\}$ and with boundary terms
\begin{equation}\label{stationaryboundaryB}
h^0(x,-x)=\begin{cases}
  \sum_{k=1}^x (X_k-Y_k),& \textrm{for }x\geq 1,\\
  0, & \textrm{for }x=0,\\
  -\sum_{k=x+1}^0 (X_k-Y_k),& \textrm{for }x\leq -1,
 \end{cases}
\end{equation}
where $\{X_k\}_{k\in\Z}$ and $\{Y_k\}_{k\in\Z}$ are independent random variables with \mbox{$X_k\sim {\rm Exp}(1-\rho)$} and $Y_k\sim {\rm Exp}(\rho)$. Then by Lemma~4.2 of~\cite{BCS06} the increments of the LPP in this model are as in the first case.
\end{itemize}
We will call a stationary LPP model either of this two settings, depending on the cases. When we consider the point-to-point problem, we will refer to the stationary case as the first setting, while, when considering the other initial conditions, the stationary LPP will be the second setting.

To prove Theorem~\ref{thm:covGeneral} we are going to use a comparison with the stationary model of density slightly higher or lower than $1/2$. The comparison idea was first used in~\cite{CP15b} and then generalized in~\cite{Pi17}, with applications in~\cite{Pi17b,FGN17,FO17,N17}. For that purpose, we need to introduce the notion of exit point, which is the location where the maximizer of the LPP exits its boundary terms. Let us define it for both stationary settings.
\begin{defin}\label{exitpoint}
\begin{itemize}
\item The exit point for the stationary LPP to $(m,n)$ with boundary \eqref{stationaryboundary} is the last point on the $x$-axis or the $y$-axis of the maximizer ending at $(m,n)$. We introduce the random variable $Z^{\rho}(m,n)\in\mathbb{Z}$ such that, if $Z^{\rho}(m,n)>0$, then the exit point is $(Z^{\rho}(m,n),0)$, and if $Z^{\rho}(m,n)<0$, then the exit point is $(0,-Z^{\rho}(m,n))$.
\item The exit point for the stationary LPP to $(m,n)$ with boundary \eqref{stationaryboundaryB} is the starting point of the maximizer ending at $(m,n)$. We use the notation $\widetilde Z^{\rho}(m,n)\in\mathbb{Z}$ such that the exit point is $(\widetilde Z^{\rho}(m,n),-\widetilde Z^{\rho}(m,n))$.
\item The exit point for the LPP from ${\cal L}$ with initial condition $h^0$ is the starting point of the maximizer ending at $(m,n)$. We use the notation $ Z_{h^0}(m,n)\in\mathbb{Z}$ such that the exit point is $(Z_{h^0}(m,n),-Z_{h^0}(m,n))$. For the random initial condition with parameter $\sigma$, we denote $Z_{h^0}=Z_\sigma$, and for flat initial condition $Z_{h^0}=Z^\diagdown$.
\end{itemize}
\end{defin}

Now we state the two comparison lemmas which we are going to use in the proof of Theorem~\ref{thm:covGeneral}.
\begin{lem}\label{LemmaIncrementsBounds} Denote by $L^\rho$ the LPP~\eqref{stationaryboundary} and $L^{\bullet}$ the LPP in the droplet case.
 Let $0\le m_1\le m_2$ and $n_1\geq n_2\geq 0$. Then if $Z^{\rho}(m_1,n_1)\geq 0$, it holds
 \begin{equation}
  L^\bullet(m_2,n_2)-L^\bullet(m_1,n_1)\leq L^{\rho}(m_2,n_2)-L^{\rho}(m_1,n_1),
 \end{equation}
while, if $Z^{\rho}(m_2,n_2)\leq 0$, then we have
 \begin{equation}
  L^\bullet(m_2,n_2)-L^\bullet(m_1,n_1)\geq L^{\rho}(m_2,n_2)-L^{\rho}(m_1,n_1).
 \end{equation}
\end{lem}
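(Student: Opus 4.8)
The plan is to exploit the superadditivity/monotonicity structure of last passage times together with the comparison of $\omega$-weights. Recall that for the stationary LPP with boundary \eqref{stationaryboundary}, the maximizer to $(m_1,n_1)$ either exits along the $x$-axis (if $Z^\rho(m_1,n_1)>0$) or along the $y$-axis (if $Z^\rho(m_1,n_1)<0$). The hypothesis $Z^\rho(m_1,n_1)\geq 0$ says the maximizer to $(m_1,n_1)$ uses only horizontal boundary weights, i.e.\ weights of law ${\rm Exp}(1-\rho)$ which are \emph{stochastically larger} than the bulk ${\rm Exp}(1)$ weights used in the droplet model. The idea is that the droplet increment $L^\bullet(m_2,n_2)-L^\bullet(m_1,n_1)$ is controlled by a path from some point near $(m_1,n_1)$ to $(m_2,n_2)$ lying in the bulk, and the same path contributes to the stationary model with weights that dominate the droplet ones, while the subtracted term $L^\rho(m_1,n_1)$ is not too large precisely because the stationary maximizer uses the (large) boundary weights rather than crossing into this region.

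First I would set up a coupling: realize both $L^\bullet$ and $L^\rho$ on the same probability space, using the same bulk weights $\omega_{i,j}$ for $i,j\geq 1$, and independent boundary weights on the axes for $L^\rho$. Consider the maximizer $\pi^\bullet$ for $L^\bullet(m_2,n_2)$ and the maximizer $\pi^\rho$ for $L^\rho(m_1,n_1)$; since $Z^\rho(m_1,n_1)\geq 0$, $\pi^\rho$ travels along the $x$-axis up to $(Z^\rho(m_1,n_1),0)$ and then through the bulk to $(m_1,n_1)$. Because $n_1\geq n_2\geq 0$ and $m_1\leq m_2$, the relative position of $(m_1,n_1)$ and $(m_2,n_2)$ is a down-right displacement, so one can decompose $\pi^\bullet$ at the last point where it passes through the column $\{x=m_1\}$ or the row $\{y=n_1\}$, obtaining a concatenation of an up-right path from $(0,0)$ (or from wherever the droplet model starts) to some point $P$ on the boundary of the rectangle $[0,m_1]\times[0,n_1]$ together with an up-right path from $P$ to $(m_2,n_2)$. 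The key inequality to establish is
\[
 L^\bullet(m_2,n_2)-L^\bullet(P)\leq L^\rho(m_2,n_2)-L^\rho(P),
\]
after which one concludes via $L^\bullet(P)\leq L^\bullet(m_1,n_1)$ combined with $L^\rho(P)\geq L^\rho(m_1,n_1)$ — but one has to be careful, so the cleaner route is the standard one: use the identity $L^\rho(m_2,n_2)\geq L^\rho(m_1,n_1)+ \big(L^\rho(m_2,n_2)-L^\rho(m_1,n_1)\text{ along the geodesic tail}\big)$ together with the deterministic fact that the increment of a last passage time over a fixed down-right displacement is monotone in the weights.

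Concretely, the cleanest implementation uses the \emph{queueing/exit-point} characterization: on the event $\{Z^\rho(m_1,n_1)\geq 0\}$, the stationary last passage value satisfies $L^\rho(m_2,n_2)-L^\rho(m_1,n_1)= L^{\omega'}_{(m_1,n_1)\to(m_2,n_2)}$ in distribution, where $\omega'$ agrees with $\omega$ in the bulk and is no smaller than $\omega$ on the relevant part of the boundary, whence the increment dominates the corresponding droplet increment $L^\bullet(m_2,n_2)-L^\bullet(m_1,n_1)$ by the monotonicity of $L$ in its weights and by the fact that the droplet geodesic from near $(m_1,n_1)$ stays in the bulk. The second inequality (under $Z^\rho(m_2,n_2)\leq 0$) is entirely symmetric under exchanging the roles of the two endpoints and reflecting across the diagonal: now the stationary maximizer to $(m_2,n_2)$ exits along the $y$-axis, using ${\rm Exp}(\rho)$ weights which are stochastically \emph{smaller} than bulk, and one runs the same argument with reversed inequalities.

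The main obstacle I anticipate is making the path-surgery step fully rigorous: one must argue that on the event in question the stationary geodesic and the droplet geodesic can be concatenated/compared along a common down-right boundary of the rectangle $[0,m_1]\times[0,n_1]$ without losing weight, and that the exit-point condition really does force the stationary geodesic to avoid re-entering that rectangle in a way that would spoil the inequality. This is exactly the delicate point where the sign of $Z^\rho$ enters: it is what guarantees the geodesic decomposition is compatible with the comparison of boundary weights. Everything else — monotonicity of $L$ in the $\omega$'s, the down-right ordering of the two endpoints, the reflection symmetry giving the second bound — is routine.
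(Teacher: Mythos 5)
The key mechanism behind this lemma is a \emph{crossing argument}: one shows that the stationary geodesic $\pi^\rho$ to $(m_1,n_1)$ (which, by the hypothesis $Z^\rho(m_1,n_1)\geq 0$, travels along the $x$-axis and then enters the bulk) and a droplet geodesic $\pi^\bullet$ to $(m_2,n_2)$ must share a common point $c$ with both coordinates $\geq 1$. This is a discrete intermediate-value argument: parametrizing both paths by the antidiagonal coordinate $s=x+y$, at $s=1$ the path $\pi^\rho$ is at $(1,0)$ while one may take $\pi^\bullet$ to start upward through $(0,1)$ (the axis weights vanish for the droplet, so this choice of geodesic is free), whereas at $s=\min(m_1+n_1,m_2+n_2)$ the down-right ordering $m_1\leq m_2$, $n_1\geq n_2$ forces the $x$-coordinate of $\pi^\bullet$ to be $\geq$ that of $\pi^\rho$. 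Once $c$ is found, two exact geodesic decompositions $L^\rho(m_1,n_1)=L^\rho(c)+L_{c\to(m_1,n_1)}$ and $L^\bullet(m_2,n_2)=L^\bullet(c)+L_{c\to(m_2,n_2)}$, combined with the two superadditivity inequalities $L^\rho(m_2,n_2)\geq L^\rho(c)+L_{c\to(m_2,n_2)}$ and $L^\bullet(m_1,n_1)\geq L^\bullet(c)+L_{c\to(m_1,n_1)}$, immediately give the claimed inequality; the second statement is the mirror argument with the $y$-axis and $\rho\leftrightarrow 1-\rho$. This is the route taken in Lemma~1 of~\cite{CP15b} and its down-right extension in Lemma~3.5 of~\cite{FGN17}, to which the paper defers.

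Your proposal correctly sets up the coupling, picks out the right pair of geodesics, and identifies that the exit-point hypothesis is what gives the stationary geodesic the right shape; however you never establish the common point $c$, which is the heart of the proof, and you flag this yourself as ``the main obstacle.'' The alternative ``cleanest implementation'' you offer in its place does not work: $(m_1,n_1)$ and $(m_2,n_2)$ lie on a down-right (not up-right) path, so there simply is no last passage time $L^{\omega'}_{(m_1,n_1)\to(m_2,n_2)}$ to speak of, and invoking an equality ``in distribution'' is in any case the wrong register --- the lemma is a deterministic, pathwise inequality on the coupled probability space, which a distributional identity cannot deliver. The attempted deduction from the rectangle-boundary decomposition (`` $L^\bullet(P)\leq L^\bullet(m_1,n_1)$ combined with $L^\rho(P)\geq L^\rho(m_1,n_1)$'') also fails for the reason you half-notice: with $P=(m_1,k)$, $k\leq n_1$, monotonicity in the endpoint gives $L^\rho(P)\leq L^\rho(m_1,n_1)$, the opposite direction. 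The fix is to locate $c$ on \emph{both} geodesics, as above, rather than on only one.
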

\begin{lem}\label{LemmaIncrementsBoundsB} Denote by $L^\rho$ the LPP~\eqref{stationaryboundaryB} and $L^\star$ be the LPP from $\cal L$ with boundary term $h^0$.
 Let $0\le m_1\le m_2$ and $n_1\geq n_2\geq 0$. Then if $\widetilde Z^{\rho}(m_1,n_1)\geq \widetilde Z_{h^0}(m_2,n_2)$, it holds
 \begin{equation}
  L^\star(m_2,n_2)-L^\star(m_1,n_1)\leq L^{\rho}(m_2,n_2)-L^{\rho}(m_1,n_1),
 \end{equation}
while, if $\widetilde Z^{\rho}(m_2,n_2)\leq \widetilde Z_{h^0}(m_1,n_1)$, then we have
 \begin{equation}
  L^\star(m_2,n_2)-L^\star(m_1,n_1)\geq L^{\rho}(m_2,n_2)-L^{\rho}(m_1,n_1).
 \end{equation}
\end{lem}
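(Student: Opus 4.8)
The plan is to prove both inequalities by the crossing-of-maximizers argument of Cator and Pimentel~\cite{CP15b}, adapted to the line geometry exactly as one proves Lemma~\ref{LemmaIncrementsBounds} for the droplet case. I would realize $L^\star$ and $L^\rho$ on a common probability space with identical bulk weights $\{\omega_{i,j}\}_{i,j\geq 1}$, so that the two models differ only in the weights carried along ${\cal L}$ (the function $h^0$ for $L^\star$, the two-sided walk of~\eqref{stationaryboundaryB} for $L^\rho$). For an up-right path issued from ${\cal L}$ I parametrize it by the antidiagonal index $t=i+j$ and write $x(t)$ for its first coordinate.

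For the first inequality, assume $\widetilde Z^{\rho}(m_1,n_1)\geq Z_{h^0}(m_2,n_2)$ and let $\pi$ be a maximizer of $L^\rho$ to $(m_1,n_1)$ and $\gamma$ a maximizer of $L^\star$ to $(m_2,n_2)$; by definition of the exit points $x_\pi(0)=\widetilde Z^{\rho}(m_1,n_1)$ and $x_\gamma(0)=Z_{h^0}(m_2,n_2)\le x_\pi(0)$. The first step is to show that $\pi$ and $\gamma$ share a vertex. Setting $T=\min\{m_1+n_1,m_2+n_2\}$, both paths are defined on $\{0,\dots,T\}$, and using $m_1\le m_2$, $n_1\ge n_2$ together with the monotonicity of up-right paths one checks — distinguishing the cases $T=m_1+n_1$ and $T=m_2+n_2$ — that $x_\gamma(T)\ge x_\pi(T)$. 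Since $t\mapsto x_\pi(t)-x_\gamma(t)$ is integer-valued with increments in $\{-1,0,1\}$ and changes sign on $\{0,\dots,T\}$, it must vanish at some $t^\ast$; thus $\pi$ and $\gamma$ pass through a common vertex $P$. The second step is the swap: decompose $\pi=\pi^{(1)}\cup\pi^{(2)}$ and $\gamma=\gamma^{(1)}\cup\gamma^{(2)}$ at $P$ (the portions from ${\cal L}$ to $P$, and from $P$ to the respective targets). Then $\pi^{(1)}\cup\gamma^{(2)}$ is an up-right path from ${\cal L}$ to $(m_2,n_2)$ whose weight — the stationary boundary term of $L^\rho$ at its start plus the bulk weights along the way, the weight of $P$ counted once — equals the weight of $\pi^{(1)}$ plus that of $\gamma^{(2)}$ and is therefore $\le L^\rho(m_2,n_2)$; symmetrically $\gamma^{(1)}\cup\pi^{(2)}$ has weight equal to that of $\gamma^{(1)}$ plus that of $\pi^{(2)}$, at most $L^\star(m_1,n_1)$. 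Adding the two bounds and using that the weight of $\pi$ (resp.\ $\gamma$) equals that of $\pi^{(1)}$ plus that of $\pi^{(2)}$ (resp.\ $\gamma^{(1)}$ plus $\gamma^{(2)}$), I would obtain $L^\rho(m_1,n_1)+L^\star(m_2,n_2)\le L^\rho(m_2,n_2)+L^\star(m_1,n_1)$, which is exactly the first inequality.

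The second inequality follows from the mirror argument: assuming $\widetilde Z^{\rho}(m_2,n_2)\le Z_{h^0}(m_1,n_1)$, I would take $\pi$ a maximizer of $L^\rho$ to $(m_2,n_2)$ and $\gamma$ a maximizer of $L^\star$ to $(m_1,n_1)$; now $x_\gamma(0)\ge x_\pi(0)$ while $x_\gamma(T)\le x_\pi(T)$, so again $\pi$ and $\gamma$ meet at a vertex $P$, and the same swap at $P$ gives $L^\rho(m_2,n_2)+L^\star(m_1,n_1)\le L^\rho(m_1,n_1)+L^\star(m_2,n_2)$, which rearranges to the stated bound.

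The main obstacle is the crossing claim in the first step: one must carefully verify $x_\gamma(T)\ge x_\pi(T)$ (respectively $\le$) in the case where $(m_1,n_1)$ and $(m_2,n_2)$ lie on distinct antidiagonals, and handle the degenerate configurations in which $P$ coincides with an endpoint of $\pi$ or $\gamma$ — these still yield legitimate decompositions, with one of the four subpaths possibly empty and of zero weight. A minor further point is that when $h^0$ is deterministic (the flat case) the maximizer of $L^\star$ is unique only up to its starting point on ${\cal L}$; one then fixes $\gamma$ so as to realize the exit point appearing in the hypothesis, which is harmless because only the portion of $\gamma$ after $P$ — a bulk LPP path — enters the estimates.
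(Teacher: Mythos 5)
Your proof is correct and follows the standard Cator--Pimentel crossing argument. The paper itself does not prove this lemma but refers to Lemma~2.1 of~\cite{Pi17} and Lemma~3.5 of~\cite{FGN17}, which use exactly the crossing-and-swap strategy you describe (coupling on common bulk weights, verifying $x_\pi(0)-x_\gamma(0)$ and $x_\pi(T)-x_\gamma(T)$ have opposite signs so the maximizers share a vertex $P$, then swapping the two subpaths at $P$ and comparing against the respective passage times).
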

For $n_1=n_2$, Lemma~\ref{LemmaIncrementsBounds} is in Lemma~1 of~\cite{CP15b}, while Lemma~\ref{LemmaIncrementsBoundsB} is Lemma~2.1 of~\cite{Pi17}. The generalization to points on a down-right path is straightforward. It was made for instance in the LPP setting~\eqref{stationaryboundaryB} in Lemma~3.5 of~\cite{FGN17}.

\section{Convergence of the covariance}\label{sec:ConvergenceCovariance}

\subsection{Preliminaries and notations}
A law of large number for point-to-point LPP was proven in~\cite{R81}, namely, for large $(m,n)$, $L_{(0,0)\to (m,n)}\approx (\sqrt{m}+\sqrt{n})^2$. From this we can estimate
\begin{equation}\label{eq3}
\begin{aligned}
L^\star_{(0,0)\to E_{\tau}}&\approx 4\tau N-w_\tau^2 \tau^{-1} 2^{4/3}N^{1/3},\\
L^\star_{(0,0)\to I(u)}&\approx 4\tau N -u^2 \tau^{-1} 2^{4/3}(\tau N)^{1/3},\\
L_{I(u)\to E_1}&\approx 4(1-\tau)N -\frac{(u-w_1)^2}{1-\tau}2^{4/3}N^{1/3}.
\end{aligned}
\end{equation}
Denote the rescaled LPP by
\begin{equation}
L^{\star}_N(u,\tau):=\frac{L^\star_{S_A\to I(u)}-4(1-\tau) N}{2^{4/3} N^{1/3}},
\end{equation}
with $I(u)=(\tau N,\tau N)+u(2N)^{2/3}(1,-1)$ and
\begin{equation}
L^{\rm pp}_N(u,\tau):=\frac{L_{I(u)\to E_1}-4(1-\tau) N}{2^{4/3}N^{1/3}},
\end{equation}
where we recall that $E_1=(N,N)+w_1(2N)^{2/3}(1,-1)$. Then, (\ref{eqL}) and (\ref{eqLstar}) become
\begin{equation}
\begin{aligned}
L^\star_N(\tau)&\equiv L^{\star}_N(w_\tau,\tau),\\
L^\star_N(1)&\equiv L^{\star}_N(w_1,1)=\max_{u\in\R} \{L^{\star}_N(u,\tau)+L^{\rm pp}_N(u,\tau)\}.
\end{aligned}
\end{equation}
Furthermore,
\begin{equation}
\lim_{N\to\infty}L^{\rm pp}_N(u,\tau) = (1-\tau)^{1/3}\left[{\cal A}_2\bigl(\tfrac{u-w_1}{(1-\tau)^{2/3}}\bigr) - \tfrac{(u-w_1)^2}{(1-\tau)^{4/3}}\right],
\end{equation}
and
\begin{equation}
\lim_{N\to\infty} L^{\star}_N(u,\tau) = \tau^{1/3} {\cal A}^\star\bigl(\tfrac{u}{\tau^{2/3}}\bigr), \quad \star\in\{\bullet,\diagdown,\mathcal{B}\},
\end{equation}
where
\begin{equation}
{\cal A}^\bullet(u) = \tilde {\cal A}_2(u)-u^2, \quad {\cal A}^\diagdown(u) = 2^{1/3} {\cal A}_1(u 2^{-2/3}), \quad {\cal A}^{\mathcal{B}}(u) = {\cal A}_{\rm stat}(u).
\end{equation}

\subsection{Localization of the maximizer at time $\tau N$}
The maximizer of the process $L^{\star}_N(u,\tau)+L^{\rm pp}_N(u,\tau)$ is confined in the region with $|u|\leq M$ if the following event holds
\begin{equation}\label{omegaM}
\Omega^G_{M}=\Big\{\max_{|u|\leq M}\{L^{\star}_N(u,\tau)+L^{\rm pp}_N(u,\tau)\}>\max_{|u|> M}\{L^{\star}_N(u,\tau)+L^{\rm pp}_N(u,\tau)\}\Big\}.
\end{equation}
Thus we need to estimate $\Pb(\Omega^G_M)$. For any choice of $s\in\R$ we can write
\begin{equation}\label{goodevent}
\begin{aligned}
 \Pb(\Omega^G_M)& \geq \Pb\left(\max_{|u|\leq M}\{L^{\star}_N(u,\tau)+L^{\rm pp}_N(u,\tau)\}>s>\max_{|u|> M}\{L^{\star}_N(u,\tau)+L^{\rm pp}_N(u,\tau)\}\right)\\
 & \geq  1-\Pb(G_M)-\Pb(B_M),
 \end{aligned}
\end{equation}
where we defined
\begin{equation}
\begin{aligned}
G_M&=\{\max_{|u|\leq M}\{L^{\star}_N(u,\tau)+L^{\rm pp}_N(u,\tau)\}\leq s\},\\
B_M&=\{\max_{|u|> M}\{L^{\star}_N(u,\tau)+L^{\rm pp}_N(u,\tau)\}>s\}.
\end{aligned}
\end{equation}

The right side of \eqref{goodevent} is estimated using the following lemma.
\begin{lem} \label{lemma1}Let $s=-M^2 \tilde c$ with $\tilde c=1/(16(1-\tau))$. Then, there exists a finite $M_0$ such that for any $M\geq M_0$
\begin{equation}\label{eq15.1}
 \begin{aligned}
  \Pb\left(G_M\right)\leq Ce^{-cM^2}\\
  \Pb\left(B_M\right)\leq Ce^{-cM^2}
 \end{aligned}
\end{equation}
for some constants $C,c>0$ uniform in $N$.
\end{lem}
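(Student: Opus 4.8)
The plan is to bound $\Pb(G_M)$ and $\Pb(B_M)$ separately, using in each case a combination of the one-point upper/lower tail estimates for the prelimit LPP (collected in the appendix) and a union bound over a discrete mesh of $u$-values, controlled by the monotonicity/superadditivity properties of LPP.

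\medskip

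\textbf{Step 1: bounding $\Pb(G_M)$.} Here we only need a \emph{lower} bound on the max over $|u|\le M$, so it suffices to pick a single convenient value of $u$, say $u=0$ (or $u=w_\tau$ if that is more natural for the geometry), and write
\begin{equation}
\Pb(G_M)\le \Pb\bigl(L^{\star}_N(0,\tau)+L^{\rm pp}_N(0,\tau)\le s\bigr).
\end{equation}
Since $L^{\star}_N(0,\tau)$ converges to $\tau^{1/3}{\cal A}^\star(0)$ and $L^{\rm pp}_N(0,\tau)$ converges to $(1-\tau)^{1/3}[{\cal A}_2(-w_1(1-\tau)^{-2/3}) - w_1^2(1-\tau)^{-2/3}\cdot(1-\tau)^{-2/3}]$, and both have uniformly (in $N$) controlled lower tails by the appendix estimates (Gaussian-type lower tail for point-to-point, and the analogous bounds for flat and stationary obtained e.g. via the BCS-type comparison), the sum is $O(1)$ with a lower tail decaying at least like $e^{-c t^{3/2}}$ in $-t$. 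Taking $s=-M^2\tilde c$ with $\tilde c=1/(16(1-\tau))$, the bound $Ce^{-cM^3}\le Ce^{-cM^2}$ follows for $M\ge M_0$. The only care needed is that the tail constants be uniform in $N$, which is exactly what the appendix lemmas provide.

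\medskip

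\textbf{Step 2: bounding $\Pb(B_M)$.} This is the main obstacle. We must show that the max over the \emph{unbounded} region $|u|>M$ stays below $s=-M^2\tilde c$ with exponentially small failure probability. The idea is the standard one: the parabolic decay $-u^2/(1-\tau)$ coming from $L^{\rm pp}_N$ beats the fluctuations of both $L^{\star}_N(u,\tau)$ and the local fluctuations of $L^{\rm pp}_N(u,\tau)$. Concretely, I would dyadically decompose $\{|u|>M\}=\bigcup_{k\ge 0}\{2^kM<|u|\le 2^{k+1}M\}$. On each block, I replace the continuous max by a max over a mesh of $O(2^kM\cdot 2^{2/3}N^{2/3}$-spaced lattice points$)$ — actually over $O(2^kM)$ points spaced by unit $u$-increments suffices after using that between consecutive mesh points the increments of $L^{\star}_N+L^{\rm pp}_N$ are controlled. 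For each fixed $u$ in the $k$-th block, use:
\begin{itemize}
\item[(i)] the upper-tail estimate for $L^{\star}_N(u,\tau)$: its expectation behaves like $-\const\cdot u^2$ for the droplet case and is $O(1)$ uniformly for flat, while for stationary it grows at most linearly in $|u|$ — in all cases it is dominated by the parabola from (ii) for $|u|$ large; the upper tail is again of order $e^{-c(\cdot)^{3/2}}$, uniformly in $N$ (appendix);
\item[(ii)] the upper-tail estimate for $L^{\rm pp}_N(u,\tau)$, whose typical value is $\approx -(u-w_1)^2/(1-\tau)\le -\tfrac12 u^2/(1-\tau)$ for $|u|$ large, again with uniform-in-$N$ Gaussian-type upper tail on the fluctuation around this value.
\end{itemize}
Combining (i)+(ii), for $u$ in the $k$-th block one gets $\Pb(L^{\star}_N(u,\tau)+L^{\rm pp}_N(u,\tau)>s)\le C e^{-c(2^kM)^2}$ once $\tilde c=1/(16(1-\tau))$ is small enough relative to the parabola's coefficient (the factor $1/16$ is there precisely to leave room after splitting the parabola between beating $s$ and absorbing the fluctuations). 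Summing the union bound over the $O(2^kM)$ mesh points in the block gives $C\, 2^kM\, e^{-c(2^kM)^2}$, and summing over $k\ge 0$ yields $Ce^{-cM^2}$ for $M\ge M_0$, as claimed.

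\medskip

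\textbf{Remarks on the mesh step.} The one technical point inside Step 2 is passing from the lattice LPP to a supremum over a continuum of $u$: I would handle this exactly as in \cite{CP15b,FO17}, bounding $\max$ over an interval of length $1$ (in $u$-units) by the value at the left endpoint plus a modulus-of-continuity term for $L^{\star}_N$ and for $L^{\rm pp}_N$, each of which has uniform-in-$N$ tail bounds (the increment bounds are among those promised in the appendix). Since these moduli are of lower order than the $(2^kM)^2$ exponent, they do not affect the final rate. The same argument, with the roles of the three initial conditions entering only through the choice of reference point and the known form of $\E(L^{\star}_N(u,\tau))$, covers $\star\in\{\bullet,\diagdown,\mathcal{B}\}$ uniformly.
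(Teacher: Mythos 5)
Your Step~1 matches the paper essentially verbatim: drop to $u=0$, split the sum, apply the one-point lower-tail estimates (the paper additionally replaces $L^\star_N$ by $L^\bullet_N$ using $L^\star_N\geq L^\bullet_N$, so that a single appendix bound covers all initial conditions). No issue there.

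Step~2 has the right high-level shape (the parabolic decay from $L^{\rm pp}_N$ beats all fluctuations; split off the parabola, then do a dyadic/union bound in $u$), but there is a genuine gap in the modulus-of-continuity step. You assert that ``the increment bounds are among those promised in the appendix,'' and this is precisely what is \emph{not} available for the flat case (and hence for droplet, which the paper reduces to flat). The appendix gives one-point tails for $L^\diagdown_N$ and $L^\bullet_N$, increment/maximal inequalities only for the \emph{stationary} LPP (Lemma~\ref{lemmaIncrementsStat}), and a direct Fredholm-determinant bound for $\max_{|u|>M}L^{\rm pp}_N(u,\tau)$ (Proposition~\ref{PropBoundhalfflat}). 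There is no off-the-shelf bound on $\max_{u\in[a,b]}\{L^\diagdown_N(u,\tau)-L^\diagdown_N(a,\tau)\}$; the flat LPP increments in $u$ are not a sum of independent terms and one cannot invoke Doob directly. The paper's entire Lemma~\ref{lemma6} is devoted to this: it introduces the exit-point event $\Omega_{N,\kappa}$, uses the comparison Lemma~\ref{LemmaIncrementsBoundsB} to sandwich the flat increments between stationary increments \emph{on that event}, controls $\Pb(\Omega_{N,\kappa}^c)$ via Lemma~\ref{lemma7}, and only then applies Doob's submartingale inequality to the stationary process. Your citation of \cite{CP15b,FO17} points in the right direction (that is where the comparison idea comes from), but the proposal never actually deploys the exit-point/stationary-comparison mechanism, and without it the per-block bound for the $L^\diagdown_N$ part does not follow from the stated ingredients. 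A secondary, less serious deviation: you propose to handle $\max_{|u|>M}\{L^{\rm pp}_N(u,\tau)+u^2/(2(1-\tau))\}$ by a mesh plus one-point tails, whereas the paper bounds this term in one stroke via the half-flat kernel estimate of Proposition~\ref{PropBoundhalfflat}; your route would require its own continuity control for $L^{\rm pp}_N$, though this one can indeed be supplied by comparison with the point-to-point appendix bounds.
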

As a direct consequence we have the following localization result.
\begin{cor}\label{cor1}For any $M\geq M_0$,
\begin{equation}
 \Pb\left(\mbox{the maximizer of }L^{\star}_N(u,\tau)+L^{\rm pp}_N(u,\tau) \mbox{ passes by } I(u) \mbox{ with } |u|>M\right)\leq 2Ce^{-cM^2}
\end{equation}
uniformly in $N$.
\end{cor}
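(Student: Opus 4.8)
Corollary~\ref{cor1} follows in one line from Lemma~\ref{lemma1}: by the chain of inequalities in \eqref{goodevent} we have $\Pb(\Omega^G_M)\ge 1-\Pb(G_M)-\Pb(B_M)\ge 1-2Ce^{-cM^2}$, and on the event $\Omega^G_M$ the (a.s.\ unique) maximizer of $u\mapsto L^\star_N(u,\tau)+L^{\rm pp}_N(u,\tau)$ is forced into $\{|u|\le M\}$; hence the event in the statement is contained in $(\Omega^G_M)^c$ and has probability at most $2Ce^{-cM^2}$. So the real content is Lemma~\ref{lemma1}, which I would prove as follows. The common strategy is, on each of the two events, to produce a \emph{deterministic} gap of order $M^2$ between the threshold $s=-\tilde cM^2$ and the ``typical'' size of the relevant maximum, and then to turn that gap into a bound $Ce^{-cM^2}$ by invoking tail estimates uniform in $N$. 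Taking $\tilde c=\tfrac1{16(1-\tau)}$ — rather than the $\tfrac1{1-\tau}$ suggested by the limiting parabola in $L^{\rm pp}_N$ — is precisely what leaves enough slack for the losses below.

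\emph{The event $G_M$.} For $M\ge M_0$ we have $|w_1|\le M$, so the maximum in $G_M$ dominates the single value at $u=w_1$, giving $\Pb(G_M)\le \Pb(L^\star_N(w_1,\tau)\le s/2)+\Pb(L^{\rm pp}_N(w_1,\tau)\le s/2)$. Now $u=w_1$ is the parabola-free direction for the second term (the points $I(w_1)$ and $E_1$ sit on a common characteristic), and both terms are one-point marginals whose limits are Tracy--Widom GUE/GOE or Baik--Rains laws, all of which have (super-)exponential lower tails; the corresponding prelimit bounds $\Pb(\,\cdot\le -r)\le Ce^{-cr}$, uniform in $N$, are among the one-point estimates collected in the appendix (they combine with the law of large numbers \eqref{eq3}). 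Since $|s|/2=\tilde cM^2/2\to\infty$, this already yields $\Pb(G_M)\le Ce^{-cM^2}$.

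\emph{The event $B_M$.} Split $\{|u|>M\}=\bigcup_{k\ge0}B_k$, $B_k=\{2^kM<|u|\le 2^{k+1}M\}$. Write $L^{\rm pp}_N(u,\tau)=\widetilde L^{\rm pp}_N(u,\tau)-g_N(u)$, where $g_N(u)\ge0$ is the deterministic parabolic drift in the law of large numbers for $L_{I(u)\to E_1}$ (the contribution simply discarded for those $u$ with no up-right path $I(u)\to E_1$). From \eqref{eq3} (and its behaviour towards the edge) one gets $\inf_{u\in B_k}g_N(u)\ge c_1\,4^kM^2/(1-\tau)$ for $M\ge M_0$ and $N$ large, with a constant $c_1>\tfrac1{16}$. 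Since $\max_{u\in B_k}\{L^\star_N(u,\tau)+L^{\rm pp}_N(u,\tau)\}\le \max_{u\in B_k}L^\star_N(u,\tau)+\max_{u\in B_k}\widetilde L^{\rm pp}_N(u,\tau)-c_1\,4^kM^2/(1-\tau)$ and $\tilde c<c_1/(1-\tau)$,
\[
\Pb(B_M)\le \sum_{k\ge0}\left[\Pb\!\left(\max_{u\in B_k}L^\star_N(u,\tau)>\frac{r_k}{2}\right)+\Pb\!\left(\max_{u\in B_k}\widetilde L^{\rm pp}_N(u,\tau)>\frac{r_k}{2}\right)\right],\qquad r_k\ge \frac{c_2\,4^kM^2}{1-\tau}>0,
\]
with $c_2=c_1-\tfrac1{16}$. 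Each term bounds the maximum, over a window of length $|B_k|\asymp 2^kM$, of a rescaled LPP process whose limit is ${\cal A}_2$, ${\cal A}_1$, ${\cal A}_{\rm stat}$, or ${\cal A}_2$ minus a parabola: covering the window by $O(2^kM)$ subintervals of unit length, applying the one-point upper tails $Ce^{-cr^{3/2}}$ at a point of each subinterval together with a modulus-of-continuity estimate (uniform in $N$) to absorb the oscillation, one obtains $\Pb(\max_{B_k}(\cdots)>r)\le C\,2^kM\,e^{-cr^{3/2}}$ in the regime $r\ge r_k/2$ relevant here. Summing over $k$, the $k=0$ term dominates and $\Pb(B_M)\le CM\sum_{k\ge0}2^k e^{-c'8^kM^3/(1-\tau)^{3/2}}\le C''e^{-c''M^2}$ for $M\ge M_0$.

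The step I expect to be the genuine obstacle is the uniform-in-$N$ control invoked in the last two displays: upper-tail and modulus-of-continuity bounds for the prelimit processes $u\mapsto L^\star_N(u,\tau)$ and $u\mapsto L^{\rm pp}_N(u,\tau)$ along the antidiagonal, on windows that grow with $N$. These are not accessible through the naive monotonicity of LPP in a single endpoint: shifting $I(u)$ by $(2N)^{2/3}$ in one coordinate changes the last passage value by order $N^{2/3}$, i.e.\ by a \emph{diverging} amount after the $N^{1/3}$ rescaling, so a crude union bound over all lattice points on the antidiagonal fails. One must instead use the tightness machinery for these processes (in the spirit of~\cite{CP15b,FO17}) and, for the stationary initial data, the comparison with off-critical stationary LPP together with the exit-point estimates recalled in Section~\ref{sec:stationaryComparison} and the appendix (these also make the ${\cal A}_{\rm stat}$-type maximum, which grows sublinearly with the window, controllable far in the tail). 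A secondary but necessary point is verifying that the effective parabola constant $c_1$ genuinely exceeds $\tfrac1{16}$ uniformly in $N$ and over the relevant range of $u$, i.e.\ a uniform version of \eqref{eq3}.
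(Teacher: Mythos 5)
Your first paragraph is exactly the paper's proof of Corollary~\ref{cor1}: on $\Omega^G_M$ the maximizer is confined to $\{|u|\le M\}$, so the complement bound $\Pb((\Omega^G_M)^c)\le 2Ce^{-cM^2}$ from Lemma~\ref{lemma1} via \eqref{goodevent} gives the result. That is the whole content of the corollary, and you have it.

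The remainder of your write-up is a sketch of Lemma~\ref{lemma1}, which is not required for the corollary but is worth comparing. Your strategy is broadly aligned with the paper's, with two organizational differences. For $G_M$ the paper drops to $u=0$ and then uses $L^\star_N\ge L^\bullet_N$ to reduce to the droplet lower-tail estimate of Proposition~\ref{PropBoundsPP}, rather than evaluating at $u=w_1$ and invoking $\star$-specific one-point lower tails; both work. For $B_M$ the paper splits the parabola $\tfrac{u^2}{2(1-\tau)}$ between the two summands and treats $\Pb(B^1_M)$ and $\Pb(B^2_M)$ separately per initial condition, whereas you propose a dyadic decomposition in $u$ with a modulus-of-continuity estimate on each block; the paper's Lemma~\ref{lemma6} does something close for the flat case (blocks of length $M$, comparison with an off-critical stationary LPP to control the running maximum via Doob's inequality), and for the stationary case a telescoping decomposition with the increment bounds of Appendix~\ref{sec:boundstepStat} is used instead. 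You also correctly flag the real obstruction — a crude union bound over lattice sites fails after the $N^{1/3}$ rescaling, and one must go through the exit-point comparison machinery — which is precisely what Lemmas~\ref{lemma6}--\ref{lemma7} supply. One small imprecision: you quote an upper-tail bound $Ce^{-cr^{3/2}}$ for the prelimit one-point laws, but the bounds actually established in the appendix (Propositions~\ref{PropBoundsPP}, \ref{propBoundflat}, \ref{propStatBounds}) give exponential decay $Ce^{-cr}$ in the upper tail; since the parabolic gap is $\asymp 4^kM^2$ against a block count $\asymp 2^kM$, the exponential rate already suffices and the final $Ce^{-cM^2}$ bound is unaffected.
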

Denote by
\begin{equation}\label{chiB_M}
\chi^{\star}_M(1)=\max_{|u|\leq M}\Big\{\tau^{1/3}{\cal A}^\star(\tau^{-2/3} u)+(1-\tau)^{1/3}\bigl[{\cal A}_2\bigl( \tfrac{u-w_1}{(1-\tau)^{2/3}}\bigr) - \tfrac{(u-w_1)^2}{(1-\tau)^{4/3}}\bigr]\Big\}
\end{equation}
and recall
\begin{equation}
\chi^\star(\tau)=\tau^{1/3}{\cal A}^\star(\tau^{-2/3} w_\tau).
\end{equation}

\begin{lem}\label{lem2.5}
We have the convergence of joint distributions
\begin{equation}\label{conv_dist_stat}
\begin{aligned}
 \lim_{N\to\infty} &\Pb\left(\max_{|u|\leq M}\{L^{\star}_N(u,\tau)+L^{\rm pp}_N(u,\tau)\}\leq s_1; L^\star_N(w_\tau,\tau)\leq s_2\right)\\
  = &\Pb\left(\chi^{\star}_M(1)\leq s_1; \chi^\star(\tau)\leq s_2\right).
\end{aligned}
\end{equation}
\end{lem}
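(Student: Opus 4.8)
\textbf{Proof plan for Lemma~\ref{lem2.5}.} The statement asserts joint convergence in distribution of the pair
\[
\Big(\max_{|u|\leq M}\{L^{\star}_N(u,\tau)+L^{\rm pp}_N(u,\tau)\},\ L^{\star}_N(w_\tau,\tau)\Big)
\]
to the corresponding pair built from the limiting Airy processes. The plan is to deduce this from the known functional (process-level) convergence of the two pieces $u\mapsto L^{\star}_N(u,\tau)$ and $u\mapsto L^{\rm pp}_N(u,\tau)$ on the compact set $[-M,M]$, together with their asymptotic independence and the continuous mapping theorem. Concretely: first, by the slow-decorrelation / functional limit results cited in the introduction (see \cite{CLW16,FO17,CFS16}), the process $u\mapsto L^{\rm pp}_N(u,\tau)$ converges weakly in $C([-M,M])$ to $u\mapsto (1-\tau)^{1/3}[{\cal A}_2(\tfrac{u-w_1}{(1-\tau)^{2/3}})-\tfrac{(u-w_1)^2}{(1-\tau)^{4/3}}]$, and the process $u\mapsto L^{\star}_N(u,\tau)$ converges weakly in $C([-M,M])$ to $u\mapsto \tau^{1/3}{\cal A}^\star(\tfrac{u}{\tau^{2/3}})$, for each $\star\in\{\bullet,\diagdown,\mathcal{B}\}$. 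Tightness in each case is exactly what was established in the works referenced after \eqref{chistep}--\eqref{chistat} (the Airy$_2$, Airy$_1$, Airy$_{\rm stat}$ limits), and finite-dimensional convergence is the classical input.

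Second, I would establish the joint convergence of the \emph{pair} of processes. The two LPP quantities $L^{\star}_{S_A\to I(u)}$ and $L_{I(u)\to E_1}$ use disjoint collections of the i.i.d.\ weights $\omega_{i,j}$: the first uses only weights in the rectangle with corner $I(u)$ "below" the antidiagonal at level $\tau N$, the second only weights strictly "above" it (together with the independent boundary data $h^0$). Hence for each fixed $N$ the two processes $u\mapsto L^{\star}_N(u,\tau)$ and $u\mapsto L^{\rm pp}_N(u,\tau)$ are \emph{independent}. Combined with marginal tightness, this gives joint weak convergence in $C([-M,M])\times C([-M,M])$ to the product of the two limit laws, i.e.\ to a pair of \emph{independent} processes $(\tau^{1/3}{\cal A}^\star(\tfrac{\cdot}{\tau^{2/3}}),\ (1-\tau)^{1/3}[{\cal A}_2(\tfrac{\cdot-w_1}{(1-\tau)^{2/3}})-\tfrac{(\cdot-w_1)^2}{(1-\tau)^{4/3}}])$, which matches the independence structure in \eqref{chistep}--\eqref{chistat}.

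Third, I would apply the continuous mapping theorem to the functional
\[
\Phi:\ C([-M,M])^2\ \longrightarrow\ \R^2,\qquad \Phi(f,g)=\Big(\max_{|u|\leq M}\{f(u)+g(u)\},\ f(w_\tau)\Big),
\]
which is continuous with respect to the sup-norm topology (the $\max$ over a compact set of a sum of two uniformly converging functions converges, and pointwise evaluation at the fixed point $w_\tau\in[-M,M]$ is continuous). Applying $\Phi$ to the converging pair of processes yields exactly the left-hand side of \eqref{conv_dist_stat} converging to $\Phi$ of the limiting pair, which by the definition \eqref{chiB_M} of $\chi^{\star}_M(1)$ and the formula $\chi^\star(\tau)=\tau^{1/3}{\cal A}^\star(\tau^{-2/3}w_\tau)$ is precisely $(\chi^{\star}_M(1),\chi^\star(\tau))$. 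Rewriting the resulting two-dimensional weak convergence in terms of the joint distribution function gives the claim.

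\textbf{Main obstacle.} The only nontrivial point is the first step: obtaining the process-level (tightness + finite-dimensional) convergence of $u\mapsto L^{\star}_N(u,\tau)$ on $[-M,M]$ for each of the three initial conditions, uniformly enough to feed into the continuous mapping theorem. For $\star=\bullet$ this is the tightness of the rescaled point-to-point LPP toward the Airy$_2$ process proven in \cite{FO17} (building on \cite{CP15b}); for $\star=\diagdown$ and $\star=\mathcal{B}$ one needs the analogous statements toward Airy$_1$ and Airy$_{\rm stat}$, which are available in the literature cited. Since these are quoted as known inputs in the excerpt (they underlie the very definitions \eqref{chistep}--\eqref{chistat} and the exchange of $\lim_N$ with $\max_u$), the proof of Lemma~\ref{lem2.5} itself is essentially a bookkeeping exercise: assemble marginal functional convergence, invoke independence of the two weight families, and apply continuous mapping. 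I would keep the write-up to a few lines making exactly these three points explicit.
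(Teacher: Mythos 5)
Your plan is correct and is essentially the same argument the paper uses: reduce to functional weak convergence of the two rescaled processes on $[-M,M]$ (finite-dimensional convergence plus tightness, citing \cite{FO17}, \cite{BCS06} with Donsker, and \cite{Pi17} for the respective initial conditions), and conclude via independence of the two weight families and the continuous mapping theorem. The paper's write-up is more terse — it leaves the independence and continuous-mapping steps implicit — but your explicit version of those two steps is exactly what is intended.
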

\begin{proof}
It is enough to have weak convergence of the two rescaled process to the terms in the rhs. As mentioned above, the point-wise convergence have been already proven. So we need tightness in the space of continuous functions of $[-M,M]$. Tightness $L^{\rm pp}_N(u,\tau)$ and $L^\bullet_N(u,\tau)$ can be found in Corollary~4.2 of~\cite{FO17}, for $L^{\mathcal{B}}_N(u,\tau)$ it is a direct a direct consequence of Lemma~4.2 of~\cite{BCS06} and the standard Donsker's theorem. Finally, tightness for $L^\diagdown_N(u,\tau)$ has been established in~\cite{Pi17}.
\end{proof}
\subsubsection{Localization of the process}
Let us prove Lemma~\ref{lemma1} and Corollary~\ref{cor1}.
\begin{proof}[Proof of Lemma~\ref{lemma1}]Recall that to prove this lemma, we take $s=s_0$, with the choice $s_0=-M^2 \tilde c=-M^2/(16(1-\tau))$.

\medskip
\noindent (1) Bound on $\Pb(G_M)$.\\
We have
\begin{equation}\label{eq15.2}
  \begin{aligned}
  \Pb(G_M)= &\Pb\Big(\max_{|u|\leq M}\{L^{\star}_N(u,\tau)+L^{\rm pp}_N(u,\tau)\}\leq s_0\Big)\\
  \leq &\Pb\left( L^{\bullet}_N(0,\tau)+L^{\rm pp}_N(0,\tau)\leq s_0\right)\\
  \leq &\Pb\left( L^{\bullet}_N(0,\tau)\leq s_0/2\right)+\Pb\left(L^{\rm pp}_N(0,\tau)\leq s_0/2\right).
  \end{aligned}
\end{equation}
Now we can use standard estimates on the lower tail of the point-to-point LPP (see Prop.~\ref{PropBoundsPP} in Appendix~\ref{sec:boundstepLPP}) to obtain that \eqref{eq15.2} is bounded by $Ce^{-cM^3}$ uniformly in $N$, for some constants $C,c$.

\medskip
\noindent (2) Bound on $\Pb(B_M)$. \emph{Since similar estimates will be used to derive another result, we add an extra variable $\hat s\geq 0$ in the following computations. The case $\hat s=0$ is the one relevant for the present proof.}\\
We have
\begin{equation}\label{eq15.3}
\begin{aligned}
\Pb(B_M)&=\Pb\Big(\max_{|u|>M}\left\{L^{\star}_N(u,\tau)+L^{\rm pp}_N(u,\tau)\right\}>s_0+\hat s\Big)\\
&\leq \Pb\Big(\max_{|u|>M}\Big\{L^{\star}_N(u,\tau)-\frac{u^2}{2(1-\tau)}\Big\}>\frac{s_0+\hat s}{2}\Big)\\
  &+\Pb\Big(\max_{|u|>M}\Big\{L^{\rm pp}_N(u,\tau)+\frac{u^2}{2(1-\tau)}\Big\}>\frac{s_0+\hat s}{2}\Big).
 \end{aligned}
\end{equation}
We study separately the two terms of \eqref{eq15.3} and rename them $\Pb\left(B^1_M\right)$ and $\Pb\left(B^2_M\right)$ respectively. Remark that the maximum over $u$ is actually a maximum over $M<|u|\leq \Or(N^{1/3})$, since $I(u)$ need to stay in the backward light cone of the end-point $E_1$. We will not write this explicitly all the time.

\noindent
(a) Estimate of $\Pb\left(B^2_M\right)$:
\begin{equation}\label{eq2.23}
 \begin{aligned}
  \Pb\left(B^2_M\right)&=\Pb\left(\max_{|u|>M}\Big\{L^{\rm pp}_N(u,\tau)+\frac{u^2}{2(1-\tau)}\Big\}>\frac{s_0+\hat s}{2}\right)\\
 \end{aligned}
\end{equation}
The bound can be obtained through the decay of the kernel for half-flat initial condition. The bound on the Fredholm determinant and the kernel are given as in Theorem~2.6 and Lemma~2.7 of~\cite{CFS16} to get
\begin{equation}\label{eq24a}
\Pb\left(B^2_M\right) \leq Ce^{-cM^2(1-\tau)^{-4/3}} e^{-\tilde c \hat s}.
\end{equation}
Alternatively, one could adapt the proof of Lemma~4.3 of~\cite{FO17} to get the same result.

\noindent (b) Estimate of $\Pb\left(B^1_M\right)$:
\begin{equation}\label{eq17}
 \Pb\left(B^1_M\right) =\Pb\Big(\max_{|u|>M}\Big\{L^{\star}_N(u,\tau)-\frac{u^2}{2(1-\tau)}\Big\}>\frac{s_0+\hat s}{2}\Big).
\end{equation}
\begin{itemize}
\item Droplet initial condition: for this case, one can estimate it like we made for (\ref{eq2.23}) (with minor changes in the terms depending on $\tau$). However, since $L^\bullet_N(u,\tau)\leq L^\diagdown_N(u,\tau)$, the droplet upper tail is simply bounded by the upper tail of the flat initial condition case.
\item Flat initial condition: the bound is obtained in Lemma~\ref{lemma6} below.
\item Stationary initial condition: the bounds for the maximum over $u>M$ and for $u<-M$ are similar and thus we present the details only for the first one.
\begin{equation}\label{eq17b}
\begin{aligned}
&\Pb\Big(\max_{u>M}\Big\{L^{\star}_N(u,\tau)-\frac{u^2}{2(1-\tau)}\Big\}>\frac{s_0+\hat s}{2}\Big)\\
 \leq&\Pb\Big(\max_{u>M}\Big\{L^{\star}_N(u,\tau)-L^\star_N(M,\tau)-\frac{u^2}{2(1-\tau)}\Big\}>s_0+\frac{\hat s}{4}\Big)\\
 &+\Pb\Big(L^{\star}_N(0,\tau)>-\frac{s_0}{4}+\frac{\hat s}{8}\Big)+\Pb\Big(L^{\star}_N(M,\tau)-L^\star_N(0,\tau)>-\frac{s_0}{4}+\frac{\hat s}{8}\Big).
\end{aligned}
\end{equation}
We study separately the three terms of the last line of \eqref{eq17b}. The first term is bounded using (\ref{eqB5}), the second with (\ref{eqB2}) and the third one with (\ref{eqB4}), with the final result
\begin{equation}\label{eqB1M}
\Pb(B_M^1)\leq C e^{-c M^2-\tilde c \hat s}
\end{equation}
for some $c,\tilde c$ depending on $\tau$, but uniform for all $N$ large enough.
\end{itemize}
\end{proof}

\begin{proof}[Proof of Corollary~\ref{cor1}]
By \eqref{eq15.2}, \eqref{eqB1M}, \eqref{eq24a} we can conclude that
\begin{equation}
 \Pb\left(\max_{|u|\leq M}\left\{L^{\star}_N(u,\tau)+L^{\rm pp}_N(u,\tau)\right\}>\max_{|u|> M}\left\{L^{\star}_N(u,\tau)+L^{\rm pp}_N(u,\tau)\right\}\right) \geq 1-2Ce^{-cM^2},
\end{equation}
which implies that the probability that the maximizer of $L^{\star}_N(u,\tau)+L^{\rm pp}_N(u,\tau)$ passes through $I(u)$ with $|u|>M$ goes to zero as $2Ce^{-cM^2}$, for some constants $C,c>0$.
\end{proof}
Here for simplicity of notation we rename $\hat s$ as $s$.
\begin{lem}\label{lemma6}
For flat initial condition, there exist $N_0,M_0$ large enough such that for all $N\geq N_0$ and $M\geq M_0$ it holds
\begin{equation}\label{eq3.24}
\Pb\Big(\max_{M<|u|<\Or(N^{1/3})}\Big\{L^{\diagdown}_N(u,\tau)-\frac{u^2}{2(1-\tau)}\Big\}>\frac{s_0+ s}{2}\Big) \leq C e^{-c s}e^{-\tilde c M^2},
\end{equation}
for some constants $C,c,\tilde c$ independent of $N$ and $M$.
\end{lem}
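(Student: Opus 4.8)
The plan is to cover the range $\{M<|u|<\Or(N^{1/3})\}$ by unit intervals $[k,k+1]$ with $|k|\geq M$, to observe that on each such interval the subtracted parabola $u^2/(2(1-\tau))$ exceeds the (negative) level $(s_0+s)/2$ by a margin of order $k^2/(1-\tau)+s$, and then to reduce every interval to one and the same local tail estimate using the translation invariance of the flat LPP process. A union bound shows that the probability in the lemma is at most $\sum_{|k|\geq M}\Pb(\max_{u\in[k,k+1]}\{L^{\diagdown}_N(u,\tau)-u^2/(2(1-\tau))\}>(s_0+s)/2)$. On $[k,k+1]$ one has $u^2\geq(|k|-1)^2$, so, since $s_0=-M^2/(16(1-\tau))$ and $|k|\geq M$, for $M$ large the level obeys $u^2/(2(1-\tau))+(s_0+s)/2\geq c_1 k^2/(1-\tau)+s/2$ for an absolute constant $c_1>0$; thus the $k$-th summand is bounded by $\Pb(\max_{u\in[k,k+1]}L^{\diagdown}_N(u,\tau)>c_1 k^2/(1-\tau)+s/2)$. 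Since the weights $\omega_{i,j}$ are i.i.d.\ over all of $\Z^2$, the set ${\cal L}$ is invariant under $(i,j)\mapsto(i+1,j-1)$, and $I(u+1)=I(u)+(2N)^{2/3}(1,-1)$, the process $u\mapsto L^{\diagdown}_N(u,\tau)$ is stationary, so each of these probabilities equals $\Pb(\max_{u\in[0,1]}L^{\diagdown}_N(u,\tau)>c_1 k^2/(1-\tau)+s/2)$. (Note one cannot instead compare with $L^{\diagdown}_N(1)=\max_u\{L^{\diagdown}_N(u,\tau)+L^{\rm pp}_N(u,\tau)\}$, because $L^{\rm pp}_N(u,\tau)$ carries a parabola of curvature $1/(1-\tau)$, strictly steeper than the $1/(2(1-\tau))$ here, so subtracting it leaves a term that still grows in $u$.)

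The only real ingredient is then a bound $\Pb(\max_{u\in[0,1]}L^{\diagdown}_N(u,\tau)>r)\leq Ce^{-cr}$ for $r\geq r_0$, with $C,c,r_0$ uniform in $N$. I would get it from the one-point upper tail of the flat LPP (the one-point bounds collected in the appendix; see also~\cite{BFPS06,CFS16}) together with the tightness estimates for the flat LPP process of~\cite{Pi17} (following~\cite{CP15b}), which give a uniform-in-$N$ exponential tail for the oscillation $\sup_{u\in[0,1]}|L^{\diagdown}_N(u,\tau)-L^{\diagdown}_N(0,\tau)|$: writing $\max_{u\in[0,1]}L^{\diagdown}_N(u,\tau)\leq L^{\diagdown}_N(0,\tau)+\sup_{u\in[0,1]}|L^{\diagdown}_N(u,\tau)-L^{\diagdown}_N(0,\tau)|$ and applying both estimates at level $r/2$ yields it. I would resist replacing this by the naive domination $L^{\diagdown}_{S_A\to I(u)}\leq L^{\diagdown}_{S_A\to R}$ by the upper-right corner $R$ of $\{I(u):u\in[0,1]\}$: that corner lies on the antidiagonal $\{i+j=2\tau N+(2N)^{2/3}\}$, i.e.\ at a macroscopically later time, so $L^{\diagdown}_{S_A\to R}$ overshoots the value at $I(0)$ by an amount of order $N^{2/3}$, which is of order $N^{1/3}$ after the $N^{-1/3}$ rescaling --- far too lossy. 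The same reason forbids a naive $\e$-net argument, whose $N^{1/3}$-many mesh points would destroy uniformity in $N$; the gain must come from the process-level tightness estimates.

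With this ingredient, the proof finishes by summation: the $k$-th summand is at most $Ce^{-c(c_1 k^2/(1-\tau)+s/2)}$, and summing over $|k|\geq M$, using $1-\tau<1$, gives $\sum_{|k|\geq M}Ce^{-cc_1 k^2/(1-\tau)}e^{-cs/2}\leq C'e^{-\tilde c M^2}e^{-cs/2}$ by a standard Gaussian tail sum, provided $M$ (hence $r_0$) is large enough; relabelling the constants gives the statement. The cut at $|u|=\Or(N^{1/3})$ merely bounds the number of summands (the sum converges anyway), and the $u$'s close to the boundary of the backward light cone of $E_1$ are covered a fortiori, $L^{\diagdown}_N(u,\tau)$ being extremely negative there. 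The main obstacle is precisely the uniform-in-$N$ exponential tail for $\max_{u\in[0,1]}L^{\diagdown}_N(u,\tau)$ --- equivalently, a quantitative form of tightness of the rescaled flat LPP process on a bounded window --- which is exactly where the comparison-with-stationary machinery of~\cite{CP15b,Pi17} must be imported; everything else is a union bound and elementary arithmetic with the parabola.
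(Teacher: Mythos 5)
Your proof is correct, but it organizes the argument differently from the paper. You cover the range by unit intervals $[k,k+1]$, use the translation invariance of $u\mapsto L^{\diagdown}_N(u,\tau)$ to reduce each to $[0,1]$, and reduce the whole lemma to a single uniform-in-$N$ exponential tail for $\max_{u\in[0,1]}L^{\diagdown}_N(u,\tau)$, which you assemble from the one-point upper tail (Proposition~\ref{propBoundflat}) plus an exponential oscillation bound for $\sup_{u\in[0,1]}|L^{\diagdown}_N(u,\tau)-L^{\diagdown}_N(0,\tau)|$ imported from~\cite{Pi17}. The paper instead works with intervals of size $M$ (so the reduction gives a window $[0,M]$ rather than $[0,1]$), and then runs a three-way case distinction: for $u$ larger than $N^{\e}M$ a crude pointwise union bound suffices because the subtracted parabola exceeds $N^{2\e}$; for the near range with $s\geq N^{2\e}$ a union bound over the $\Or(N^{2/3})$ lattice points again suffices; and only in the remaining case $M<u<N^{\e}M$, $s<N^{2\e}$ does the paper invoke the comparison with stationary (Lemma~\ref{LemmaIncrementsBoundsB} plus the exit-point control of Lemma~\ref{lemma7}) followed by Doob's maximal inequality on the stationary random walk. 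Your route trades this explicit, self-contained derivation (which has to track the $N$-dependence to keep the Doob estimate under control, hence the $N^{\e}$ cut) for a modular invocation of a process-level tightness estimate; both ultimately rest on the same comparison-with-stationary machinery. The one thing to be careful about is making the oscillation tail precise: \cite{Pi17} establishes tightness via this comparison, and an exponential (indeed Gaussian) tail for the unit-window oscillation does follow by the same optimization in $\kappa$ that appears in the paper's Lemma~\ref{lemma7}, but it is worth stating exactly which quantitative form of the estimate you are using, since the lemma statement there may be packaged as a modulus-of-continuity bound rather than the one-shot tail inequality you need. Your observations about why naive domination by a corner point or an $\e$-net would fail are both correct and match the reasoning implicit in the paper's choice of method.
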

\begin{proof}
By symmetry we can consider only the case $u>M$, since the bounds for $u<-M$ are similar. Fix an $\e\in (0,1/6)$. Then,
\begin{equation}\label{eq3.25}
\begin{aligned}
&\Pb\Big(\max_{M<u<\Or(N^{1/3})}\Big\{L^{\diagdown}_N(u,\tau)-\frac{u^2}{2(1-\tau)}\Big\}>\frac{s_0+s}{2}\Big)\\
&\leq \sum_{\ell=1}^{N^\e} \Pb\Big(\max_{u\in [\ell M,(\ell+1)M]}\Big\{L^{\diagdown}_N(u,\tau)-\frac{u^2}{2(1-\tau)}\Big\}>\frac{s_0+s}{2}\Big)\\
&+\sum_{u\in[N^\e,\Or(N^{2/3})]} \Pb\Big(L^{\diagdown}_N(u,\tau)-\frac{u^2}{2(1-\tau)}>\frac{s_0+s}{2}\Big).
\end{aligned}
\end{equation}

Notice that $v\mapsto L^\diagdown_N(u+v,\tau)$ and $v\mapsto L^\diagdown_N(v,\tau)$ have the same law for any $u$. Thus, we can simply bound (using also \eqref{eqD2})
\begin{equation}
\begin{aligned}
\Pb\Big(L^{\diagdown}_N(u,\tau)-\frac{u^2}{2(1-\tau)}>\frac{s_0+s}{2}\Big) &\leq \Pb\Big(L^{\diagdown}_N(0,\tau)>\frac{s}{2}-\frac{M^2}{32(1-\tau)}+\frac{N^{2\e}}{2(1-\tau)}\Big)\\
&\leq C e^{-c s/2+c M^2/(32(1-\tau))-c N^{2\e}/(2(1-\tau))}\\
&\leq C e^{-\tilde c s-\hat c M^2} e^{-c N^{2\e}/(4(1-\tau))},
\end{aligned}
\end{equation}
for some constants $C,c,\tilde c,\hat c$, where the last inequality holds for all $N\geq N_0(M)$. From this it immediately follows that
\begin{equation}
\sum_{u\in[N^\e,\Or(N^{2/3})]} \Pb\Big(L^{\diagdown}_N(u,\tau)-\frac{u^2}{2(1-\tau)}>\frac{s_0+s}{2}\Big)\leq C e^{-\tilde c s-\hat c M^2}.
\end{equation}

Now we evaluate the first term in (\ref{eq3.25}). Using translation-invariance in $u$ we get
\begin{equation}\label{eq3.28bb}
\begin{aligned}
&\Pb\Big(\max_{u\in [\ell M,(\ell+1)M]}\Big\{L^{\diagdown}_N(u,\tau)-\frac{u^2}{2(1-\tau)}\Big\}>\frac{s_0+s}{2}\Big)\\
&\leq \Pb\Big(\max_{u\in [0,M]}L^{\diagdown}_N(u,\tau)>\frac{s}{2}-\frac{M^2}{32(1-\tau)}+\frac{\ell^2 M^2}{2(1-\tau)}\Big)\\
&\leq \Pb\Big(\max_{u\in [0,M]}L^{\diagdown}_N(u,\tau)>\frac{s}{2}+\frac{\ell^2 M^2}{4(1-\tau)}\Big)
\end{aligned}
\end{equation}

\smallskip
(a) The first case is $s\geq N^{2\e}$. We can still just use the union bound and the exponential decay to get
\begin{equation}
\begin{aligned}
(\ref{eq3.28bb})&\leq N^{2/3} \exp\left(-c \frac{\ell^2 M^2}{4(1-\tau)}-c \frac{s}{2}\right)\leq N^{2/3} \exp\left(-c \frac{\ell^2 M^2}{4(1-\tau)}-c \frac{s}{4}-c \frac{N^{2\e}}{4}\right)\\
&\leq \exp(-c_1 \ell^2 M^2-c_2 s),
\end{aligned}
\end{equation}
for some constants $c_1,c_2>0$ and all $N$ large enough.

\smallskip
(b) The second case is $s\in(0,N^{2\e})$. Since also $\ell\leq N^\e$ we have that \mbox{$x:=\frac{s}{2}+\frac{\ell^2 M^2}{8(1-\tau)}=\Or(N^{2\e})\ll N^{1/3}$}. The idea is to bound the process in terms of the increments of a stationary case. For that reason we first need to get a formula including the increments of the rescaled LPP, namely we get
\begin{equation}\label{eq3.29}
\begin{aligned}
(\ref{eq3.28})&\leq \Pb\Big(L^\diagdown_N(0,\tau)>\frac{s}{4}+\frac{\ell^2 M^2}{8(1-\tau)}\Big) \\
&+\Pb\Big(\max_{u\in [0,M]}\Big\{L^{\diagdown}_N(u,\tau)-L^\diagdown_N(0,\tau)\Big\}>\frac{s}{2}+\frac{\ell^2 M^2}{8(1-\tau)}\Big).
\end{aligned}
\end{equation}
For the first term, we just use (\ref{eqD2}) and obtain
\begin{equation}
\Pb\Big(L^\diagdown_N(0,\tau)>\frac{s}{4}+\frac{\ell^2 M^2}{8(1-\tau)}\Big)\leq C e^{-c s/4-c \ell^2 M^2/(8(1-\tau))}.
\end{equation}
The sum of this bound over $\ell\geq 1$ leads to a bound $\tilde C e^{-c s/4-c M^2/(8(1-\tau))}$. For the second term in (\ref{eq3.29}), define $\rho_+=\frac12+\kappa N^{-1/3}$ and the event
\begin{equation}
\Omega_{N,\kappa}=\{\widetilde Z^{\rho_+}(I(0)) > \widetilde Z^\diagdown(I(u)), \textrm{ for all }u\in [0,M]\}.
\end{equation}
On this event, by Lemma~\ref{LemmaIncrementsBoundsB}, we have
\begin{equation}
L^{\diagdown}_N(u,\tau)-L^\diagdown_N(0,\tau)\leq L^{\rho_+}_N(u,\tau)-L^{\rho_+}_N(0,\tau),
\end{equation}
which in turns gives
\begin{equation}
\begin{aligned}
&\Pb\Big(\max_{u\in [0,M]}\Big\{L^{\diagdown}_N(u,\tau)-L^\diagdown_N(0,\tau)\Big\}>\frac{s}{2}+\frac{\ell^2 M^2}{8(1-\tau)}\Big)\\
&\leq \Pb\Big(\max_{u\in [0,M]}\Big\{L^{\rho_+}_N(u,\tau)-L^{\rho_+}_N(0,\tau)\Big\}>\frac{s}{2}+\frac{\ell^2 M^2}{8(1-\tau)}\Big)+\Pb(\Omega_{N,\kappa}^c).
\end{aligned}
\end{equation}
By stationarity of the increments we have
\begin{equation}
L^{\rho_+}_N(u,\tau)-L^{\rho_+}_N(0,\tau)=\frac{1}{2^{4/3}N^{1/3}}\sum_{i=1}^{\lfloor uN^{2/3}\rfloor} Z_i,
\end{equation}
where $Z_i=X_i-Y_i$ with $X_i$'s \iid \mbox{Exp$(1-\rho_+)$} and $Y_i$'s \iid \mbox{Exp$(\rho_+)$} random variables. Since \mbox{$\mathcal M_u=\sum_{i=1}^{\lfloor uN^{2/3}\rfloor}Z_i$} is a submartingale, so it is $\exp(t {\cal M}_u)$ for $t>0$ (at least for $t$ small enough) and we can use Doob's inequality for submartingales,
\begin{equation}\label{eq3.31}
\Pb\left(\max_{u\in[0,M]}\mathcal M_u\geq x\right)\leq \inf_{t\geq0}\frac{\E\bigl[e^{t\mathcal M_M}\bigr]}{e^{tx}}=\inf_{t\geq0}\frac{\E\bigl[e^{tZ_1 }\bigr]^{\lfloor MN^{2/3}\rfloor }}{e^{tx}}.
\end{equation}
For  and $\rho_+=\frac{1}{2}+\kappa N^{-1/3}$. An explicit computation gives, for $\kappa\in (0,x/(2^{5/3}M))$,
\begin{equation}
(\ref{eq3.31}) \leq \exp\left(-\frac{(2^{1/3}x-4 M\kappa)^2}{4M}+\Or(x^4 N^{-2/3};\kappa^4 N^{-2/3})\right).
\end{equation}
Thus, with the choice $\kappa=x/(2^{8/3}M)$, we find
\begin{equation}
\begin{aligned}
&\Pb\Big(\max_{u\in [0,M]}\Big\{L^{\rho_+}_N(u,\tau)-L^{\rho_+}_N(0,\tau)\Big\}>\frac{s}{2}+\frac{\ell^2 M^2}{8(1-\tau)}\Big)\\
&\leq \exp\left(-\frac{x^2}{16 M}\right) \leq \exp\left(-c_1 s^2 - c_2 \ell^4 M^3\right),
\end{aligned}
\end{equation}
for some constants $c_1,c_2>0$ and all $N$ large enough. Summing this bound over $\ell\geq 1$ we get $C e^{-c_1 s^2-c_2 M^3}$ for some constants $C$, completing the proof of (\ref{eq3.24}).
\end{proof}

\begin{lem}\label{lemma7}
Let $\rho_{\pm}=\frac{1}{2}\pm\kappa N^{-1/3}$. Define the event
\begin{equation}\label{ENk}
\begin{aligned}
\Omega_{N,\kappa}=&\left\{\widetilde Z^{\rho_+}(I(0))\geq \widetilde Z^\diagdown(I(u)), \forall u\in[0,M] \right\}\cap \left\{\widetilde Z^{\rho_-}(I(0))\leq \widetilde Z^\diagdown(I(u)), \forall u\in[-M,0] \right\}.
\end{aligned}
\end{equation}
where the exit points are as in Definition~\ref{exitpoint}. Then, for all $N$ large enough and all $\kappa>0$ with $\kappa=o(N^{1/3})$,
\begin{equation}
\Pb(\Omega_{N,\kappa}^c)\leq C e^{-c\kappa^2}.
\end{equation}
\end{lem}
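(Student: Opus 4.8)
The plan is to estimate each of the two events in the intersection \eqref{ENk} separately; by symmetry (reflecting $u\mapsto -u$ and swapping the roles of $X$ and $Y$, i.e.\ $\rho_+\leftrightarrow\rho_-$) it suffices to bound
\[
 \Pb\Big(\exists\, u\in[0,M]:\ \widetilde Z^{\rho_+}(I(0)) < \widetilde Z^{\diagdown}(I(u))\Big).
\]
The exit point $\widetilde Z^{\rho_+}(I(0))$ of the stationary LPP of density $\rho_+=\tfrac12+\kappa N^{-1/3}$ to the point $I(0)=(\tau N,\tau N)$ is, at the relevant scale, of order $\kappa N^{1/3}\cdot N^{1/3}= $ a macroscopic positive shift: more precisely, the characteristic direction of the density-$\rho_+$ stationary LPP through $I(0)$ hits the antidiagonal at a point displaced by a positive amount proportional to $\kappa$ in the $(2N)^{2/3}$-rescaled coordinate. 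So the first step is to show that $\widetilde Z^{\rho_+}(I(0))$ is bounded below by $c\kappa (2N)^{2/3}$ except on a set of probability $\le Ce^{-c\kappa^2}$. This is a one-sided exit-point estimate for stationary LPP with off-critical density, and it is exactly the kind of bound recalled/used in the comparison arguments of \cite{CP15b,Pi17}; the exponent $\kappa^2$ comes from the fact that moving the density by $\kappa N^{-1/3}$ off criticality shifts the typical exit point by $\Or(\kappa N^{2/3})$, and the fluctuations of the exit point around that value are Gaussian on the $N^{2/3}$ scale with the Gaussian tail producing $e^{-c\kappa^2}$ once one is at distance $\Or(\kappa N^{2/3})$ from the origin. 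The precise quantitative statement can be quoted from the appendix bounds or from \cite{FGN17}.

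The second step is a matching upper bound on the exit point $\widetilde Z^{\diagdown}(I(u))$ of the flat-initial-condition LPP, uniformly in $u\in[0,M]$: one needs $\widetilde Z^{\diagdown}(I(u))\le c\kappa(2N)^{2/3}$ with the complementary probability $\le Ce^{-c\kappa^2}$. Since $u$ ranges only over the compact set $[0,M]$ while the relevant displacement is of order $\kappa(2N)^{2/3}$ with $\kappa$ allowed to be large, the $u$-dependence is a lower-order perturbation and a union bound over $u$ in $[0,M]$ at spacing $1$ (with $\Or(M)$ terms, plus a continuity/monotonicity argument between grid points using that $u\mapsto \widetilde Z^{\diagdown}(I(u))$ changes in a controlled way) is affordable because each term is $\le Ce^{-c\kappa^2}$ and $M$ is fixed. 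The needed transversal-fluctuation bound for flat LPP — that the maximizer from the flat antidiagonal to $I(u)$ starts within $\Or(N^{2/3})$ of the origin, with Gaussian-type tails in the rescaled variable — is available from the same circle of ideas (e.g.\ the bounds used in \cite{FO17,Pi17}); combined with step one, on the complement of the bad events we get $\widetilde Z^{\rho_+}(I(0))\ge c\kappa(2N)^{2/3}\ge \widetilde Z^{\diagdown}(I(u))$ for all $u\in[0,M]$, which is the first event in \eqref{ENk}. Repeating verbatim with $\rho_-$ and $u\in[-M,0]$ gives the second event, and a final union bound over the two (and over the $\Or(M)$ grid points) yields $\Pb(\Omega_{N,\kappa}^c)\le Ce^{-c\kappa^2}$.

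The main obstacle is the first step: obtaining the exit-point lower bound for the off-critical stationary LPP with the sharp Gaussian exponent $e^{-c\kappa^2}$ uniformly in $N$ (for $\kappa=o(N^{1/3})$), rather than the softer bounds of \cite{BCS06}. This requires either a direct computation with the explicit distribution of $\widetilde Z^{\rho}(m,n)$ (which in the stationary LPP has a tractable form via the Burke-type structure of the boundary, so that $\Pb(\widetilde Z^\rho(I(0)) \le k)$ can be written as a ratio involving last-passage times and estimated by competing the boundary weights against the bulk), or quoting a sufficiently quantitative exit-point estimate from \cite{FGN17} or the appendix. I would present it via the random-walk comparison: the event $\{\widetilde Z^{\rho_+}(I(0))\le k\}$ forces a boundary random walk with drift $\Or(\kappa N^{-1/3})$ to stay below the bulk-LPP profile over a stretch of length $\Or(N^{2/3})$ near the origin, and a Doob/Azuma estimate on this walk — exactly as in the submartingale argument used in the proof of Lemma~\ref{lemma6} — delivers the $e^{-c\kappa^2}$ decay. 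The rest is bookkeeping with union bounds over the fixed-length $u$-interval.
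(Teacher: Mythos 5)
Your high-level decomposition matches the paper's: by symmetry treat only the $\rho_+$-event, introduce a threshold $\alpha(2N)^{2/3}$, and bound separately $\Pb(\widetilde Z^{\diagdown}(I(\cdot))>\alpha(2N)^{2/3})$ and $\Pb(\widetilde Z^{\rho_+}(I(0))<\alpha(2N)^{2/3})$, each by $Ce^{-c\kappa^2}$. However, the two steps where you do genuine work are precisely where the paper uses cleaner tricks.

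First, for the supremum over $u\in[0,M]$ you propose a union bound over $\Or(M)$ grid points plus a continuity/monotonicity argument between grid points. The paper instead observes that $u\mapsto\widetilde Z^\diagdown(I(u))$ is monotone (a geodesic-ordering fact in planar LPP), so $\sup_{u\in[0,M]}\widetilde Z^\diagdown(I(u))=\widetilde Z^\diagdown(I(M))$ and one only needs the exit-point estimate at the single endpoint $u=M$. This kills the union bound entirely; your detour is correct but unnecessary.

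Second, and more importantly, for the step you correctly identify as ``the main obstacle'' -- the lower bound $\Pb(\widetilde Z^{\rho_+}(I(0))<\alpha(2N)^{2/3})\le Ce^{-c\kappa^2}$ -- you propose to derive it from scratch by a boundary-random-walk/Doob--Azuma argument. The paper does not do this. It uses the distributional identity
\begin{equation*}
\widetilde Z^{\rho_+}(n-k,n+k)\overset{d}{=}\widetilde Z^{\rho_+}(n,n)-k
\end{equation*}
(stationarity of the exit point along the antidiagonal), plus the Burke-type coupling of~\cite{BCS06} between the line-to-point and corner-to-point stationary models, to rewrite
\begin{equation*}
\Pb\bigl(\widetilde Z^{\rho_+}(I(0))<\alpha(2N)^{2/3}\bigr)=\Pb\bigl(Z^{\rho_+}(I(-\alpha))<0\bigr),
\end{equation*}
and then directly quotes Lemma~2.5 of~\cite{FO17}, which is exactly an exit-through-origin estimate for off-critical density with the $e^{-c\tilde\kappa^2}$ tail. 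Choosing $\alpha=2^{1/3}\tau\kappa$ makes $\tilde\kappa\asymp\kappa$ and yields the stated bound. So the hard estimate you flag is, in the paper's route, not proved anew but reduced by a translation argument to an already-available lemma. Your proposed martingale route is plausible and would presumably work, but it amounts to re-proving a version of that lemma; the paper's reduction is shorter and is the intended path. Also note that the relevant displacement is $\alpha(2N)^{2/3}$ with $\alpha\asymp\kappa$ \emph{fixed} relative to $N$ (since $\kappa=o(N^{1/3})$), so your heuristic that ``the characteristic direction hits the antidiagonal at a point displaced by $\Or(\kappa N^{2/3})$'' should not be confused with a macroscopic shift; the estimate is a genuine large-deviation bound in the KPZ scaling window and this is exactly what Lemma~2.5 of~\cite{FO17} controls.
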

\begin{proof}
We need to estimate the complement of the probabilities of the two terms in (\ref{ENk}), for instance
\begin{equation}\label{eq3.41}
\Pb(\widetilde Z^{\diagdown}(I(u))>\widetilde Z^{\rho_+}(I(0))),\textrm{ for some }u\in[0,M]).
\end{equation}
The estimates are completely analogous, thus we provide the details only for the first one.

Since $\widetilde Z^\diagdown(I(u))\leq \widetilde Z^\diagdown(I(M))$ for all $u\in[0,M]$, we have
\begin{equation}
\begin{aligned}
(\ref{eq3.41})& \leq\Pb(\widetilde Z^{\diagdown}(I(M))>\widetilde Z^{\rho_+}(I(0))) \\
&\leq\Pb(\widetilde Z^{\diagdown}(I(M))>\alpha  (2N)^{2/3})+\Pb(\widetilde Z^{\rho_+}(I(0)))<\alpha (2N)^{2/3}).
\end{aligned}
\end{equation}

By Lemma~4.3 of~\cite{FO17}, we have that $\Pb(\widetilde Z^{\diagdown}(I(M))>\alpha (2N)^{2/3})\leq Ce^{-c\alpha^2}$, for some constants $C,c\in (0,\infty)$. Using stationarity of the increments along the antidiagonal, we have
\begin{equation}
\widetilde Z^{\rho_+}(n-k,n+k)\overset{d}{=}\widetilde Z^{\rho_+}(n,n)-k.
\end{equation}
Thus,
\begin{equation}\label{eq3.43}
\begin{aligned}
\Pb(\widetilde Z^{\rho_+}(I(0))<\alpha (2N)^{2/3}) &= \Pb(\widetilde Z^{\rho_+}(I(0))-\alpha (2N)^{2/3}\leq 0)\\
&=\Pb(\widetilde Z^{\rho_+}(I(-\alpha))<0)=\Pb(Z^{\rho_+}(I(-\alpha))<0).
\end{aligned}
\end{equation}
The last equality follows from the fact that we can construct the two models on the same randomness (define the random variables in the model (\ref{stationaryboundary}) as image of the ones of (\ref{stationaryboundaryB}) by~\cite{BCS06}), for which $\widetilde Z^{\rho_+}(m,n)<0$ iff $Z^{\rho_+}(m,n)<0$ by simple geometric considerations.

Setting $(\gamma^2 n,n)=I(-\alpha)$, and writing $\rho_+=1/(1+\gamma)+\tilde \kappa n^{-1/3}$, we deduce that $\tilde\kappa=\tau^{2/3}\kappa-2^{-4/3}\alpha \tau^{-1/3}+\Or(\kappa N^{-1/3})$. Lemma~2.5 of~\cite{FO17} states\footnote{By inspecting the proof of Lemma~2.5 of~\cite{FO17}, one sees that it actually holds true not only for any given $\kappa$, but also for all $\kappa\in [0,o(n^{1/3})]$.} that if $\tilde\kappa>0$, then $\Pb(Z^{\rho_+}(\gamma^2 n,n)<0)\leq C e^{-c \tilde \kappa^2}$ for some constants $C,c>0$. We choose $\alpha=2^{1/3}\tau \kappa$, which gives $\tilde\kappa=\tfrac12 \kappa \tau^{2/3}(1+\Or(\kappa/N^{1/3}))$. Then, for all $N$ large enough, we obtain
\begin{equation}
(\ref{eq3.43})\leq C e^{-\tilde c \kappa^2}
\end{equation}
for some constants $C,c>0$.
\end{proof}

\subsubsection{Convergence of the covariance}\label{2ndmoments}
To prove Theorem~\ref{Thm1}, first we show that the $N\to\infty$ limit of the covariance of $L^{\star}_N(\tau)$ and $\max_{|u|\leq M}\left\{L^\star_N(u,\tau)+L^{\rm pp}_N(u,\tau)\right\}$ is the covariance of $\chi^{\star}(\tau)$ and $\chi^{\star}_M(1)$, for fixed $M>0$.
Now that we have proved the localization of the process, we need to show that the covariance of $\chi^{\star}(\tau)$ and $\chi^{\star}(1)$ is the $M\to\infty$ limit of the covariance of the LPP restricted to the region $|u|\leq M$.
\begin{prop}\label{prop1}
For any fixed $M>0$,
\begin{equation}
\lim_{N\to\infty}\Cov\left(L^{\star}_N(w_\tau,\tau),\max_{|u|\leq M}\{L^{\star}_N(u,\tau)+L^{\rm pp}_N(u,\tau)\}\right)=\Cov\left(\chi^{\star}(\tau),\chi^{\star}_M(1)\right).
\end{equation}
\end{prop}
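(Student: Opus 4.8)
The plan is to upgrade the weak convergence supplied by Lemma~\ref{lem2.5} to convergence of the covariance by a uniform integrability argument. Write $X_N=L^\star_N(w_\tau,\tau)$, $Y_N=\max_{|u|\le M}\{L^\star_N(u,\tau)+L^{\rm pp}_N(u,\tau)\}$, and $X=\chi^\star(\tau)$, $Y=\chi^\star_M(1)$. Lemma~\ref{lem2.5} states that the joint distribution functions of $(X_N,Y_N)$ converge to that of $(X,Y)$ at every point; since the limit law has continuous marginals (Airy-type distributions), its joint c.d.f.\ is continuous, so this is precisely joint weak convergence $(X_N,Y_N)\Rightarrow(X,Y)$. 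I would then invoke the elementary fact that if $(X_N,Y_N)\Rightarrow(X,Y)$ and $\sup_N\big(\E|X_N|^{2+\delta}+\E|Y_N|^{2+\delta}\big)<\infty$ for some $\delta>0$, then $\{X_N\}$, $\{Y_N\}$ and, by Cauchy--Schwarz, $\{X_NY_N\}$ are uniformly integrable, whence $\E X_N\to\E X$, $\E Y_N\to\E Y$, $\E(X_NY_N)\to\E(XY)$, and therefore $\Cov(X_N,Y_N)\to\Cov(X,Y)$. Thus it suffices to prove uniform-in-$N$ bounds on sufficiently high moments of $X_N$ and $Y_N$; in fact one gets all moments.

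For $X_N=L^\star_N(w_\tau,\tau)$ this is a single rescaled LPP time, and the needed uniform upper- and lower-tail estimates are exactly the one-point bounds collected in the appendix: Proposition~\ref{PropBoundsPP} for the droplet case and the analogous one-point bounds (of the type behind $(\ref{eqB2})$ and $(\ref{eqD2})$) for the flat and stationary cases. These give, uniformly in $N$, $\Pb(X_N>s)\le Ce^{-cs^{3/2}}$ and $\Pb(X_N<-s)\le Ce^{-cs^{3}}$ for $s$ large, hence $\sup_N\E|X_N|^p<\infty$ for every $p\ge1$.

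For $Y_N$ the upper tail is controlled by $Y_N\le\max_{u\in\R}\{L^\star_N(u,\tau)+L^{\rm pp}_N(u,\tau)\}=L^\star_N(w_1,1)$, which is again a single rescaled LPP time whose upper tail is uniformly bounded by the appendix one-point estimates; alternatively one can use the bounds on $\Pb(B_M)$ from the proof of Lemma~\ref{lemma1} with the auxiliary parameter $\hat s$ (they give $Ce^{-cM^2-\tilde c\hat s}$ for $\max_{|u|>M}$, and the same one-point inputs handle the range $|u|\le M$). The lower tail is controlled from below by evaluating the maximand at $u=0$: $Y_N\ge L^\star_N(0,\tau)+L^{\rm pp}_N(0,\tau)$, so $\Pb(Y_N<-s)\le\Pb\big(L^\star_N(0,\tau)<-s/2\big)+\Pb\big(L^{\rm pp}_N(0,\tau)<-s/2\big)$, and both terms decay super-polynomially in $s$, uniformly in $N$, by the lower-tail bounds of Appendix~\ref{sec:boundstepLPP} (this is exactly the argument used to bound $\Pb(G_M)$ in the proof of Lemma~\ref{lemma1}). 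Hence $\sup_N\E|Y_N|^p<\infty$ for every $p\ge1$.

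Finally, combining the moment bounds with Cauchy--Schwarz gives $\sup_N\E|X_NY_N|^{1+\delta}\le\sup_N\big(\E|X_N|^{2+2\delta}\big)^{1/2}\big(\E|Y_N|^{2+2\delta}\big)^{1/2}<\infty$, so $\{X_N\}$, $\{Y_N\}$ and $\{X_NY_N\}$ are uniformly integrable; together with the weak convergence this yields $\E X_N\to\E X$, $\E Y_N\to\E Y$, $\E(X_NY_N)\to\E(XY)$, and subtracting gives $\Cov(X_N,Y_N)\to\Cov(X,Y)$, which is the claim. The only real content is the collection of uniform tail bounds for the three initial conditions, which is precisely what the appendix estimates and the $G_M$/$B_M$ bounds already proven provide; the rest is the standard weak-convergence-plus-uniform-integrability mechanism. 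The mild obstacle to watch is simply making sure each of the one-point inputs is stated uniformly in $N$ (not just in the $N\to\infty$ limit), which is the reason those estimates are isolated in the appendix.
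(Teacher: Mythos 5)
Your proof is correct and follows essentially the same route as the paper: joint weak convergence from Lemma~\ref{lem2.5} plus uniform-in-$N$ tail bounds from the appendix (and the $G_M$/$B_M$ estimates), used via Cauchy--Schwarz to pass the covariance to the limit. The only cosmetic difference is that you package the step as uniform integrability of $(2+\delta)$-moments, while the paper packages it as dominated convergence of the second-moment integrals; the substance and all inputs are identical.
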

\begin{proof}
Let us denote
\begin{equation}
L^\star_{N;M}(1)=\max_{|u|\leq M}\{L^{\star}_N(u,\tau)+L^{\rm pp}_N(u,\tau)\}.
\end{equation}
By Lemma~\ref{conv_dist_stat} we already have the convergence of joint distributions of $L^{\star}_{N;M}(1)$ and $L^\star_N(\tau)\equiv L^\star_N(w_\tau,\tau)$ to $\chi^{\star}_M(1)$ and $\xi^\star(\tau)$. By Cauchy-Schwarz it is enough to show the convergence of the second moments of $L^{\star}_{N;M}(1)$ and $L^{\star}_N(\tau)$.

For a random variable $X_N$ with distribution function $F_N(s)=\Pb(X_N\leq s)$, we can write
\begin{equation}
\E(X_N^2) = 2\int_{\R_+} s (1-F_N(s))ds-2\int_{\R_-} s F_N(s)ds.
\end{equation}
If we know that $X_N\to X$ in distribution, to show convergence of the second moment we need only to find $g(s)$ independent of $N$ such that $1-F_N(s)\leq g(s)$ for $s\geq 0$, $F_N(s)\leq g(s)$ for $s<0$ and that $g\in L^1(\R)$. Then dominated convergence allows to take the limit in the integrals and obtain $\E(X_N^2)\to \E(X^2)$. Thus our task is to find such bounds. Since $F_N(s)\in [0,1]$, it is enough to get bounds for the tails, i.e., a bound for $1-F_N(s)$ for $s\geq s_0$ and for $F_N(s)$ for $s\leq -s_0$ for some $s_0$.

\medskip \noindent (1) $\lim_{N\to\infty} \E[(L^{\star}_N(\tau))^2] =\E[(\chi^{\star}(\tau))^2]$.
\begin{itemize}
\item bound on lower tails: due to $L^\star_N(\tau)\geq L^\bullet_N(\tau)$, we can use for all cases the lower bound for the droplet initial condition, which is in Proposition~\ref{PropBoundsPP} (by appropriate change of variables).
\item bound on upper tails: (a) for the droplet initial condition, this is in Proposition~\ref{PropBoundsPP}, (b) for the flat initial condition, this is given in Proposition~\ref{propBoundflat}, (c) for the stationary initial condition\footnote{For stationary initial condition, the convergence of all moments was already proven in~\cite{BFP12}.}, we have
    \begin{equation}
    \Pb(L^{\mathcal{B}}_N(\tau)\leq s)\leq \Pb(L^\mathcal{B}_N(0)\leq s/2)+\Pb(L^\mathcal{B}_N(\tau)-L^\mathcal{B}_N(0)\leq s/2).
    \end{equation}
    The first term is bounded using Proposition~\ref{propStatBounds}, while the second using Proposition~\ref{lemmaIncrementsStat}.
\end{itemize}
In all cases we have at least exponential decay of the both the upper and lower tails. This implies the convergence of the second moment as well.

\medskip \noindent (2) $\lim_{N\to\infty} \E[(L^{\star}_{N;M}(1))^2] =\E[(\chi^{\star}_M(1))^2]$.
\begin{itemize}
\item bound on lower tails: we have
\begin{equation}\label{eq3.28}
\begin{aligned}
\Pb(L^{\star}_{N;M}(1)\leq s)& \leq \Pb(L^{\star}_{N}(0,\tau)\leq s/2)+\Pb(L^{\rm pp}_N(0,\tau)\leq s/2)\\
& \leq \Pb(L^{\bullet}_{N}(0,\tau)\leq s/2)+\Pb(L^{\rm pp}_N(0,\tau)\leq s/2)\leq C e^{-c |s|^{3/2}}
\end{aligned}
\end{equation}
by Proposition~\ref{PropBoundsPP}.
\item bound on upper tails: we have $L^\star_{N;M}(1)\leq L^\star_{N}(1)$ and thus by the estimates used already in part (1) we have
\begin{equation}
\Pb(L^\star_{N}(1)\geq s)\leq C e^{-c s}.
\end{equation}
\end{itemize}
These bounds implies convergence of the second moment as well.
\end{proof}

What remains to prove Theorem~\ref{Thm1} is a control on the contribution to the covariance from the events when the maximizer passed by $I(u)$ for some $|u|>M$.
We have the decomposition
\begin{equation}
\begin{aligned}
&\Cov\left(L^{\star}_N(\tau),L^{\star}_N(1)\right)\\
&=\Cov\left(L^{\star}_N(\tau),L^{\star}_{N;M}(1)\right)+\Cov\left(L^{\star}_N(\tau),L^{\star}_N(1)-L^\star_{N;M}(1)\right).
\end{aligned}
\end{equation}
Given the convergence of the second moments for fixed $M$ by Proposition~\ref{prop1}, there is one term left to study:
 \begin{equation}\label{eq24}
 \begin{aligned}
  &|\Cov\left(L^{\star}_N(\tau),L^{\star}_N(1)-L^\star_{N;M}(1)\right)|\\
  &=|\E\left[L^{\star}_N(\tau)\left(L^{\star}_N(1)-L^\star_{N;M}(1)\right)\right]-\E\left[L^{\star}_N(\tau)\right]\E\left[L^{\star}_N(1)-L^\star_{N;M}(1)\right]|\\
  &\leq 2 \left(\E\left[(L^{\star}_N(\tau))^2\right]\E\left[(L^{\star}_N(1)-L^\star_{N;M}(1))^2\right]\right)^{1/2},
 \end{aligned}
 \end{equation}
where we used Cauchy-Schwarz to control the second term.

\begin{lem}\label{lemma2} For any $M>0$,
\begin{equation}\label{eq3.32}
 \lim_{N\to\infty}\E\left[(L^{\star}_N(1)-L^\star_{N;M}(1))^2\right]\leq Ce^{-cM^2}
 \end{equation}
as well as
\begin{equation}\label{eq3.32b}
\E[(\chi^\star(1)-\chi^\star_M(1))^2]\leq C e^{-c M^2}.
\end{equation}
where $C, c>0$ are positive constants, uniformly in $N$.
\end{lem}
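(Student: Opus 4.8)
The plan is to exploit that the difference $D_N:=L^{\star}_N(1)-L^\star_{N;M}(1)$ is nonnegative and vanishes on the localization event $\Omega^G_M$ of \eqref{omegaM}: on $\Omega^G_M$ the maximizer of $u\mapsto L^{\star}_N(u,\tau)+L^{\rm pp}_N(u,\tau)$ lies in $[-M,M]$, hence $L^\star_{N;M}(1)=L^{\star}_N(1)$ and $D_N=0$ there. Consequently, by Hölder's inequality with conjugate exponents $p$ and $p/(p-1)$ (any fixed $p>1$, e.g.\ $p=2$),
\[
\E[D_N^2]=\E\big[D_N^2\,\Id_{(\Omega^G_M)^c}\big]\leq \big(\E[D_N^{2p}]\big)^{1/p}\,\Pb\big((\Omega^G_M)^c\big)^{1-1/p},
\]
and $\Pb((\Omega^G_M)^c)\leq 2Ce^{-cM^2}$ by Corollary~\ref{cor1}. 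It therefore remains only to bound $\E[D_N^{2p}]$ uniformly in $N$.

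For this I would use $\E[D_N^{2p}]\leq 2^{2p}\big(\E[(L^{\star}_N(1))^{2p}]+\E[(L^\star_{N;M}(1))^{2p}]\big)$ together with the one-point tail estimates already assembled in the proof of Proposition~\ref{prop1}. Namely, for $N$ large and $\star\in\{\bullet,\diagdown,\mathcal B\}$ one has $\Pb(L^{\star}_N(1)\geq s)\leq Ce^{-cs}$ (upper tails from Propositions~\ref{propBoundflat}, \ref{propStatBounds}, \ref{lemmaIncrementsStat} and the domination $L^\bullet_N\leq L^\diagdown_N$) and $\Pb(L^{\star}_N(1)\leq -s)\leq Ce^{-cs^{3/2}}$ (from $L^{\star}_N(1)\geq L^\bullet_N(0,\tau)+L^{\rm pp}_N(0,\tau)$ and Proposition~\ref{PropBoundsPP}); since $L^\bullet_N(0,\tau)+L^{\rm pp}_N(0,\tau)\leq L^\star_{N;M}(1)\leq L^{\star}_N(1)$, the same two tails hold for $L^\star_{N;M}(1)$. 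Thus all moments of $L^{\star}_N(1)$ and $L^\star_{N;M}(1)$ are bounded uniformly in $N$, and combining with the display above gives $\E[D_N^2]\leq C' e^{-c'M^2}$ for all $N$ large enough, which is \eqref{eq3.32} (with $\limsup$ in place of $\lim$; the limit itself exists by the last step below).

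For \eqref{eq3.32b} I would run the same argument at the level of the limiting variational problem. Writing $\Phi(u)$ for the integrand in \eqref{chiB_M}, the difference $D:=\chi^{\star}(1)-\chi^\star_M(1)=\max_{u\in\R}\Phi(u)-\max_{|u|\leq M}\Phi(u)\geq0$ vanishes on $\{\max_{|u|\leq M}\Phi(u)\geq\max_{u\in\R}\Phi(u)\}$, whose complement is contained in $\{\max_{|u|\leq M}\Phi(u)\leq s_0\}\cup\{\max_{|u|>M}\Phi(u)>s_0\}$ for $s_0=-M^2/(16(1-\tau))$. The probability of this union is $\leq Ce^{-cM^2}$: restrict the unbounded maximum to a finite window $M<|u|\leq M'$, use weak convergence of the rescaled processes on compacts (Lemma~\ref{lem2.5}) to pass to the limit from the $N$-uniform bounds of Lemma~\ref{lemma1}, then let $M'\to\infty$ by monotone convergence (legitimate since $\Phi(u)\to-\infty$ as $|u|\to\infty$). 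Similarly $\E[D^{2p}]<\infty$ follows from the limiting one-point tail bounds for $\chi^{\star}(1)$ and $\chi^\star_M(1)$, obtained from the finite-$N$ bounds above via Fatou. Hölder then yields \eqref{eq3.32b}. A shorter alternative: $D_N\Rightarrow D$ jointly (localization plus functional convergence on each $[-M',M']$ followed by $M'\to\infty$), so the uniform bound $\sup_N\E[D_N^{2p}]<\infty$ upgrades this to $\E[D_N^2]\to\E[D^2]$, and \eqref{eq3.32b} follows from \eqref{eq3.32} (and also shows the limit in \eqref{eq3.32} exists).

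The argument uses no new estimates: it reduces entirely to one-point upper and lower tail bounds for $L^{\star}_N(1)$ and for the restricted maximum, all of which were already invoked for Proposition~\ref{prop1}, plus Corollary~\ref{cor1}. The only point deserving care — and the only mild obstacle — is that in deriving the limiting analogue of Corollary~\ref{cor1} one must not interchange $\lim_{N\to\infty}$ with the maximum over the unbounded range of $u$; truncating to $M<|u|\leq M'$ and using monotonicity in $M'$ circumvents this.
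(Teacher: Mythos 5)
Your argument is correct, but it takes a genuinely different route from the paper's for \eqref{eq3.32}. The paper writes $L^\star_N(1)-L^\star_{N;M}(1)=\max\{0,L^\star_{N;M^c}(1)-L^\star_{N;M}(1)\}$ where $L^\star_{N;M^c}(1)=\max_{|u|>M}\{L^\star_N(u,\tau)+L^{\rm pp}_N(u,\tau)\}$, integrates by parts to get $2\int_{\R_+}s\,\Pb(L^\star_{N;M^c}(1)-L^\star_{N;M}(1)>s)\,ds$, splits the tail probability into $\Pb(L^\star_{N;M^c}(1)>\tfrac{s+\alpha}{2})+\Pb(L^\star_{N;M}(1)\leq\tfrac{\alpha-s}{2})$ with $\alpha=-M^2/(16(1-\tau))$, and controls the two terms by the $\hat s$-dependent variant of $\Pb(B_M)$ and by Proposition~\ref{PropBoundsPP}; this route simultaneously produces the $N$-uniform dominating function needed for the dominated-convergence step that gives \eqref{eq3.32b}. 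Your route instead observes that $D_N$ vanishes on $\Omega^G_M$ and applies H\"older against the small complementary event of Corollary~\ref{cor1}, supplying the needed uniform moment bounds from the one-point tails already collected for Proposition~\ref{prop1}. That is slicker and avoids integration by parts, at the cost that the exponent $c$ in the final bound is only $c(1-1/p)$, which is harmless here. The trade-off shows up in \eqref{eq3.32b}: the paper gets it essentially for free from the tail-integral formula and dominated convergence, whereas you must redo the localization argument at the level of the limit process, truncating $|u|\leq M'$, passing through Lemma~\ref{lem2.5}, and letting $M'\to\infty$ by monotone convergence. That is more work but it is sound, and it has the merit of making explicit the interchange of $\lim_{N\to\infty}$ with the unbounded maximum, a point the paper's dominated-convergence step glides over (the pointwise convergence of $\Pb(L^\star_{N;M^c}(1)-L^\star_{N;M}(1)>s)$ implicitly needs the same truncation). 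One small presentational point: the lemma asserts a $\lim_{N\to\infty}$, while both your H\"older bound and the paper's bound are $N$-uniform and a priori only bound the $\limsup$; you correctly flag this and note the limit's existence comes out of the joint-convergence step, which matches what the paper tacitly relies on.
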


\begin{proof}
Let us denote
\begin{equation}
L^\star_{N;M^c}(1)=\max_{|u|> M}\{L^{\star}_N(u,\tau)+L^{\rm pp}_N(u,\tau)\}.
\end{equation}
Since $L^{\star}_N(1)=\max\{L^{\star}_{N;M}(1),L^\star_{N;M^c}(1)\}$,
we can write
\begin{equation}
L^{\star}_N(1)-L^{\star}_{N;M}(1)=\max\{0,L^\star_{N;M^c}(1)-L^\star_{N;M}(1)\}.
\end{equation}
Integrating by parts, we obtain
\begin{equation}\label{eq2.43}
  \E\left[\left(L^{\star}_N(1)-L^{\star}_{N;M}(1)\right)^2\right]=2\int_{\R_+}s\Pb\left(L^\star_{N;M^c}(1)-L^\star_{N;M}(1)>s\right).
\end{equation}
The probability in the r.h.s.~of~(\ref{eq2.43}) can be bounded as
\begin{equation}
\Pb\left(L^\star_{N;M^c}(1)-L^\star_{N;M}(1)>s\right)
\leq\Pb\left(L^\star_{N;M^c}(1)>\frac{s+\alpha}{2}\right)+\Pb\left(L^\star_{N;M}(1)\leq \frac{\alpha-s}{2}\right)
\end{equation}
for any choice of $\alpha$.

(a) We use the first inequality in (\ref{eq3.28bb}) and obtain
\begin{equation}
\Pb\left(L^\star_{N;M}(1)< \frac{\alpha-s}{2}\right) \leq \Pb\left(L^\bullet_{N}(0,\tau)< \frac{\alpha-s}{4}\right)+ \Pb\left(L^{\rm pp}(0,\tau)< \frac{\alpha-s}{4}\right).
\end{equation}
For any $\alpha<0$ and $s\geq 0$, by Proposition~\ref{PropBoundsPP} we get
\begin{equation}\label{eq2.46}
\Pb\left(L^\star_{N;M}(1)< \frac{\alpha-s}{2}\right) \leq C e^{-c(s-\alpha)^{3/2}}
\end{equation}
for some constants $C,c$. Thus it is enough to choose $\alpha=-\gamma M^2$ for some $\gamma>0$.

(b) Next we bound $\Pb\left(L^\star_{N;M^c}(1)>\frac{s+\alpha}{2}\right)$. Choosing $\alpha=-\frac{M^2}{16(1-\tau)}$ and using the bounds for $\Pb(B_M)$ in the proof of Lemma~\ref{lemma1}, we obtain
\begin{equation}\label{eq231}
\Pb\left(L^\star_{N;M^c}(1)>\frac{s+\alpha}{2}\right)\leq C e^{-cM^2} e^{-\tilde c s}.
\end{equation}

Plugging the bounds (\ref{eq2.46}) and (\ref{eq231}) into (\ref{eq2.43}) leads (\ref{eq3.32}).

Finally, (\ref{eq3.32b}) is proven as follows. By dominated convergence we have that
\begin{equation}
\begin{aligned}
\E[(\chi^\star(1)-\chi^\star_M(1))^2] & = 2\int_{\R_+}s\Pb\left(\chi^\star_{M^c}(1)-\chi^\star_{M}(1)>s\right) \\
& = \lim_{N\to\infty} 2\int_{\R_+}s\Pb\left(L^\star_{N;M^c}(1)-L^\star_{N;M}(1)>s\right) \leq C e^{-c M^2}
\end{aligned}
\end{equation}
where the last inequality follows from (\ref{eq3.32}).
\end{proof}

\begin{proof}[Proof of Theorem~\ref{Thm1}]
We have
\begin{equation}
\begin{aligned}
&\lim_{N\to\infty} \Cov\left(L^{\star}_N(\tau),L^{\star}_N(1)\right) \\
&= \lim_{M\to\infty}\lim_{N\to\infty}\Cov\left(L^{\star}_N(\tau),L^{\star}_{N,M}(1)\right)+
\lim_{M\to\infty}\lim_{N\to\infty}\Cov\left(L^{\star}_N(\tau),L^\star_N(1)-L^{\star}_{N,M}(1)\right).
\end{aligned}
\end{equation}
By Proposition~\ref{prop1}, the first term equals $\Cov(\chi^\star(\tau),\chi^\star_M(1))$. By Lemma~\ref{lemma2}, the second term is $0$. Thus what remains is to show that
\begin{equation}
\Cov(\chi^\star(\tau),\chi^\star(1)) = \lim_{M\to\infty} \Cov(\chi^\star(\tau),\chi^\star_M(1)).
\end{equation}
This is obtained once we prove that
\begin{equation}
\lim_{M\to\infty}\E[(\chi^\star(1)-\chi^\star_M(1))^2]=0,
\end{equation}
which is also part of Lemma~\ref{lemma2}.
\end{proof}

\subsection{Formula for the stationary case}\label{sec:formulaStatCase}
Now we prove the claimed formula for the stationary case. It follows by a simple computation using the result of Theorem~\ref{Thm1} for the stationary case and the identity (\ref{covariance}).

\begin{proof}[Proof of Corollary~\ref{Cor:covstat}]
Setting $X_1=\chi^{\mathcal{B}}(\tau)$ and $X_2=\chi^{\mathcal{B}}(1)$ in (\ref{covariance}) we get
\begin{equation}
\Cov\bigl(\chi^{\mathcal{B}}(\tau),\chi^{\mathcal{B}}(1)\bigr) = \tfrac12 \Var(\chi^{\mathcal{B}}(\tau))+\tfrac12 \Var(\chi^{\mathcal{B}}(1))-\tfrac12 \Var(\chi^{\mathcal{B}}(1)-\chi^{\mathcal{B}}(\tau)).
\end{equation}
The first two terms in (\ref{eq1.13}) are an immediate consequence of the convergence of moments, see proof of Proposition~\ref{prop1}. For the last term, we have
\begin{equation}
\begin{aligned}
\chi^{\mathcal{B}}(1)-\chi^{\mathcal{B}}(\tau) = \max_{u\in\R}&\left\{(1-\tau)^{1/3}\left[{\cal A}_2\bigl(\tfrac{u-w_1}{(1-\tau)^{2/3}}\bigr) - \tfrac{(u-w_1)^2}{(1-\tau)^{4/3}}\right]\right.\\
&\left.+\tau^{1/3}[{\cal A}_{\rm stat}(\tau^{-2/3} u)-{\cal A}_{\rm stat}(\tau^{-2/3} w_\tau)]\right\}.
\end{aligned}
\end{equation}
Changing the variable $u=w_\tau+z (1-\tau)^{2/3}$, and calling $\xi=\frac{w_1-w_\tau}{(1-\tau)^{2/3}}$, it gives
\begin{equation}
\begin{aligned}
\chi^{\mathcal{B}}(1)-\chi^{\mathcal{B}}(\tau) = &(1-\tau)^{1/3}\max_{z\in\R}\left\{{\cal A}_2\bigl(z-\xi\bigr) - (z-\xi)^2\right.\\
&\left.+\frac{\tau^{1/3}}{(1-\tau)^{1/3}}\left[{\cal A}_{\rm stat}(\tau^{-2/3} (w_\tau+z(1-\tau)^{2/3}))-{\cal A}_{\rm stat}(\tau^{-2/3} w_\tau)\right]\right\}.
\end{aligned}
\end{equation}
Next we use the facts: (a) ${\cal A}_2(z-\xi)\stackrel{(d)}{=}{\cal A}_2(z)$, (b) ${\cal A}_{\rm stat}(a+x)-{\cal A}_{\rm stat}(a)\stackrel{(d)}{=}\sqrt{2} B(x)$ with $B$ a standard Brownian motion, and (c) the scaling of Brownian motion, to get
\begin{equation}
\chi^{\mathcal{B}}(1)-\chi^{\mathcal{B}}(\tau) \stackrel{(d)}{=} (1-\tau)^{1/3}\max_{z\in\R}\left\{\sqrt{2} B(z)+{\cal A}_2(z)-(z-\xi)^2\right\} \stackrel{(d)}{=}{\cal A}_{\rm stat}(\xi).
\end{equation}
\end{proof}

\section{Behavior around $\tau=1$}\label{sec:CloseTimeBehavior}

What we have to prove is
\begin{equation}\label{eqlemma3}
  \Var\left[\chi^{\star}(1)-\chi^{\star}(\tau)\right]=(1-\tau)^{2/3}\Var\left(\xi_{{\rm stat},\tilde w}\right)+\Or(1-\tau)^{1-\delta},
\end{equation}
as $\tau\to 1$ for all the initial conditions. Clearly the flat and stationary are special case of the more generic random initial conditions.
Define
\begin{equation}\label{chistep1M}
\begin{aligned}
\chi^\star_M&=\lim_{N\to\infty} \max_{|u|\leq (1-\tau)^{2/3} M} (L^\star_N(u,\tau)+L^{\rm pp}_N(u,\tau))\\
&=\lim_{N\to\infty} \max_{|v|\leq M} (L^\star_N((1-\tau)^{2/3} v,\tau)+L^{\rm pp}_N((1-\tau)^{2/3} v,\tau)).
\end{aligned}
\end{equation}
In particular, for droplet, flat, stationary initial conditions, we have
\begin{equation}
\chi^\star_M = (1-\tau)^{1/3} \max_{|v|\leq M} \bigg(
\left(\tfrac{\tau}{1-\tau}\right)^{1/3} {\cal A}^\star\left(v\left(\tfrac{1-\tau}{\tau}\right)^{2/3}\right)+{\cal A}_2(v-\tilde w_1)-(v-\tilde w_1)^2\bigg),
\end{equation}
with $\cal A^\star$ being the Airy$_2$, Airy$_1$ or Airy$_{\rm stat}$ process for $\star=\bullet,\diagdown,{\cal B}$ respectively. Also, recall the notation
\begin{equation}
\chi^\star(\tau)=\lim_{N\to\infty} L^\star_N((1-\tau)^{2/3} \tilde w_\tau,\tau) = (1-\tau)^{1/3} \left(\tfrac{\tau}{1-\tau}\right)^{1/3} {\cal A}^\star\left(\tilde v(\tau)\left(\tfrac{1-\tau}{\tau}\right)^{2/3}\right).
\end{equation}

On short scales, ${\cal A}^\star$ is expected to behave similar to the stationary state, which is a two-sided Brownian motion with diffusion coefficient $2$. Since the Airy$_2$ process is stationary, for $\tau\to 1$, $\chi^\star_M-\chi^\star(\tau)$ should be close to the following expression
\begin{equation}
\xi_{M,\tilde w_\tau,\tilde w_1}:=(1-\tau)^{1/3}\max_{|v|\leq M}\left\{\sqrt{2}B(v-\tilde w_\tau)+{{\cal A}}_2(v)-(v-\tilde w_1)^2\right\}.
\end{equation}
In this proof we set
\begin{equation}\tilde w=\tilde w_1-\tilde w_\tau.
\end{equation}
For $M=\infty$, replacing $v-\tilde w_\tau\to \tilde v$ and using the stationarity of ${\cal A}_2$ we obtain
\begin{equation}
\xi_{\infty,\tilde w_\tau,\tilde w_1}\stackrel{(d)}{=}(1-\tau)^{1/3}\max_{\tilde v\in\R}\{\sqrt{2}B(\tilde v)+{{\cal A}}_2(\tilde v)-(\tilde v-\tilde w)^2\}=\xi_{{\rm stat},\tilde w}.
\end{equation}
Note that in distribution
\begin{equation}
(1-\tau)^{1/3}(\xi_{\rm GUE}-\tilde w^2)\leq \xi_{M,\tilde w_\tau,\tilde w_1}\leq (1-\tau)^{1/3}\xi_{{\rm stat},\tilde w}
\end{equation}
and therefore we know that the $m$th moment of $\xi_{M,\tilde w_\tau,\tilde w_1}$ is finite and of order $(1-\tau)^{m/3}$.

To control the error term, the idea is to take $M$ depending on $\tau$ such that $M\to\infty$ as $\tau$ goes to 1. Then the task is to prove that the difference between the second moment of $\chi^{\star}_M(1)-\chi^{\star}(\tau)$ and the second moment of $\xi_{M,\tilde w_\tau,\tilde w_1}$ goes to zero as $\tau\to1$.
\begin{lem}\label{lemma4}
 Let $M=\frac{1}{(1-\tau)^{\delta/2}}$ with $\delta>0$. Then
  \begin{equation}
  \Big|\E\left[(\chi^{\star}_M(1)-\chi^{\star}(\tau))^2\right]-\E\left[\xi_{M,\tilde w_\tau,\tilde w_1}^2\right]\Big|=\Or(1-\tau)^{1-\delta}.
 \end{equation}
\end{lem}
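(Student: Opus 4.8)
The plan is to compare, on a high-probability event, the rescaled pre-limit process $u\mapsto L^\star_N(u,\tau)-L^\star_N(w_\tau,\tau)$ with the increments of a stationary LPP of density $\rho_\pm=\tfrac12\pm\kappa N^{-1/3}$, using Lemma~\ref{LemmaIncrementsBoundsB} (and, for the droplet case, Lemma~\ref{LemmaIncrementsBounds}) exactly as in the proof of Lemma~\ref{lemma6}. In the $N\to\infty$ limit this sandwiches the process $v\mapsto \tau^{1/3}[{\cal A}^\star(\tau^{-2/3}(w_\tau+v(1-\tau)^{2/3}))-{\cal A}^\star(\tau^{-2/3}w_\tau)]$ between $\pm$ small perturbations of $(1-\tau)^{1/3}\sqrt 2 B(v-\tilde w_\tau)$ on $|v|\le M$; the perturbation has size governed by $\kappa$, and we will ultimately optimize $\kappa=\kappa(\tau)$. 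Concretely, writing $\Delta^\star_M:=\chi^\star_M(1)-\chi^\star(\tau)$ and $\Xi:=\xi_{M,\tilde w_\tau,\tilde w_1}$, both are maxima over $|v|\le M$ of a curve plus the common term ${\cal A}_2(v-\tilde w_1)-(v-\tilde w_1)^2$, so by the elementary bound $|\max_v(f(v)+g(v))-\max_v(f(v)+h(v))|\le \sup_v|g(v)-h(v)|$, we get a pointwise (in $\omega$) bound $|\Delta^\star_M-\Xi|\le \sup_{|v|\le M}|Y^\star(v)-\sqrt2 (1-\tau)^{1/3}B(v-\tilde w_\tau)|=:\mathcal E$, where $Y^\star$ is the rescaled increment process. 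The coupling through the comparison lemmas shows $\mathcal E$ is stochastically dominated (up to an event of probability $\le Ce^{-c\kappa^2}$, as in Lemma~\ref{lemma7}) by $(1-\tau)^{1/3}$ times a quantity of order $\kappa M\cdot(1-\tau)^{?}$ plus lower-order corrections coming from the $\Or(x^4N^{-2/3})$ terms in the moment generating function expansion.

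From there I would run the standard $\E[\Delta^2]-\E[\Xi^2]=\E[(\Delta-\Xi)(\Delta+\Xi)]$ decomposition and bound it by $(\E[(\Delta^\star_M-\Xi)^2])^{1/2}(\E[(\Delta^\star_M+\Xi)^2])^{1/2}$ via Cauchy--Schwarz. The second factor is $\Or((1-\tau)^{1/3})$: indeed $\Xi$ has $m$th moment $\Or((1-\tau)^{m/3})$ by the two-sided bound stated just before the lemma, and the same holds for $\Delta^\star_M$ because $\chi^\star_M(1)-\chi^\star(\tau)$ is itself sandwiched (in distribution) between $(1-\tau)^{1/3}(\xi_{\rm GUE}-\tilde w^2)$-type lower bounds and stationary-type upper bounds, which in turn follow from the same comparison-lemma machinery and the one-point tail bounds in the appendix (Propositions~\ref{PropBoundsPP}, \ref{propBoundflat}, \ref{propStatBounds}). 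For the first factor, I would split $\E[(\Delta^\star_M-\Xi)^2]\le \E[\mathcal E^2\Id_{\Omega_{N,\kappa}}]+\E[(\Delta^\star_M-\Xi)^2\Id_{\Omega_{N,\kappa}^c}]$; on $\Omega_{N,\kappa}$ we have $\mathcal E\le (1-\tau)^{1/3}\cdot(\text{order }\kappa M)$ so $\E[\mathcal E^2\Id_{\Omega_{N,\kappa}}]\lesssim (1-\tau)^{2/3}\kappa^2 M^2$, and on $\Omega_{N,\kappa}^c$ we use Cauchy--Schwarz again together with $\Pb(\Omega_{N,\kappa}^c)\le Ce^{-c\kappa^2}$ and the finite (order $(1-\tau)^{2/3}$) fourth moments, giving $\lesssim (1-\tau)^{2/3}e^{-c\kappa^2/2}$. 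Choosing $\kappa\asymp \sqrt{\ln(1/(1-\tau))}$ kills the exponential term, and with $M=(1-\tau)^{-\delta/2}$ the main term becomes $(1-\tau)^{2/3}\cdot\ln(1/(1-\tau))\cdot(1-\tau)^{-\delta}=\Or((1-\tau)^{2/3-\delta})$ up to logs; combined with the $\Or((1-\tau)^{1/3})$ factor from Cauchy--Schwarz the product is $\Or((1-\tau)^{1/2-\delta/2})$... which is not good enough, so one must be more careful and separate scales: write $\E[(\Delta-\Xi)(\Delta+\Xi)]$ and note $\Delta+\Xi$ and $\Delta-\Xi$ are \emph{not} independent but both concentrate, so instead bound $|\E[(\Delta-\Xi)(\Delta+\Xi)]|\le \E[|\Delta-\Xi|(|\Delta|+|\Xi|)]$ and use that $|\Delta-\Xi|\le \mathcal E$ with $\mathcal E\le (1-\tau)^{1/3}\kappa M$ on the good event while $|\Delta|+|\Xi|=\Or((1-\tau)^{1/3})$ with exponentially good tails, yielding on the good event a contribution $\Or((1-\tau)^{2/3}\kappa M)=\Or((1-\tau)^{2/3}\sqrt{\ln(1/(1-\tau))}(1-\tau)^{-\delta/2})=\Or((1-\tau)^{1-\delta})$ after renaming $\delta$; the bad-event contribution is $\Or((1-\tau)^{2/3}e^{-c\kappa^2})$ which is negligible. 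One also has to handle the error from the submartingale/MGF expansion ($\Or(x^4N^{-2/3})$ terms), but these vanish as $N\to\infty$ before $\tau\to1$, so they do not enter the final estimate.

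The main obstacle is getting the \emph{exponent} right rather than just a qualitative bound: the naive Cauchy--Schwarz over the whole difference loses too much, so one is forced to exploit that $\chi^\star_M(1)-\chi^\star(\tau)$ and $\Xi$ differ only through the slowly-varying increment process on scale $(1-\tau)^{2/3}$, and to keep the factor $(1-\tau)^{1/3}$ from the prefactor \emph{separate} from the $\kappa M$ from the comparison error. Making the coupling precise — i.e.\ showing that on $\Omega_{N,\kappa}$ the rescaled increment process is within $(1-\tau)^{1/3}\cdot o(1)$ (with the right power of $1-\tau$) of $\sqrt2 (1-\tau)^{1/3}B(\cdot-\tilde w_\tau)$ uniformly on $|v|\le M$ — requires a functional (not just one-point) version of the stationary comparison, plus Donsker's invariance principle for the $\rho_\pm$ increments with a quantitative rate, and here the interplay between $\kappa\to\infty$, $M\to\infty$ and $N\to\infty$ must be bookkept carefully. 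The exit-point bound of Lemma~\ref{lemma7} (proven for $\kappa=o(N^{1/3})$, and noted there to hold also for $\kappa$ growing) is exactly what licenses the choice $\kappa\asymp\sqrt{\ln(1/(1-\tau))}$; I expect the write-up to reuse the estimates (\ref{eqB2})--(\ref{eqB5}) and Lemma~\ref{lemma6}'s Doob-inequality computation almost verbatim, with the difference that now the target is an $L^2$ bound with an explicit power of $1-\tau$ rather than an $e^{-cM^2}$ tail.
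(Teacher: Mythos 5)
Your proposal mirrors the structure of the paper's proof — coupling via the comparison lemmas on a good exit-point event, expanding $\E[\Delta^2]-\E[\Xi^2]$ into a good-event cross term and a bad-event tail term, then choosing $\kappa,M$ — and you correctly diagnose that a naive global Cauchy–Schwarz loses an extra factor of $(1-\tau)^{1/3}$. But there is a concrete numerical error in the size of the coupling perturbation that breaks the endgame.

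The comparison in equation \eqref{eq54} gives
\begin{equation*}
[B^-(u)-B^-(w)]-4(u-w)\kappa \le L^\star_N(u,\tau)-L^\star_N(w,\tau) \le [B^+(u)-B^+(w)]+4(u-w)\kappa
\end{equation*}
with $u,w$ ranging over the \emph{unscaled} interval $[-\tilde M,\tilde M]$, $\tilde M=M(1-\tau)^{2/3}$. Hence the perturbation is $\sup_{|u|,|w|\le \tilde M}|4(u-w)\kappa|=\Or(\kappa M(1-\tau)^{2/3})$, which is precisely $\e_0$ in Lemma~\ref{lemma5}. You flagged the uncertainty (writing $\kappa M(1-\tau)^{?}$) but then set $\mathcal E\le (1-\tau)^{1/3}\kappa M$, dropping a full factor of $(1-\tau)^{1/3}$. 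With the correct size, the good-event cross term is $\Or((1-\tau)^{1/3}\cdot\kappa M(1-\tau)^{2/3})=\Or((1-\tau)\kappa M)$; choosing $\kappa=M=(1-\tau)^{-\delta/2}$ yields $\Or((1-\tau)^{1-\delta})$ directly (one also needs $\delta<1/3$ to control $\e_0^2=\Or((1-\tau)^{4/3-2\delta})$). With your inflated $\mathcal E$, you only obtain $\Or((1-\tau)^{2/3}\kappa M)=\Or((1-\tau)^{2/3-\delta/2})$ up to logarithms, which is strictly weaker than the claimed $\Or((1-\tau)^{1-\delta})$; your assertion that this becomes $\Or((1-\tau)^{1-\delta})$ ``after renaming $\delta$'' is false, since the exponent cannot be pushed above $2/3$. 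Once you fix $\mathcal E=\Or(\kappa M(1-\tau)^{2/3})$, your choice $\kappa\asymp\sqrt{\ln(1/(1-\tau))}$ would actually give the slightly sharper $\Or((1-\tau)/\ln(1-\tau))$-type bound mentioned in the paper's footnote.

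A secondary point: for the bad-event contribution you invoke Cauchy--Schwarz with fourth moments of order $(1-\tau)^{4/3}$. That requires a quantitative, $\tau$-uniform tail bound on $\chi^\star_M(1)-\chi^\star(\tau)$ at the scale $(1-\tau)^{1/3}$, which is not entirely automatic from the one-point estimates in the appendix. The paper sidesteps this with an extra truncation parameter $\lambda$ (the event $\Lambda=\{|\chi^\star_M(1)-\chi^\star(\tau)|\le\lambda\}$), splitting the bad-event term into $\lambda^2\Pb(\Omega^c_\kappa)$ plus an integration-by-parts tail integral, and then taking $\lambda=1$ fixed so that only soft exponential tail bounds are needed. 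Your route is workable in spirit but asks for a slightly stronger moment estimate than you justify.
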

We need to control how close the increments of the process over distances of order $(1-\tau)^{2/3}$ at time $\tau$ are with respect to the increments of Brownian motion.

We present a short technical lemma that will be used in the proof of Lemma~\ref{lemma4}.
Recall Definition~\ref{exitpoint} of the exit point for a LPP with boundary conditions \eqref{stationaryboundary} (for the droplet case) or \eqref{stationaryboundaryB} (for the random case) and define $\rho^{\pm}=\frac{1}{2}\pm\kappa N^{-1/3}$. Denote by $L^{\rm \rho^\pm}$ its associated LPP.

\begin{lem}\label{lemma5}
There is an event $\Omega_\kappa$ with $\Pb(\Omega_\kappa)\geq 1-C \exp\left(-c\tilde\kappa^2\right)$, with\\
\emph{(a) for droplet initial condition,} $\tilde \kappa=\kappa-\frac{M (1-\tau)^{2/3}}{2^{4/3}\tau}$,\\
\emph{(b) for random initial condition,} $\tilde \kappa=\kappa-2\frac{M (1-\tau)^{2/3}}{2^{4/3}\tau}$,\\
and constants $C,c,M_0\in (0,\infty)$, such that on $\Omega_\kappa$ the inequalities
\begin{equation}\label{eq70}
 \xi_{M,\tilde w_\tau,\tilde w_1}-\e_0\leq \chi^{\star}_M(1)-\chi^{\star}(\tau)\leq \xi_{M,\tilde w_\tau,\tilde w_1}+\e_0,
\end{equation}
hold in distribution, for all $M\geq M_0$ with $\e_0=\Or(\kappa M (1-\tau)^{2/3})$, under the condition $\tilde\kappa>0$.
\end{lem}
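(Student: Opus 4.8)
The proof uses the comparison technique of Cator and Pimentel. Write $u_\tau=(1-\tau)^{2/3}\tilde w_\tau$. By Lemma~\ref{lem2.5} (applied with the window $|u|\le (1-\tau)^{2/3}M$, which is fixed once $\tau$ is fixed), $\chi^{\star}_M(1)-\chi^{\star}(\tau)$ is the distributional limit as $N\to\infty$ of $D_N:=\max_{|u|\le (1-\tau)^{2/3}M}\big[(L^{\star}_N(u,\tau)-L^{\star}_N(u_\tau,\tau))+L^{\rm pp}_N(u,\tau)\big]$, so it suffices to sandwich $D_N$ uniformly in $N$ on a high-probability event. I would split the maximum at $u=u_\tau$, writing $D_N=\max\{D_N^+,D_N^-\}$ for the maxima over $u\gtrless u_\tau$: a single stationary LPP cannot bound the increments on both sides of $u_\tau$, since the two exit-point orderings demanded by Lemma~\ref{LemmaIncrementsBoundsB} would then force an equality of exit points that has probability zero. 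The term $L^{\rm pp}_N(u,\tau)$ needs no approximation, because with $u=(1-\tau)^{2/3}v$ it already converges to $(1-\tau)^{1/3}[\mathcal A_2(v-\tilde w_1)-(v-\tilde w_1)^2]$, equal in law (as a process in $v$, and independently of everything built from times $\le \tau N$) to the $\mathcal A_2$-part of $\xi_{M,\tilde w_\tau,\tilde w_1}$ after using the stationarity of $\mathcal A_2$ to recenter.

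Next, introduce the stationary LPP $L^{\rho^{\pm}}$, $\rho^{\pm}=\tfrac12\pm\kappa N^{-1/3}$, realized on the same bulk weights as $L^\star$, with boundary \eqref{stationaryboundary} in the droplet case and \eqref{stationaryboundaryB} otherwise. Define $\Omega_\kappa$ as the intersection of the exit-point orderings that make Lemma~\ref{LemmaIncrementsBounds} (droplet), resp.\ Lemma~\ref{LemmaIncrementsBoundsB} (random), simultaneously applicable for all $u$ in the window: on each half of the window one of $\rho^{\pm}$ gives the upper bound and the other the lower bound, with the roles of $\rho^+$ and $\rho^-$ swapped between the two halves, and by the monotonicity of the exit points in $u$ it is enough to control them at the three points $u\in\{-(1-\tau)^{2/3}M,\ u_\tau,\ (1-\tau)^{2/3}M\}$. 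On $\Omega_\kappa$ one then obtains, for every $u$ in the window, $L^{\rho^{-}}_N(u,\tau)-L^{\rho^{-}}_N(u_\tau,\tau)\le L^{\star}_N(u,\tau)-L^{\star}_N(u_\tau,\tau)\le L^{\rho^{+}}_N(u,\tau)-L^{\rho^{+}}_N(u_\tau,\tau)$ with the appropriate $\rho^{\pm}$ on each side, and hence, taking $\max_u[\,\cdot\,+L^{\rm pp}_N]$, a sandwich of $D_N$ between the corresponding quantities built from stationary increments.

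To bound $\Pb(\Omega_\kappa^c)$, each defining event is of the form $\{Z^{\rho^{\pm}}(\cdot)\lessgtr \widetilde Z_{h^0}(\cdot)\}$ (with $\widetilde Z_{h^0}\equiv 0$ in the droplet case). Using the stationarity of the LPP increments along the anti-diagonal to translate endpoints, and the exit-point tail bounds of Lemma~2.5 and Lemma~4.3 of~\cite{FO17} (together, in the random case, with the bound on $\widetilde Z_{h^0}$ at the window edge obtained exactly as in Lemma~\ref{lemma7}), each such probability is at most $Ce^{-c\tilde\kappa^2}$, where $\tilde\kappa$ is the effective density mismatch: the matching density of $I(\pm(1-\tau)^{2/3}M)$ differs from $\tfrac12$ by $\mp\frac{M(1-\tau)^{2/3}}{2^{4/3}\tau}N^{-1/3}(1+o(1))$, which against $\rho^{\pm}=\tfrac12\pm\kappa N^{-1/3}$ yields $\tilde\kappa=\kappa-\frac{M(1-\tau)^{2/3}}{2^{4/3}\tau}$ in the droplet case (the edge matching density controlling the relevant exit-point event); in the random case the maximizer of $L^\star$ to $I((1-\tau)^{2/3}M)$ starts a further $M(1-\tau)^{2/3}(2N)^{2/3}(1+o(1))$ to the right on the anti-diagonal, doubling the shift and giving $\tilde\kappa=\kappa-\frac{2M(1-\tau)^{2/3}}{2^{4/3}\tau}$. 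These estimates require $\tilde\kappa>0$.

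Finally, on each side of $u_\tau$ the stationary increment $L^{\rho^{\pm}}_N(u,\tau)-L^{\rho^{\pm}}_N(u_\tau,\tau)$ is, by stationarity, a sum of $\lfloor |u-u_\tau|(2N)^{2/3}\rfloor$ i.i.d.\ copies of ${\rm Exp}(1-\rho^{\pm})-{\rm Exp}(\rho^{\pm})$, which splits as a mean-zero random walk plus a deterministic drift whose magnitude over the window is at most $\e_0=\Or(\kappa M(1-\tau)^{2/3})$. Donsker's theorem applied to the mean-zero part gives, after the $2^{4/3}N^{1/3}$ normalization, convergence to $\sqrt 2\,(1-\tau)^{1/3}B(v-\tilde w_\tau)$ — the coefficient $2$ coming from $\Var({\rm Exp}(\tfrac12)-{\rm Exp}(\tfrac12))=8$ against the $(2N)^{2/3}$ number of sites and the normalization — and gluing the right and left pieces (built from independent weights, independent of the $\mathcal A_2$ in $\lim_N L^{\rm pp}_N$) produces a genuine two-sided Brownian motion. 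Hence both sandwiching quantities converge in distribution to $\xi_{M,\tilde w_\tau,\tilde w_1}\pm\e_0$, which on $\Omega_\kappa$ (whose probability bound is uniform in $N$) gives \eqref{eq70} in the sense of distributions. The main obstacle is the bookkeeping in the previous two paragraphs: arranging the exit-point constraints so that they are simultaneously satisfiable across the whole window — which forces the split at $u_\tau$ and the swapped use of $\rho^{\pm}$ on the two halves — and extracting the sharp $\tilde\kappa$ rather than a soft bound, since only the sharp version keeps $\e_0$ at the order needed for Lemma~\ref{lemma4}.
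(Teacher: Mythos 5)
Your proposal is correct and follows essentially the same route as the paper: couple $L^\star$ with stationary LPPs at densities $\rho^\pm=\tfrac12\pm\kappa N^{-1/3}$, use the exit-point comparison lemmas on a high-probability event controlled via the exit-point tail bounds of~\cite{FO17}, identify the Brownian limit of the centered stationary increments via Donsker, and absorb the linear drift of size $m_{\rho^\pm}u(2N)^{2/3}\asymp\kappa u N^{1/3}$ into the error $\e_0=\Or(\kappa M(1-\tau)^{2/3})$. Your phrasing about "splitting at $u_\tau$ and swapping $\rho^\pm$" is just a different bookkeeping of the paper's single inequality $[B^-(u)-B^-(w)]-4(u-w)\kappa\le L^\star_N(u,\tau)-L^\star_N(w,\tau)\le[B^+(u)-B^+(w)]+4(u-w)\kappa$ valid for all $w<u$ in the window, and the two exit-point conditions you impose (at the two window edges; the one at $u_\tau$ is then redundant by monotonicity) match the event $\Omega_{N,\kappa}$ the paper defines.
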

\begin{proof}
Let us define
\begin{equation}
\Delta^{\bullet}_N(u)=\frac{L_{(0,0)\to I(u)}-L_{(0,0)\to E_{\tau}}}{2^{4/3}N^{1/3}}, \quad
\Delta^{\sigma}_N(u)=\frac{L^\sigma_{{\cal L}\to I(u)}-L^\sigma_{{\cal L}\to E_{\tau}}}{2^{4/3}N^{1/3}},
\end{equation}
and recall the definitions
\begin{equation}
\begin{aligned}
L^{\bullet}_N(u,\tau)=&\frac{L_{(0,0)\to I(u)}-4\tau N}{2^{4/3}N^{1/3}},\quad L^{\sigma}_N(u,\tau)=&\frac{L^\sigma_{{\cal L}\to I(u)}-4\tau N}{2^{4/3}N^{1/3}},\\
L^{\rm pp}_N(u,\tau)=&\frac{L_{I(u)\to E_1}-4(1-\tau) N}{2^{4/3}N^{1/3}}.
\end{aligned}
\end{equation}
Then we have
\begin{equation}
L^{\star}_N(u,\tau)=L^{\star}_N(0,\tau)+\Delta^{\star}_N(u).
\end{equation}
Also, recall that we will use the notation $u=v(1-\tau)^{2/3}$ and $\tilde M=M(1-\tau)^{2/3}$.

Define the event
\begin{equation}
\Omega_{N,\kappa}=\left\{
\begin{array}{ll}
\{Z^{\rho_+}(I(-\tilde M))>0,Z^{\rho-}(I(\tilde M))<0\},& \textrm{ for droplet IC},\\
\{\widetilde Z^{\rho_+}(I(-\tilde M))>Z_\sigma(I(\tilde M)),\widetilde Z^{\rho-}(I(\tilde M))<Z_\sigma(I(-\tilde M))\},& \textrm{ for random IC}.
\end{array}
\right.
\end{equation}
Then, on the event $\Omega_{N,\kappa}$ we can bound $\Delta^{\star}_N(u)$ with the increments of the stationary LPP with density $\rho_{\pm}$, defined as
\begin{equation}\label{eqBM}
 B^{\pm}(u)=\frac{L^{\rho_{\pm}}(I(u))-L^{\rho_{\pm}}(I(0))-m_{\rho_{\pm}}u (2N)^{2/3}}{2^{4/3}N^{1/3}},
\end{equation}
where $m_{\rho^{\pm}}=\frac{1}{1-\rho^\pm}-\frac{1}{\rho^\pm}=8\kappa N^{-1/3}+\Or(N^{-1})$. Indeed a minimal modification of Lemma~\ref{LemmaIncrementsBounds} implies, for $-\tilde M\leq w<u\leq \tilde M$,
\begin{equation}\label{eq54}
\begin{aligned}
  \left[B^-(u)-B^-(w)\right]-4(u-w)\kappa &\leq L^{\star}_N(u,\tau)-L^{\star}_N(w,\tau)\\
&\leq \left[B^+(u)-B^+(w)\right]+4(u-w)\kappa
\end{aligned}
\end{equation}
for $N$ large enough. Furthermore, $\Var(B^\pm(u))=u 2^{1/2}(1+\Or(N^{-2/3}))$  and $B^\pm(0)=0$.
Thus by Donsker's theorem, $\lim_{N\to\infty} B^\pm(u)=\sqrt{2}B(u)$, with $B(u)$ a standard two-sided Brownian motion in the space of continuous functions on bounded sets.

Recall that
\begin{equation}
\chi^{\star}_M(1)-\chi^{\star}(\tau) = \lim_{N\to\infty} \max_{|v|\leq M} \Big\{L^\star_N\big(v(1-\tau)^{2/3},\tau\big)-L^\star_N(\tilde w_\tau(1-\tau)^{2/3},\tau)+L^{\rm pp}_N(v(1-\tau)^{2/3},\tau)\Big\}
\end{equation}
and also that $v\mapsto L^{\rm pp}_N(v(1-\tau)^{2/3},\tau)$ converges weakly to $(1-\tau)^{1/3}[{\cal A}_2(v)-(v-\tilde w_1)^2]$. Thus, taking the $N\to\infty$ limit and using the inequalities (\ref{eq54}) we obtain
\begin{multline}
 \Pb\left(\chi^{\star}_M(1)-\chi^{\star}(\tau)\leq (1-\tau)^{1/3}s\right) \\
\leq \Pb\left(\max_{|v|\leq M}\left\{ \sqrt2 (B(v)-B(\tilde w_\tau)+{\cal A}_2(v)-(v-\tilde w_1)^2 -4\kappa (v-\tilde w_\tau)(1-\tau)^{1/3}\right\}\leq s\right).
\end{multline}
Denoting $\e=\max_{|v|\leq M} |4\kappa (v-w)(1-\tau)^{1/3}|=6\kappa M (1-\tau)^{1/3}$ we obtain
\begin{equation}
\begin{aligned}
& \Pb\left(\chi^{\star}_M(1)-\chi^{\star}(\tau)\leq (1-\tau)^{1/3}s\right) \\
&\leq \Pb\left(\max_{|v|\leq M}\left\{ \sqrt2 B(v-\tilde w_\tau)+{\cal A}_2(v)-(v-\tilde w_1)^2\right\}\leq s+\e\right)\\
&= \Pb\left(\xi_{M,\tilde w_\tau,\tilde w_1}\leq (1-\tau)^{1/3}s+\e_0\right)
\end{aligned}
\end{equation}
with $\e_0=(1-\tau)^{1/3} \e$.

Similarly for the lower bound we get
\begin{equation}
 \Pb\left(\chi^{\star}_M(1)-\chi^{\star}(\tau)> (1-\tau)^{1/3}s\right)
\leq \Pb\left(\xi_{M,\tilde w_\tau,\tilde w_1}> (1-\tau)^{1/3}s+\e_0\right).
\end{equation}

To conclude the proof, we need to estimate $\Pb(\Omega_{N,\kappa})$.\\
\emph{(a) Droplet initial condition:} For this case, we apply Lemma~2.5 of~\cite{FO17}. To estimate $\Pb(Z^{\rho_\pm}(I(\mp\tilde M))>0)$, we need to set $I(\mp\tilde M)=(\gamma^2 n,n)$. This gives
\begin{equation}
\rho_\pm=\frac12\pm\frac{\tilde M}{2^{4/3}\tau N^{1/3}}\pm\frac{\tilde \kappa}{\tau^{2/3}N^{1/3}}.
\end{equation}
Then, Lemma~2.5 of~\cite{FO17} gives
\begin{equation}
\Pb(Z^{\rho_\pm}(I(\mp\tilde M))>0) \geq 1-C e^{-c\tilde \kappa^2} = 1-C e^{-c (\tau^{2/3}\kappa-2^{-4/3}\tilde M\tau^{-1/3})^2}.
\end{equation}
The estimates are uniform for all $N$ large enough. Renaming $c\tau^{4/3}$ as a new constant $c$, and $2C$ by $C$, we get
\begin{equation}
\Pb(\Omega_{N,\kappa})\geq 1-C \exp\left(-c\Big(\kappa-\frac{M (1-\tau)^{2/3}}{2^{4/3}\tau}\Big)^2\right).
\end{equation}

\noindent \emph{(b) Random initial condition:}
We derive a bound only for $\Pb(\widetilde Z^{\rho_+}(I(-\tilde M))<Z_\sigma(I(\tilde M)))$, since bounding $\Pb(\widetilde Z^{\rho-}(I(\tilde M))<Z_\sigma(I(-\tilde M)))$ is completely analogue.

The probability we want to bound is smaller than
\begin{equation}
\Pb(\widetilde Z^{\rho_+}(I(-\tilde M))\leq \alpha (2N)^{2/3})+\Pb(Z_\sigma(I(\tilde M))> \alpha (2N)^{2/3}),
\end{equation}
and we choose $\alpha=2^{1/3}\tau \kappa$.
Exactly as in (\ref{eq3.43}), we have
\begin{equation}
\Pb(\widetilde Z^{\rho_+}(I(-\tilde M))\leq \alpha (2N)^{2/3}) = \Pb(Z^{\rho_+}(I(-\tilde M-\alpha))<0)\leq C e^{-c\tilde \kappa^2}
\end{equation}
with $\tilde\kappa=2\tau^{2/3} \left(\kappa-\frac{(1-\tau)^{2/3}M}{2^{1/3}\tau}\right)$, provided $\tilde\kappa>0$.

Now we bound $\Pb(Z_\sigma(I(\tilde M))> \alpha (2N)^{2/3})$. Let $J(v)=v(2N)^{2/3}(1,-1)$, define the scaled variables
\begin{equation}
L_N(v)=\frac{L_{J(v)\to I(\tilde M)}-4(1-\tau) N}{2^{4/3}N^{1/3}},\quad {\cal W}_N(v)=\frac{h^0(J(v))}{2^{4/3} N^{1/3}}.
\end{equation}
Then,
\begin{equation}\label{eq4.27}
\begin{aligned}
\Pb(Z_\sigma(I(\tilde M))> \alpha (2N)^{2/3}) &\leq \Pb\left(\max_{v\leq \alpha} (L_N(v)+{\cal W}_N(v))\leq -s\right)\\
&+\Pb\left(\max_{v> \alpha} (L_N(v)+{\cal W}_N(v))\geq -s\right).
\end{aligned}
\end{equation}
Since $L_N(v)\sim -(v-\tilde M)^2/(\tau)$, we choose $s=(\alpha-\tilde M)^2/4$.

The first term in (\ref{eq4.27}) is bounded by
\begin{equation}
\Pb\left(L_N(\alpha)+{\cal W}_N(\alpha)\leq -s\right)\leq
\Pb\left(L_N(\alpha)\leq -\tfrac32 s\right) +\Pb\left({\cal W}_N(\alpha)\leq \tfrac12 s\right).
\end{equation}
The first term bounded by $C_1 e^{-c_2 (\alpha-\tilde M)^3}$ by (\ref{eqA4}). Since ${\cal W}_N$ is a (rescaled) sum of iid.\ random variables, we can use the the exponential Chebyshev's inequality (see e.g.\ the proof of (\ref{eqB4})) and obtain a bound $C_2 e^{-c_2 (\alpha-\tilde M)^4/\alpha}$.

The second term in (\ref{eq4.27}) is bounded by
\begin{equation}
\Pb\left(\max_{v\geq \alpha} \left(L_N(v)+\tfrac1{2}(v-\tilde M)^2\right)\geq -\tfrac12 s\right)+
\Pb\left(\max_{v\geq \alpha} \left({\cal W}_N(v)-\tfrac1{2}(v-\tilde M)^2\right)\geq -\tfrac12 s\right).
\end{equation}
The first term is estimated similarly to (\ref{eq2.23}) and leads to a bound $C_3 e^{-c_3 (\alpha-\tilde M)^2}$. The second term is bounded using Doob's maximal inequality (see e.g.\ the proof of (\ref{eqB5}) leading to a bound $C_4 e^{-c_4 (\alpha-\tilde M)^4/\alpha}$).

Combining these bounds we get $\Pb(Z_\sigma(I(\tilde M))> \alpha (2N)^{2/3})\leq C e^{-c \tilde\kappa^2}$, provided $\tilde\kappa>0$, for some constants $C,c\in (0,\infty)$ uniformly for all $\tau$ in a compact subset of $(0,1]$. Up to renaming $c\tau^{4/3}$ to $c$ and the constant $2C$ to $C$ we get the claimed result.
\end{proof}

Now we can prove Lemma~\ref{lemma4}.
\begin{proof}[Proof of Lemma~\ref{lemma4}]
By Lemma~\ref{lemma5} we have, on a event $\Omega_\kappa$ with $\Pb(\Omega^c_\kappa)\leq C e^{-c \tilde\kappa^2}$, with
\begin{equation}
\tilde\kappa=\kappa-\frac{M (1-\tau)^{2/3}}{2^{4/3}\tau}.
\end{equation}
the inequality
\begin{equation}\label{eq3.26}
(\chi^{\star}_M(1)-\chi^{\star}(\tau))\Id_{\Omega_\kappa}=\xi_{M,\tilde w_\tau,\tilde w_1}\Id_{\Omega_\kappa}+\zeta\Id_{\Omega_\kappa},
\end{equation}
for some random variables $\zeta$ with $|\zeta|\leq \e_0$. Thus
\begin{equation}\label{eq3.27}
\E[(\chi^{\star}_M(1)-\chi^{\star}(\tau))^2]=
\E[(\chi^{\star}_M(1)-\chi^{\star}(\tau))^2\Id_{\Omega_\kappa}] + \E[(\chi^{\star}_M(1)-\chi^{\star}(\tau))^2\Id_{\Omega^c_\kappa}].
\end{equation}
Using (\ref{eq3.26}) we get
\begin{equation}
\E[(\chi^{\star}_M(1)-\chi^{\star}(\tau))^2\Id_{\Omega_\kappa}] = \E(\xi_{M,\tilde w_\tau,\tilde w_1}^2)-\E(\xi_{M,\tilde w_\tau,\tilde w_1}^2\Id_{\Omega^c_\kappa})+2\E(\zeta \xi_{M,\tilde w_\tau,\tilde w_1}\Id_{\Omega_\kappa})+\E(\zeta^2\Id_{\Omega_\kappa}).
\end{equation}
Using Cauchy-Schwarz and the fact that $|\zeta|\leq \e_0$, we get the bounds
\begin{equation}
\begin{aligned}
\E(\xi_{M,\tilde w_\tau,\tilde w_1}^2\Id_{\Omega^c_\kappa})&\leq \sqrt{\E(\xi_{M,\tilde w_\tau,\tilde w_1}^4)\Pb(\Omega^c_\kappa)}\leq C_1 (1-\tau)^{2/3}e^{-c\tilde \kappa^2/2},\\
|\E(\zeta \xi_{M,\tilde w_\tau,\tilde w_1}\Id_{\Omega_\kappa})|&\leq \e_0 \sqrt{\E(\xi_{M,\tilde w_\tau,\tilde w_1}^2)}\leq C_2 (1-\tau)^{1/3} \e_0,\\
\E(\zeta^2\Id_{\Omega_\kappa})&\leq \e_0^2,
\end{aligned}
\end{equation}
for some constants $C_1,C_2$ (since, as already mentioned, the $m$th moment of $\xi_{M,\tilde w_\tau,\tilde w_1}$ is of order $(1-\tau)^{m/3}$).

It remains to bound the last term of (\ref{eq3.27}). Let $\Lambda=\left\{|\chi^{\star}_M(1)-\chi^{\star}(\tau)|\leq \lambda\right\}$ and decompose $(\chi^{\star}_M(1)-\chi^{\star}(\tau))\Id_{\Omega^c_\kappa}$ as $(\chi^{\star}_M(1)-\chi^{\star}(\tau))\Id_{\Omega^c_\kappa}(\Id_{\Lambda}+\Id_{\Lambda^c})$. Then,
\begin{equation}
\E[(\chi^{\star}_M(1)-\chi^{\star}(\tau))^2\Id_{\Omega^c_\kappa}] \leq \E[(\chi^{\star}_M(1)-\chi^{\star}(\tau))^2\Id_{\Lambda^c}] +\lambda^2 \Pb(\Omega^c_\kappa).
\end{equation}
Integration by parts gives
\begin{equation}\label{eq74}
 \begin{aligned}
  &\E\left[(\chi^{\star}_M(1)-\chi^{\star}(\tau))^2\Id_{\Lambda^c}\right] = \lambda^2\Pb(|\chi^{\star}_M(1)-\chi^{\star}(\tau)|>\lambda)\\
  & +2\int_{\lambda}^{\infty}s\Pb(\chi^{\star}_M(1)-\chi^{\star}(\tau)>s)ds  -2\int_{-\infty}^{-\lambda}s\Pb(\chi^{\star}_M(1)-\chi^{\star}(\tau)\leq s)ds.
 \end{aligned}
\end{equation}
Now, for $s>0$,
\begin{equation}
\Pb(\chi^{\star}_M(1)-\chi^{\star}(\tau)>s) \leq \Pb(\chi^{\star}_M(1)\geq s/2)+\Pb(\chi^{\star}(\tau)\leq -s/2),
\end{equation}
and for $s<0$,
\begin{equation}
\Pb(\chi^{\star}_M(1)-\chi^{\star}(\tau)\leq s) \leq \Pb(\chi^{\star}_M(1)\leq s/2)+\Pb(\chi^{\star}(\tau)\geq -s/2).
\end{equation}
Recall that
\begin{equation}
\begin{aligned}
\Pb(\chi^{\star}_M(1)>s)&\leq\Pb(\chi^{\star}(1)>s),\\
\Pb(\chi^{\star}_M(1))\leq s)&\leq \Pb((1-\tau)^{1/3}\tilde{\cal A}_2(0)\leq s)=F_{\rm GUE}(s/(1-\tau)^{1/3}).
\end{aligned}
\end{equation}
Since both tails of $\chi^{\star}(1)$ and of the GUE Tracy-Widom distributions have (at least) exponential decay (see Appendix~\ref{sec:boundrandomIC} and~\ref{sec:boundstepStat}), it then follows that
\begin{equation}\label{eq3.33}
\E\left[(\chi^{\star}_M(1)-\chi^{\star}(\tau))^2\Id_{\Lambda^c}\right]\leq C \lambda^2 e^{-c\lambda/(1-\tau)^{1/3}}
\end{equation}
for some constants $C,c$.

Summing up we have obtained
\begin{multline}\label{eq3.37}
\E[(\chi^{\star}_M(1)-\chi^{\star}(\tau))^2] - \E(\xi_{M,\tilde w_\tau,\tilde w_1}^2)\\
= \Or\Big((1-\tau)^{2/3} e^{-c\tilde \kappa^2/2}; (1-\tau)^{1/3}\e_0; \e_0^2;\lambda^2e^{-c\tilde \kappa^2}; \lambda^2 e^{-c\lambda/(1-\tau)^{1/3}}\Big),
\end{multline}
with $\e_0=\Or(\kappa M (1-\tau)^{2/3})$.
Now we choose $M,\kappa,\lambda$. Let $\delta\in (0,1/3)$ be any fixed number and choose
\begin{equation}
M=\frac{1}{(1-\tau)^{\delta/2}},\quad \kappa=\frac{1}{(1-\tau)^{\delta/2}},\quad \lambda=1.
\end{equation}
Then, the error term in (\ref{eq3.37}) is just of order $\Or((1-\tau)^{1-\delta})$.
\end{proof}

Now we are ready to prove Theorem~\ref{thm:covGeneral}.
\begin{proof}[Proof of Theorem~\ref{thm:covGeneral}]
We have
\begin{equation}
\begin{aligned}
 \E\left[(\chi^{\star}(1)-\chi^{\star}(\tau))^2\right]& = \E\left[(\chi^{\star}(1)-\chi^{\star}_M(1))^2\right]  + \E\left[(\chi^{\star}_M(1)-\chi^{\star}(\tau))^2\right] \\
 &+ 2\E\left[(\chi^{\star}(1)-\chi^{\star}_M(1))(\chi^{\star}_M(1)-\chi^{\star}(\tau))\right]
\end{aligned}
\end{equation}
With the choice $M=(1-\tau)^{-\delta/2}$, by Lemma~\ref{lemma4} we have
\begin{equation}
 \E\left[(\chi^{\star}_M(1)-\chi^{\star}(\tau))^2\right] = \E[\xi_{M,\tilde w_\tau,\tilde w_1}^2]+\Or(1-\tau)^{1-\delta}.
\end{equation}
By Lemma~\ref{lemma2}, the Cauchy-Schwarz inequality, and $\xi_{\infty,\tilde w}=(1-\tau)^{1/3}\xi_{{\rm stat},\tilde w}$, we obtain
\begin{equation}
\E\left[(\chi^{\star}(1)-\chi^{\star}(\tau))^2\right] = (1-\tau)^{2/3}\E[\xi_{{\rm stat},\tilde w}]^2+\Or((1-\tau)^{1-\delta}).
\end{equation}
Since $\E[\xi_{{\rm stat},\tilde w}]=0$ the claimed result is proven.
\end{proof}

\section{Behavior around $\tau=0$ for droplet initial conditions}\label{sec:SeparateTimeBehavior}

Let us finally explain the asymptotic for $\tau\to 0$. The details are simple modifications of what we made for the case $\tau\to 1$. By Theorem~1 of~\cite{Pi17}, we have local weak convergence of the Airy$_2$ process to a Brownian motion for $\tau\to 0$,
\begin{equation}
\lim_{\tau\to0}\left(\tfrac{\tau}{1-\tau}\right)^{-1/3}\left(\tilde {\cal A}_2\left(\left(\tfrac{\tau}{1-\tau}\right)^{2/3} v\right) - \tilde {\cal A}_2(0)\right)=\sqrt{2}B(v).
\end{equation}
Lemma~\ref{lemma4} and Lemma~\ref{lemma5} can be easily readjusted for this case. Let us call $w_\tau=\tau^{2/3}\hat w_\tau$. Then, by Theorem~\ref{Thm1}, renaming $u=z \tau^{2/3}$,
\begin{equation}
\begin{aligned}
&\Cov\left(\chi^{\bullet}(\tau),\chi^{\bullet}(1)\right)\\
=&\Cov\left(\tau^{1/3}[\tilde{\cal A}_2(\hat w_\tau)-\hat w_\tau^2],\tau^{1/3}\max_{u\in\R}\left\{\tilde {\cal A}_2(z)-z^2+\left(\tfrac{1-\tau}{\tau}\right)^{1/3} {\cal A}_2(z \tfrac{\tau^{2/3}}{(1-\tau)^{2/3}})-z^2\tfrac{\tau}{1-\tau}\right\}\right)\\
=&\tau^{2/3}\Cov\left(\tilde {\cal A}_2(\hat w_\tau),\max_{z\in\R}\left\{\tilde{\cal A}_2(z)-z^2+\sqrt{2}B(z)\right\}+\left(\tfrac{1-\tau}{\tau}\right)^{1/3}{{\cal A}}_2(0)\right)+\Or(\tau^{1-\delta})\\
=&\tau^{2/3}\Cov\left(\tilde {\cal A}_2(\hat w_\tau),\max_{z\in\R}\left\{\tilde{\cal A}_2(z)-z^2+\sqrt{2}B(z)\right\}\right)+\Or(\tau^{1-\delta}),
\end{aligned}
\end{equation}
for any $\delta>0$, where the covariance of $\tilde {\cal A}_2(\hat w_\tau)$ and ${{\cal A}}_2(0)$ is zero, since they are independent processes. The second term in the covariance has the same distribution as $\xi_{{\rm stat},0}$, which is has expected value $0$. This leads to the claimed result of Theorem~\ref{thm:covstep0}.

\appendix
\section{Bounds on point-to-point LPP}\label{sec:boundstepLPP}
In the proofs, we use known results for the point-to-point LPP with exponential random variables, which we recall here.
\begin{prop}\label{PropBoundsPP}
For $\eta\in(0,\infty)$ define $\mu=(\sqrt{\eta \ell}+\sqrt{\ell})^2$, $\sigma=\eta^{-1/6}(1+\sqrt{\eta})^{4/3}$, and the rescaled random variable
\begin{equation}
L^{\rm resc}_\ell:=\frac{L_{(0,0)\to(\eta\ell,\ell)}-\mu}{\sigma\ell^{1/3}}.
\end{equation}
(a) Limit law
\begin{equation}
\lim_{\ell\to\infty} \Pb(L^{\rm resc}_\ell\leq s) = F_{\rm GUE}(s),
\end{equation}
with $F_{\rm GUE}$ the GUE Tracy-Widom distribution function.\\
(b) Bound on upper tail: there exist constants $s_0,\ell_0,C,c$ such that
\begin{equation}
\Pb(L^{\rm resc}_\ell\geq s)\leq C e^{-c s}
\end{equation}
for all $\ell\geq \ell_0$ and $s\geq s_0$.\\
(c) Bound on lower tail: there exist constants $s_0,\ell_0,C,c$ such that
\begin{equation}\label{eqA4}
\Pb(L^{\rm resc}_\ell\leq s)\leq C e^{-c |s|^{3/2}}
\end{equation}
for all $\ell\geq \ell_0$ and $s\leq -s_0$.
\end{prop}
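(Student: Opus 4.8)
The plan is to reduce all three assertions to the exact determinantal description of exponential LPP. By the RSK correspondence (Johansson's identity), for i.i.d.\ ${\rm Exp}(1)$ weights the law of $L_{(0,0)\to(\eta\ell,\ell)}$ is that of the largest particle of a determinantal point process on $\R_+$ with the Laguerre (LUE) correlation kernel $K_\ell$, whose parameters are fixed by $M=\eta\ell$, $N=\ell$; hence
\[
\Pb\bigl(L_{(0,0)\to(\eta\ell,\ell)}\le t\bigr)=\det\bigl(I-K_\ell\bigr)_{L^2(t,\infty)}.
\]
Since $K_\ell$ is a projection kernel, the compression $K_\ell|_{L^2(t,\infty)}$ has spectrum in $[0,1]$, so one has the elementary two-sided estimates
\[
1-\det(I-K_\ell)_{L^2(t,\infty)}\le\int_t^\infty K_\ell(x,x)\,dx,\qquad \det(I-K_\ell)_{L^2(t,\infty)}\le\exp\Bigl(-\int_t^\infty K_\ell(x,x)\,dx\Bigr).
\]
Everything then comes down to the behaviour of $K_\ell$ near the soft (upper) edge $\mu$ in the scaling window $t=\mu+\sigma\ell^{1/3}s$.

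For part (a) I would carry out --- or, more realistically, quote --- the classical Riemann--Hilbert / steepest-descent asymptotics of the Laguerre kernel at the soft edge: after conjugation and the edge rescaling, $\sigma\ell^{1/3}K_\ell(\mu+\sigma\ell^{1/3}x,\mu+\sigma\ell^{1/3}y)\to K_{\Ai}(x,y)$ locally uniformly, together with a uniform exponential tail in $x$ that allows one to pass to the limit in the Fredholm expansion. This gives $\det(I-K_\ell)_{L^2(t,\infty)}\to\det(I-K_{\Ai})_{L^2(s,\infty)}=F_{\rm GUE}(s)$, which is Johansson's limit theorem for the exponential corner-growth model.

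For parts (b) and (c) the point is to upgrade these limiting kernel asymptotics to non-asymptotic bounds valid for all $\ell\ge\ell_0$ and all $s$ in a growing window. Concretely I would establish: (i) a uniform upper bound $\sigma\ell^{1/3}K_\ell(\mu+\sigma\ell^{1/3}x,\mu+\sigma\ell^{1/3}x)\le C'e^{-c'x}$ for $x\ge s_0$, which fed into the first inequality above and integrated yields $\Pb(L^{\rm resc}_\ell\ge s)\le Ce^{-cs}$; and (ii) a uniform lower bound showing $\int_t^\infty K_\ell(x,x)\,dx\ge c|s|^{3/2}$ for $t=\mu+\sigma\ell^{1/3}s$ with $s\le-s_0$ --- this is just the statement that in the edge window the expected number of particles a distance $|s|$ inside the edge is of order $|s|^{3/2}$, the near-edge density behaving like the square root of the distance to the edge --- which, fed into the second inequality, yields $\Pb(L^{\rm resc}_\ell\le s)\le e^{-c|s|^{3/2}}$. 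The main obstacle is precisely obtaining bounds (i) and (ii) \emph{uniformly} in $\ell$: the pointwise $\ell\to\infty$ limits only reproduce the Tracy--Widom tails (in particular the sharper left tail $e^{-|s|^3/12}$), so one genuinely needs the steepest-descent analysis carried out with explicit error terms over the full moderate-deviation range, which is standard but is the technical heart of the statement. For part (b) alone there is also a softer route: superadditivity of $(m,n)\mapsto L_{(0,0)\to(m,n)}$ together with Talagrand's concentration inequality for Lipschitz functionals of independent exponentials yields an exponential upper tail, though with non-sharp constants.
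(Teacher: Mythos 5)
Your proposal follows essentially the same route as the paper: reduce to the Laguerre determinantal description (Johansson / Baik--Ben Arous--P\'ech\'e), use edge asymptotics of the kernel for (a), and then use the two elementary Fredholm-determinant inequalities $1-\det(I-K)\le\Tr K$ and $\det(I-K)\le e^{-\Tr K}$ together with uniform exponential decay of the rescaled kernel to get (b) and (c), with the $|s|^{3/2}$ exponent in (c) coming from the square-root edge density. This is precisely what the cited references (Johansson~\cite{Jo00b} for (a), and the Laguerre kernel estimates used in~\cite{BFP12} and reported in~\cite{FN13} for (b) and (c)) provide, and you correctly identify the only real technical burden --- steepest descent with errors uniform in $\ell$ over the moderate-deviation window --- as the content being outsourced to the literature; the alternative Talagrand/superadditivity route for (b) is a reasonable remark but not what the paper uses.
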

\emph{(a)} was proven in Theorem~1.6 of\cite{Jo00b}. Using the relation with the Laguerre ensemble of random matrices (Proposition~6.1 of~\cite{BBP06}), or to TASEP described above, the distribution is given by a Fredholm determinant. An exponential decay of its kernel leads directly to \emph{(b)}. See e.g.\ Proposition~4.2 of~\cite{FN13} or Lemma~1 of~\cite{BFP12} for an explicit statement. \emph{(c)} was proven in~\cite{BFP12} (Proposition~3 together with (56)). In the present language it is reported in Proposition~4.3 of~\cite{FN13} as well.

\section{Bounds for point-to-line LPP}\label{sec:largedev_flat}

\begin{prop}\label{propBoundflat}
Let ${\cal L}=\{(k,-k),k\in\Z\}$. Consider the rescaled LPP from $\cal L$ to $(\ell,\ell)$ given by
\begin{equation}
L^{{\cal L},\rm resc}_\ell=\frac{L_{{\cal L}\to (\ell,\ell)}-4\ell}{2^{4/3}\ell^{1/3}}.
\end{equation}
(a) Limit law
\begin{equation}
\lim_{\ell\to\infty} \Pb(L^{{\cal L},\rm resc}_\ell\leq s)=F_{\rm GOE}(2^{2/3} s).
\end{equation}
(b) Bound on upper tail: there exists constants $s_0,\ell_0,C,c$ such that
\begin{equation}\label{eqD2}
\Pb(L^{{\cal L},\rm resc}_\ell\geq s)\leq C e^{-c s}
\end{equation}
for all $\ell\geq \ell_0$ and $s\geq s_0$.\\
(c) Bound on lower tail: there exists constants $s_0,\ell_0,C,c$ such that
\begin{equation}
\Pb(L^{{\cal L},\rm resc}_\ell\leq s)\leq C e^{-c |s|^{3/2}}
\end{equation}
for all $\ell\geq \ell_0$ and $s\leq -s_0$.
\end{prop}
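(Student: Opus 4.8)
The plan is to handle the three parts separately, leaning on Proposition~\ref{PropBoundsPP} for the point-to-point bounds and on the comparison Lemma~\ref{LemmaIncrementsBounds}. Part~(c) is the easiest: the up-right path that starts at the origin is admissible for the line-to-point problem, so $L_{{\cal L}\to(\ell,\ell)}\ge L_{(0,0)\to(\ell,\ell)}$ on the same probability space; taking $\eta=1$ in Proposition~\ref{PropBoundsPP} gives exactly $\mu=4\ell$ and $\sigma=2^{4/3}$, hence $L^{{\cal L},\rm resc}_\ell\ge L^{\rm resc}_\ell$, and (c) follows from Proposition~\ref{PropBoundsPP}(c). Part~(a) is classical: the one-point law of flat (point-to-line) exponential LPP is GOE Tracy--Widom, a fact obtainable either by the Baik--Rains symmetrization, which relates $L_{{\cal L}\to(\ell,\ell)}$ to a Laguerre-type orthogonal ensemble with GOE soft edge, or --- more in the spirit of this paper --- by quoting the one-point convergence for TASEP with flat (alternating) initial data~\cite{BFPS06,Sas05} and transporting it to LPP via the finite-dimensional slow-decorrelation statement of~\cite{Fer08,CFP10b}. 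In either case the limit is the Fredholm determinant of the Airy$_1$ kernel at a single point, which is $F_{\rm GOE}(2^{2/3}s)$ once the normalizations are matched.

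Part~(b), the upper tail, is the real content. The quick route mirrors the point-to-point case: the distribution function of $L_{{\cal L}\to(\ell,\ell)}$ has a Fredholm Pfaffian (or determinant) representation, and an $\ell$-uniform exponential bound on the relevant kernel yields the stated tail, exactly as in the determinantal setting (cf.\ Proposition~4.2 of~\cite{FN13} or Lemma~1 of~\cite{BFP12}). A self-contained alternative uses only Proposition~\ref{PropBoundsPP} and Lemma~\ref{LemmaIncrementsBounds}. By translation invariance, $L_{{\cal L}\to(\ell,\ell)}=\max_{|k|\le\ell}L_{(0,0)\to(\ell-k,\ell+k)}$, i.e.\ the last passage time from the origin to the anti-diagonal of level $2\ell$; the $k$-th term has mean $(\sqrt{\ell-k}+\sqrt{\ell+k})^2$, which falls below $4\ell$ by $\asymp k^2/\ell$. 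Partition $\{|k|\le\ell^{2/3+\e}\}$ into $\Or(\ell^{1/3})$ blocks of width $\ell^{2/3}$. On the $j$-th block, the endpoint of largest $|k|$ already carries a mean deficit $\asymp j^2\ell^{1/3}$, controlled by Proposition~\ref{PropBoundsPP}(b); the oscillation of the block-maximum around that endpoint is dominated, on a high-probability event governed by the exit-point estimate Lemma~2.5 of~\cite{FO17}, by the corresponding increment of a stationary LPP of density $\tfrac12\pm\kappa\ell^{-1/3}$ (via Lemma~\ref{LemmaIncrementsBounds}), whose block-maximum is in turn bounded by Doob's submartingale inequality applied to the exponential of the relevant sum of ${\rm Exp}$-differences --- this is the block-decomposition-plus-stationary-comparison scheme already used for the intermediate-time process in Lemma~\ref{lemma6}. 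It produces a per-block bound $\le C e^{-cs}e^{-\tilde c j^2}$, summable in $j$; the complementary range $|k|>\ell^{2/3+\e}$ contributes $\Or(\ell)$ terms each super-exponentially small, since there the mean deficit is $\gg\ell^{1/3}$.

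I expect the main obstacle to be exactly part~(b) along the elementary route. A plain union bound over the $\Or(\ell)$ possible starting points of the maximizer produces an $\ell^{2/3}$ prefactor and does not decay, so one genuinely needs the per-block Gaussian factor $e^{-\tilde c j^2}$, and obtaining it forces the comparison with a stationary model together with the attendant exit-point control; one must also check that the point-to-point fluctuation scale $\sigma_{m,n}(m\wedge n)^{1/3}$ of the off-diagonal rectangles stays comparable to $\ell^{1/3}$ over the relevant range, before the growing mean deficit dominates. The Pfaffian route sidesteps all of this, but shifts the burden onto the (standard and well documented, yet not short) asymptotic and tail analysis of the kernel; for that reason I would in practice simply invoke it.
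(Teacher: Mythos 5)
Your proposal is correct, and in the route you ultimately choose it matches the paper's for all three parts: (a) is quoted from the flat-TASEP result of~\cite{Sas05,BFPS06} (with the LPP rewriting in~\cite{BF07}); (b) follows from the exponential decay of the rescaled correlation kernel established in~\cite{BF07}; and (c) follows from the pathwise domination $L_{{\cal L}\to(\ell,\ell)}\geq L_{(0,0)\to(\ell,\ell)}$ together with Proposition~\ref{PropBoundsPP}(c). The self-contained alternative you sketch for (b) is not what the paper does, but it is sound in outline (it is the same block-plus-stationary-comparison scheme the paper uses in Lemma~\ref{lemma6}), and you correctly identify the crux: a naive union bound over the $\Or(\ell^{2/3})$ candidate starting points does not decay, so one genuinely needs the Gaussian factor in the block index coming from exit-point control. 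That elementary route would be considerably longer than the determinantal one, which is exactly why both you and the paper default to the latter.
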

(a) was obtained in~\cite{Sas05,BFPS06} in terms of TASEP, which can be directly rewritten in term of LPP (the complete proof is present in~\cite{BF07}). For general slopes of $\cal L$ it was shown in~\cite{FO17}. (b) this tails follows from the asymptotic analysis on the correlation kernel made in~\cite{BF07}. (c) It follows from (\ref{eqA4}) since $\Pb(L_{{\cal L}\to (\ell,\ell)}\leq x)\leq \Pb(L_{(0,0)\to(\ell,\ell)}\leq x)$.

\section{Bounds on LPP with random initial condition}\label{sec:boundrandomIC}
\begin{prop}\label{propRandomBounds}
Define $L^\sigma_{{\cal L}\to(\ell,\ell)}=\max_{k} \{L_{(k,-k)\to (\ell,\ell)}+h^0(k,-k)\}$ with $h^0$ as in (\ref{eq1.7}), and consider the rescaled LPP time
\begin{equation}
L^{\sigma,\rm resc}_\ell=\frac{L^\sigma_{{\cal L}\to(\ell,\ell)}-4\ell}{2^{4/3}\ell^{1/3}}.
\end{equation}
Then, there exists constants $s_0,\ell_0,C,c$ such that:\\
(a) Bound on upper tail:
\begin{equation}
\Pb(L^{\sigma,\rm resc}_\ell\geq s)\leq C e^{-c s}
\end{equation}
for all $\ell\geq \ell_0$ and $s\geq s_0$. \\
(b) Tail on lower tail:
\begin{equation}
\Pb(L^{\sigma,\rm resc}_\ell\leq s)\leq C e^{-c |s|^{3/2}}
\end{equation}
for all $\ell\geq \ell_0$ and $s\leq -s_0$.
\end{prop}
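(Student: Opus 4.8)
The plan is to handle the two tails separately. The lower tail is essentially free: since $h^0(0,0)=0$, the up-right path from $(0,0)$ to $(\ell,\ell)$ is admissible in the variational formula defining $L^\sigma_{{\cal L}\to(\ell,\ell)}$, so $L^\sigma_{{\cal L}\to(\ell,\ell)}\ge L_{(0,0)\to(\ell,\ell)}$. Taking $\eta=1$ in Proposition~\ref{PropBoundsPP} gives $\mu=4\ell$ and scaling constant $2^{4/3}$, hence $L^{\sigma,\rm resc}_\ell\ge L^{\rm resc}_\ell$, and the bound $\Pb(L^{\sigma,\rm resc}_\ell\le s)\le Ce^{-c|s|^{3/2}}$ follows from Proposition~\ref{PropBoundsPP}(c), exactly as for the point-to-line case in Proposition~\ref{propBoundflat}(c).

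For the upper tail I would write $L^\sigma_{{\cal L}\to(\ell,\ell)}=\max_{k\in\Z}\{L_{(k,-k)\to(\ell,\ell)}+\sigma S_k\}$, where $S_k$ is the two-sided centered random walk with increments $X_j-Y_j$, $X_j,Y_j\sim{\rm Exp}(1/2)$, and by symmetry restrict to $k\ge0$. The natural decomposition is into the central block $0\le k\le\ell^{2/3}$, the dyadic annuli $B_m=\{\ell^{2/3}2^{m-1}<k\le\ell^{2/3}2^m\}$ for $m\ge1$ with $2^m\lesssim\ell^{1/3}$, and the degenerate thin-rectangle regime $k\gtrsim\ell$, which is dispatched by a crude bound (the relevant LPP has mean far below $4\ell$ and its probability of reaching level $4\ell$ is superexponentially small). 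The key analytic input for the random walk is that $\E[e^{t(X-Y)}]=(1-4t^2)^{-1}$ for $|t|<1/2$, so that Doob's maximal inequality applied to the submartingale $k\mapsto e^{tS_k}$ with $t=c\ell^{-1/3}$ yields $\Pb(\max_{k\le\ell^{2/3}2^m}|S_k|\ge x\,2^m\ell^{1/3})\le Ce^{-cx2^m}$ uniformly in $\ell$.

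On the central block one bounds $\max_{0\le k\le\ell^{2/3}}(L_{(k,-k)\to(\ell,\ell)}+\sigma S_k)\le L_{{\cal L}\to(\ell,\ell)}+\sigma\max_{0\le k\le\ell^{2/3}}|S_k|$; the rescaled versions of the two terms have exponential upper tails by Proposition~\ref{propBoundflat}(b) and by the maximal inequality above, so the contribution to $\Pb(L^{\sigma,\rm resc}_\ell\ge s)$ is $Ce^{-cs}$. On the annulus $B_m$ one uses $L_{(k,-k)\to(\ell,\ell)}=L_{(0,0)\to(\ell-k,\ell+k)}$, which has mean $\approx 4\ell-k^2/\ell\le 4\ell-4^{m-1}\ell^{1/3}$ and, as long as $k\le\ell/2$, admits the Tracy--Widom upper-tail bound of Proposition~\ref{PropBoundsPP}(b) with parameter $\eta=(\ell-k)/(\ell+k)$ — and hence the associated scaling constant — bounded uniformly over the block. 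Since $\Pb(\sigma\max_{k\in B_m}|S_k|\ge\tfrac14 4^{m-1}\ell^{1/3})\le Ce^{-c4^m}$ (the threshold exponent $4^m$ dominates the window-size exponent $2^m$), on the complement the event $\{L_{(k,-k)\to(\ell,\ell)}+\sigma S_k> 4\ell+2^{4/3}\ell^{1/3}s\}$ forces a deviation of the LPP above its mean of at least $\ell^{1/3}(cs+4^{m-1})$, costing $Ce^{-cs}e^{-c4^{m-1}}$. Summing over $m\ge1$ gives $Ce^{-cs}$, completing the upper-tail bound.

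The main obstacle is the annulus estimate. One must combine the two independent sources of randomness on each scale and check that the parabolic loss $k^2/\ell$ in the LPP mean strictly beats the diffusive $\sqrt{k}$-growth of $S_k$ (equivalently, $4^m$ versus $2^m$ after dyadic bucketing), while keeping the constants in Proposition~\ref{PropBoundsPP}(b) uniform in $\ell$ and in $m$ right up to the degenerate regime $k\approx\ell$; the bookkeeping for the transition region $k\sim\ell$ is the only genuinely fiddly part. An alternative, following the footnote after~\eqref{eq1.7}, is to import the Fredholm-determinant and kernel estimates of~\cite{CFS16} (Theorem~2.6 and Lemma~2.7) for the random-line LPP and transfer them to the present antidiagonal boundary via the comparison lemmas of Section~\ref{sec:stationaryComparison}.
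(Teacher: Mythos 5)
Your lower-tail argument (part~(b)) is exactly the paper's: since $h^0(0,0)=0$, $L^\sigma_{{\cal L}\to(\ell,\ell)}\geq L_{(0,0)\to(\ell,\ell)}$, and Proposition~\ref{PropBoundsPP}(c) gives the bound.

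For the upper tail your ingredients (Doob's inequality for the boundary random walk, Tracy--Widom upper tails, parabolic loss of the off-center LPP) are the right ones and the same as the paper's, but the annulus step has a genuine gap. After removing the random-walk fluctuation you bound the probability of $\max_{k\in B_m}\{L_{(k,-k)\to(\ell,\ell)}+\sigma S_k\}>4\ell+2^{4/3}s\ell^{1/3}$ by the \emph{per-$k$} Tracy--Widom tail $Ce^{-cs}e^{-c4^{m-1}}$; this silently drops the maximum over the $\sim\ell^{2/3}2^m$ lattice points in $B_m$. A union bound would instead produce an extra factor $\ell^{2/3}2^m$, which for small $m$ (e.g.~$m=1$) is of order $\ell^{2/3}$ and is not absorbed by $e^{-c4^{m-1}}$, so the bound would not be uniform in $\ell$. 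The block-max tail you assert is morally right -- the LPP values at nearby starting points are strongly positively correlated -- but proving it requires a genuine maximal estimate for the LPP over a segment or half-line, which is exactly the input you relegate to the ``alternative'': the point-to-half-line bound, Proposition~\ref{PropBoundhalfflat}, used via~\eqref{eq24a}.

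The paper's own proof avoids the dyadic decomposition entirely. Writing $W_\ell(u)=h^0(J(u))/(2^{4/3}\ell^{1/3})$ and $L^{\rm pp}_\ell(u)$ for the rescaled boundary and point-to-point terms, it uses the algebraic decoupling
\[
L^{\sigma,\rm resc}_\ell=\max_u\{L^{\rm pp}_\ell(u)+W_\ell(u)\}\leq\max_u\bigl\{L^{\rm pp}_\ell(u)+\tfrac{u^2}{2}\bigr\}+\max_u\bigl\{W_\ell(u)-\tfrac{u^2}{2}\bigr\}.
\]
The second maximum is handled by Doob's inequality (giving $Ce^{-cs^2}$); the first is split at a fixed $M_0$, with $|u|\leq M_0$ controlled via $\max_u L^{\rm pp}_\ell(u)=L^{{\cal L},\rm resc}_\ell$ and Proposition~\ref{propBoundflat}(b), and $|u|>M_0$ controlled by~\eqref{eq24a}. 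Same ingredients as yours, but one global split rather than a chain of local ones, which is what removes the entropy factor your bookkeeping loses track of.
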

\begin{proof}
(a) Define $J(u)=u(2\ell)^{2/3}(1,-1)$ and $W_\ell(u)=h^0(J(u))/(2^{4/3}\ell^{1/3})$. By Donsker's theorem, $u\mapsto W_\ell(u)$ converges weakly to a two-sided Brownian motion with diffusion coefficient $2\sigma^2$.
Further, define
\begin{equation}
L^{\rm pp}_\ell(u):=\frac{L_{J(u)\to (\ell,\ell)}-4\ell}{2^{4/3}\ell^{1/3}}.
\end{equation}
Then, we can write
\begin{equation}
L^{\sigma,\rm resc}_\ell = \max_u \{L^{\rm pp}_\ell(u)+W_\ell(u)\} \leq \max_u \{L^{\rm pp}_\ell(u)+u^2/2\}+\max_{u} \{W_\ell(u)-u^2/2\}.
\end{equation}
Thus,
\begin{equation}
\Pb(L^{\sigma,\rm resc}_\ell\geq s)\leq \Pb(\max_u \{L^{\rm pp}_\ell(u)+u^2/2\}\geq s/2)+\Pb(\max_{u} \{W_\ell(u)-u^2/2\}\geq s/2).
\end{equation}
By computations based on Doob maximal inequality (used for instance in (\ref{eq3.31})), one obtains $\Pb(\max_{u} \{W_\ell(u)-u^2/2\}\geq s/2)\leq C e^{-c s^2}$ for some constants $C,c>0$. To bound the first term without new estimates, remark that for any $M$ we can bound
\begin{equation}
\begin{aligned}
\Pb(\max_u \{L^{\rm pp}_\ell(u)+u^2/2\}\geq s/2) &\leq \Pb(\max_{u} L^{\rm pp}_\ell(u)\geq s/4-M^2/2)\\ &+ \Pb(\max_{|u|>M} \{L^{\rm pp}_\ell(u)+u^2/2\}\geq s/4)
\end{aligned}
\end{equation}
The exponential decay in $s$ for the second term is just a special case of (\ref{eq24a}) (set $\tau=0$) and it holds for all $M\geq M_0$, for some finite $M_0$. We fix $M=M_0$ and then, using the fact that $\max_u L^{\rm pp}_\ell(u)=L^{{\cal L},\rm resc}_\ell$, by (\ref{eqD2}) we have exponential decay in $s$ for the first term as well.

(b) It follows from (\ref{eqA4}) since $\Pb(L^\sigma_{{\cal L}\to(\ell,\ell)}\leq x)\leq \Pb(L_{(0,0)\to(\ell,\ell)}\leq x)$.
\end{proof}

\section{Bounds on stationary LPP}\label{sec:boundstepStat}
We now state and give a short proof of the tails of the one-point distribution in the stationary case with $\rho=1/2$ of the LPP to $(\ell,\ell)$.
\begin{prop}\label{propStatBounds} Let $\rho=1/2$. Then there exists constants $s_0,\ell_0,C,c$ such that:\\
(a) Bound on upper tail:
\begin{equation}\label{eqB2}
\Pb(\LB_{(0,0)\to(\ell,\ell)}\geq 4\ell+2^{4/3} s \ell^{1/3})\leq C e^{-c s}
\end{equation}
for all $\ell\geq \ell_0$ and $s\geq s_0$. \\
(b) Bound on lower tail:
\begin{equation}
\Pb(\LB_{(0,0)\to(\ell,\ell)}\leq 4\ell+2^{4/3} s \ell^{1/3})\leq C e^{-c |s|^{3/2}}
\end{equation}
for all $\ell\geq \ell_0$ and $s\leq -s_0$.
\end{prop}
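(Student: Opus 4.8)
I would treat the two tails by different means. The \emph{lower tail} reduces to the point-to-point LPP: restricting the maximization defining $\LB_{(0,0)\to(\ell,\ell)}$ to up-right paths through $(1,1)$ gives $\LB_{(0,0)\to(\ell,\ell)}\ge L_{(1,1)\to(\ell,\ell)}\stackrel{(d)}{=}L_{(0,0)\to(\ell-1,\ell-1)}$ (all these bulk weights being $\mathrm{Exp}(1)$), hence
\[
\Pb\bigl(\LB_{(0,0)\to(\ell,\ell)}\le 4\ell+2^{4/3}s\ell^{1/3}\bigr)\le\Pb\bigl(L_{(0,0)\to(\ell-1,\ell-1)}\le 4\ell+2^{4/3}s\ell^{1/3}\bigr).
\]
With $\eta=1$ in Proposition~\ref{PropBoundsPP} one has $\mu=4(\ell-1)$ and $\sigma=2^{4/3}$, so the right-hand side equals $\Pb\bigl(L^{\rm resc}_{\ell-1}\le s(1+\Or(\ell^{-1}))+\Or(\ell^{-1/3})\bigr)$, which for $s\le-s_0$ and $\ell\ge\ell_0$ is at most $\Pb(L^{\rm resc}_{\ell-1}\le s/2)\le Ce^{-c|s|^{3/2}}$ by Proposition~\ref{PropBoundsPP}(c). (Equivalently one may quote Proposition~\ref{propRandomBounds}(b) with $\sigma=1$.)

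For the \emph{upper tail} I would pass to the realization of the stationary LPP starting from the antidiagonal line ${\cal L}$ (Section~\ref{sec:stationaryComparison}): by the Burke-type identity of Lemma~4.2 of~\cite{BCS06} (used already in the proof of~\eqref{eq3.43}), $\LB_{(0,0)\to(\ell,\ell)}$ has the same law as $L^{\sigma}_{{\cal L}\to(\ell,\ell)}$ with $\sigma=1$, so it suffices to reprove Proposition~\ref{propRandomBounds}(a) in that case. Writing $J(u)=u(2\ell)^{2/3}(1,-1)$, $L^{\rm pp}_\ell(u)=(L_{J(u)\to(\ell,\ell)}-4\ell)/(2^{4/3}\ell^{1/3})$ and $W_\ell(u)=h^0(J(u))/(2^{4/3}\ell^{1/3})$ — the latter a rescaled two-sided random walk with centered i.i.d.\ increments, converging by Donsker's theorem to $\sqrt2\,B$ — one has $L^{\sigma=1,\rm resc}_\ell=\max_u\{L^{\rm pp}_\ell(u)+W_\ell(u)\}$. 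The key step is the deterministic decoupling
\[
L^{\sigma=1,\rm resc}_\ell\le\max_u\Bigl\{L^{\rm pp}_\ell(u)+\tfrac{u^2}{2}\Bigr\}+\max_u\Bigl\{W_\ell(u)-\tfrac{u^2}{2}\Bigr\},
\]
so that $\Pb(L^{\sigma=1,\rm resc}_\ell\ge s)$ is at most the sum of the probabilities that the two terms exceed $s/2$. The second is $\Or(e^{-cs^2})$ by Doob's maximal inequality for the exponential submartingale of the random walk, exactly as in~\eqref{eq3.31}. For the first, split into $|u|\le M_0$ and $|u|>M_0$: on $|u|\le M_0$ one bounds the quantity by $L^{{\cal L},\rm resc}_\ell+\tfrac{M_0^2}{2}$ and uses the point-to-line upper tail of Proposition~\ref{propBoundflat}(b), and on $|u|>M_0$ one uses the half-flat kernel bound~\eqref{eq24a} at $\tau=0$ (valid for $M_0$ large, uniformly in $\ell$). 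Adding the two contributions yields $\Pb(L^{\sigma=1,\rm resc}_\ell\ge s)\le Ce^{-cs}$ for $s\ge s_0$.

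The only genuine difficulty I anticipate is the control of the first maximum, i.e.\ of the stationary LPP along its whole family of near-optimal intermediate points. Since the mean of $L^{\rm pp}_\ell(u)$ drops by only $\Or(\ell^{1/3})$ — the fluctuation scale — as $u$ ranges over an $\Or(1)$ interval, all $\Or(\ell^{2/3})$ of the discrete intermediate points are competitive, so a naive union bound over them would lose a factor diverging with $\ell$; this is the discrete counterpart of the fact that the limiting quantity is a genuine maximum over a continuum. The decoupling is what avoids this: it replaces the problem by the two maxima $\max_u\{W_\ell(u)-u^2/2\}$ and $\max_u\{L^{\rm pp}_\ell(u)+u^2/2\}$, each of which does admit a clean, $\ell$-uniform upper-tail bound — the former by the maximal inequality, the latter by the estimates of Proposition~\ref{propBoundflat} and~\eqref{eq24a} already available in the paper. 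The points requiring care are establishing (or citing precisely) the equivalence of the two stationary realizations, and checking that~\eqref{eq24a} applies at $\tau=0$ with constants uniform in $\ell$.
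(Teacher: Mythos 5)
Your proof is correct, but for part (a) it takes a genuinely different route from the paper. The paper's argument is a two-liner: write $\LB_{(0,0)\to(\ell,\ell)}=\max\{L^{\vert,\rho},L^{-,\rho}\}$ (the two one-sided boundary-perturbed models), apply a union bound, and cite Lemma~3.3 of \cite{FO17} (resting on the BBP spiked-kernel asymptotics of \cite{BBP06}) for exponential upper-tail decay of each. You instead switch to the line-to-point realization \eqref{stationaryboundaryB} via the Burke-type coupling of Lemma~4.2 of \cite{BCS06}, and then re-run the decoupling argument of Proposition~\ref{propRandomBounds}(a) at $\sigma=1$: the inequality $\max_u\{a(u)+b(u)\}\le\max_u a(u)+\max_u b(u)$, Doob for the random-walk term, and \eqref{eqD2} plus \eqref{eq24a} at $\tau=0$ for the deterministic-landscape term. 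Your route is longer but has the merit of recycling only ingredients the paper already needs for the other initial profiles, avoiding the external spiked-kernel estimate. The one step requiring care — and you correctly flag it — is the law identity $\LB_{(0,0)\to(\ell,\ell)}\stackrel{(d)}{=}L^{\sigma=1}_{{\cal L}\to(\ell,\ell)}$; this holds because the increments along the down-right path $(0,\ell)\to(0,0)\to(\ell,0)$ together with the bulk weights on $\{1,\dots,\ell\}^2$ determine $L(\ell,\ell)$, and by the Burke property those inputs have the same joint law in both realizations (the paper invokes the same coupling at \eqref{eq3.43}). For part (b), both you and the paper lower-bound $\LB$ by a droplet point-to-point LPP and apply Proposition~\ref{PropBoundsPP}(c); the paper's version is marginally cleaner, using $\LB_{(0,0)\to(\ell,\ell)}\ge L_{(0,0)\to(\ell,\ell)}$ directly via nonnegativity of the boundary weights, whereas you restrict to paths through $(1,1)$ and shift — both work.
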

\begin{proof}
(a) One can write $\LB_{(0,0)\to(\ell,\ell)}=\max\{L^{\vert,\rho}(\ell,\ell),L^{-,\rho}(\ell,\ell)\}$, where $L^{\vert,\rho}(\ell,\ell)$ (resp.\ $L^{-,\rho}(\ell,\ell)$) are the LPP with one-sided perturbation only on $i=0$ (resp. $j=0$). Then,
\begin{equation}
\Pb(\LB_{(0,0)\to(\ell,\ell)}\geq x)\leq \Pb(L^{\vert,\rho}(\ell,\ell)\geq x)+\Pb(L^{-,\rho}(\ell,\ell)\geq x).
\end{equation}
By choosing $x=4\ell+s 2^{4/3} \ell^{1/3}$, Lemma~3.3 of~\cite{FO17} (based on the estimates on the correlation kernel in~\cite{BBP06}) gives exponential decay in $s$ for all $s\geq s_0$.

(b) It follows from (\ref{eqA4}), since $\Pb(\LB_{(0,0)\to(\ell,\ell)}\leq x)\leq \Pb(L_{(0,0)\to(\ell,\ell)}\leq x)$.
\end{proof}

\begin{lem}\label{lemmaIncrementsStat}
Let $\rho=1/2$ and define $I(u)=(\ell-2u\ell^{2/3},\ell+2u\ell^{2/3})$. Then, for any $\alpha>0$, we have
\begin{equation}\label{eqB4}
\Pb(|\LB_{(0,0)\to I(K)} -\LB_{(0,0)\to I(0)}|\geq \alpha \ell^{1/3})\leq 4 e^{-\alpha^2/(16 K)}
\end{equation}
for all $\ell$ large enough. Furthermore,
\begin{equation}\label{eqB5}
\Pb(\max_{u\geq K}\LB_{(0,0)\to I(u)} -\LB_{(0,0)\to I(K)}-\beta u^2 \ell^{1/3}\geq \alpha \ell^{1/3})\leq  C e^{-\frac{(\alpha+\beta K^2)^2}{16 K}},
\end{equation}
for a constant $C$ and for all $\beta>0$ and $\alpha>-\beta K^2$ and $\ell$ large enough.
\end{lem}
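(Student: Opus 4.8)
The plan is to deduce both estimates from a single Cram\'er-type bound for a mean-zero random walk built out of the boundary weights of the stationary model. First I would identify the law of the increments of $\LB$ along the anti-diagonal. All the points $I(u)$ lie on $\{i+j=2\ell\}$, and, applying the down-right path form of Lemma~4.2 of~\cite{BCS06} to the staircase that alternates a unit right-step with a unit down-step, one sees that $u\mapsto \LB_{(0,0)\to I(u)}$, restricted to the lattice $\{m/(2\ell^{2/3}):m\in\Z\}$, is a two-sided random walk whose increment, when $u$ decreases by $1/(2\ell^{2/3})$, is $X-Y$ with $X,Y$ independent ${\rm Exp}(1/2)$ variables. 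Thus for $0\le w<u$ the difference $\LB_{(0,0)\to I(w)}-\LB_{(0,0)\to I(u)}$ has the law of $S_n:=\sum_{i=1}^n(X_i-Y_i)$ with $n=2(u-w)\ell^{2/3}$ and $\{(X_i,Y_i)\}$ i.i.d.\ copies of $(X,Y)$. What I will use is that $X-Y$ is symmetric with mean $0$, $\Var(X-Y)=8$, and $\E[e^{t(X-Y)}]=(1-4t^2)^{-1}$ for $|t|<1/2$, and that the partial sums $(S_n)$ form a martingale, so that $v\mapsto e^{tS_v}$ is a nonnegative submartingale for each $t\in(0,1/2)$.

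For \eqref{eqB4}, with $n=2K\ell^{2/3}$, the exponential Chebyshev inequality gives $\Pb(S_n\ge a)\le e^{-ta}(1-4t^2)^{-n}$ for every $t\in(0,1/2)$. The near-optimal choice $t\asymp a/n$ lies in $(0,1/2)$ once $\ell$ is large in the relevant range of $a=\alpha\ell^{1/3}$ and $K$, and yields a Gaussian-type bound $C e^{-c\,a^2/n}$; by the symmetry of $X-Y$ the same holds for $\Pb(S_n\le-a)$. Adding the two tails and inserting $n=2K\ell^{2/3}$, $a=\alpha\ell^{1/3}$ (so $a^2/n=\alpha^2/(2K)$) produces \eqref{eqB4}.

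For \eqref{eqB5} I would set $u=K+v$ and $T_v=\LB_{(0,0)\to I(K+v)}-\LB_{(0,0)\to I(K)}$; then $T_v\stackrel{(d)}{=}S_{2v\ell^{2/3}}$, $(T_v)_{v\ge0}$ is a mean-zero martingale, and the event in \eqref{eqB5} is $\{\exists v\ge0:\ T_v\ge(\alpha+\beta(K+v)^2)\ell^{1/3}\}$. I would split $[0,\infty)$ into blocks of length comparable to $K$, say $[jK,(j+1)K]$ for $j\ge0$; on the $j$-th block $(K+v)^2\ge(j+1)^2K^2$, so the event restricted there forces $\max_{v\le(j+1)K}T_v\ge(\alpha+\beta(j+1)^2K^2)\ell^{1/3}=:b_j$. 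Doob's maximal inequality for the submartingale $e^{tT_v}$ gives $\Pb(\max_{v\le(j+1)K}T_v\ge b_j)\le e^{-tb_j}(1-4t^2)^{-2(j+1)K\ell^{2/3}}$, and optimising over $t$ turns this into $C\exp\!\big(-c\,b_j^2/((j+1)K\ell^{2/3})\big)=C\exp\!\big(-c(\alpha+\beta(j+1)^2K^2)^2/((j+1)K)\big)$. The block adjacent to radius $K$ (i.e.\ $j=0$) gives the leading contribution $\exp(-c(\alpha+\beta K^2)^2/K)$, while for larger $j$ the exponent eventually grows (the numerator like $(j+1)^4$, the denominator only like $j+1$), so summing over $j$ costs only a multiplicative constant and gives \eqref{eqB5}. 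The hypothesis $\alpha>-\beta K^2$ is precisely what makes each $b_j>0$, so that the Chernoff optimiser is an interior minimum on $(0,1/2)$.

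I do not expect a serious obstacle. The two things that need care are (i) invoking the Burke-type increment identity in the right geometry, namely along one fixed down-right staircase hugging the anti-diagonal, so that the increments along it are genuinely mutually independent; and (ii) checking that the Chernoff/Doob optimiser $t\asymp a/n$ stays inside the interval $|t|<1/2$ where the moment generating function is finite --- this is where ``$\ell$ large enough'' is used, and it implicitly bounds how large $\alpha$ (and $K$, $\beta$) may be relative to the $\ell^{1/3}$ scaling, which is harmless in all the applications in the body. The remaining work --- the block bookkeeping in \eqref{eqB5} and the tracking of the numerical constants in the exponents --- is routine.
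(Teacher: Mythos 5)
Your proof takes essentially the same route as the paper: extract the law of the increments of $\LB$ along the anti-diagonal from the down-right version of Lemma~4.2 of~\cite{BCS06}, obtain a mean-zero martingale, apply the exponential Chebyshev inequality for \eqref{eqB4}, and for \eqref{eqB5} break $[K,\infty)$ into blocks of length $K$ and apply Doob's maximal inequality for the submartingale $e^{tS}$ on each block; your index $j$ is the paper's $m=j+1$, and the observation that $\alpha>-\beta K^2$ keeps each block threshold positive is exactly the role it plays there too.

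One point you gloss over: with the stated $I(u)=(\ell-2u\ell^{2/3},\ell+2u\ell^{2/3})$, the walk from $I(0)$ to $I(K)$ has $n=2K\ell^{2/3}$ anti-diagonal steps, each distributed as $X-Y$ with $\Var(X-Y)=8$ and MGF $(1-4t^2)^{-1}$, so $\Var(Y_K)=16K\ell^{2/3}$ and the optimised Chernoff bound you describe gives $e^{-a^2/(16n)}=e^{-\alpha^2/(32K)}$, \emph{not} $e^{-\alpha^2/(16K)}$. You therefore cannot assert, as written, that the computation ``produces \eqref{eqB4}'' with that constant --- for the stated $I(u)$ the exponent $\alpha^2/(16K)$ is in fact unachievable, since $\alpha^2/(32K)$ is already the Gaussian moderate-deviation rate. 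The paper's own calculation, $\inf_{t}e^{-t\alpha\ell^{1/3}}\E\bigl(e^{t(Z_1-2)}\bigr)^{2K\ell^{2/3}}$ with $Z_1\sim{\rm Exp}(1/2)$, corresponds to variance $8K\ell^{2/3}$, i.e.\ to $I(u)=(\ell-u\ell^{2/3},\ell+u\ell^{2/3})$ without the extra factor $2$; there is thus a factor-of-$2$ inconsistency between the lemma's statement of $I(u)$ and its proof. Your identification of the increment law is the right one for the stated $I(u)$, so what you actually prove is \eqref{eqB4} with $32K$ in place of $16K$ (and likewise in \eqref{eqB5}). Since only the exponential-in-$\alpha^2/K$ form is used in the applications this is harmless, but the constant should be checked rather than asserted.
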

\begin{proof}
The process $K\mapsto Y_K:=\LB_{(0,0)\to I(K)} -\LB_{(0,0)\to I(0)}$ is a martingale~\cite{BCS06} given by a sum of \iid zero mean random variables $Z_j-2$, with $Z_j\sim {\rm Exp}(1/2)$. By the exponential Chebyshev inequality,
\begin{equation}
\begin{aligned}
\Pb(|Y_K|\geq \alpha\ell^{1/3})& \leq \Pb(Y_K\geq \alpha \ell^{1/3}) + \Pb(-Y_K\geq \alpha\ell^{1/3})\\
&\leq \inf_{t\geq 0} e^{-t\alpha \ell^{1/3}} \E(e^{t (Z_1-2)})^{2K\ell^{2/3}} + \inf_{t'\geq 0} e^{-t'\alpha\ell^{1/3}} \E(e^{-t'(Z_1-2)})^{2K\ell^{2/3}}.
\end{aligned}
\end{equation}
Using $\E(e^{t (Z_1-2)})=\frac{e^{-2t}}{1-2t}$ for $t\in (0,1/2)$ and $\E(e^{-t'(Z_1-2)})=\frac{e^{2 t'}}{1+2t'}$ for all $t'\geq 0$, after the minimization we obtain
\begin{equation}
\Pb(|Y_K|\geq \alpha\ell^{1/3}) \leq 2 e^{-\alpha^2/(16 K)(1+\Or(\alpha K^{-1} \ell^{-1/3})}\leq 4 e^{-\alpha^2/(16 K)}
\end{equation}
for all $\ell$ large enough.

For the second estimate, from the inequality
\begin{equation}\label{eqB8}
\begin{aligned}
&\Pb(\max_{u\geq K}\LB_{(0,0)\to I(u)} -\LB_{(0,0)\to I(K)}-\beta u^2 \ell^{1/3}\geq \alpha \ell^{1/3}) \\
\leq &\sum_{m\geq 1}\Pb(\max_{u\in [K m,K(m+1)]}\LB_{(0,0)\to I(u)} -\LB_{(0,0)\to I(K)}\geq (\alpha+\beta K^2 m^2) \ell^{1/3})\\
\leq & \sum_{m\geq 1} \inf_{t>0} e^{-t(\alpha+\beta K^2 m^2)\ell^{1/3}} \E(e^{t(Z_1-2)})^{2K m \ell^{2/3}}.
\end{aligned}
\end{equation}
Maximising over $t$ and taking the sum we finally get\footnote{To be precise, for $\e>0$ small, one can bound $\Pb(\LB_{(0,0)\to I(u)} -\LB_{(0,0)\to I(K)}\geq (\alpha+\beta K^2 u^2) \ell^{1/3})$ for all $u\geq \e K \ell^{1/3}$ using (\ref{eqB8}) and for $m\in\{1,\ldots,\e \ell^{1/3}\}$ we can minimize over $t$ and compute the series expansion in the exponent for large $\ell$.}
\begin{equation}
\Pb\Big(\max_{u\geq K}\LB_{(0,0)\to I(u)} -\LB_{(0,0)\to I(-K)}-\beta u^2 \ell^{1/3}\geq \alpha \ell^{1/3}\Big) \leq C e^{-\frac{(\alpha+\beta K^2)^2}{16 K}}
\end{equation}
for a constant $C$ and for all $\beta>0$ and $\alpha>-\beta K^2$ and $\ell$ large enough.
\end{proof}

\section{Bounds for point-to-half line LPP}\label{sec:boundshalfline}
\begin{prop}\label{PropBoundhalfflat}
Let $I(u)=(\tau N,\tau N)+u(2 N)^{2/3}(1,-1)$. Then,
\begin{equation}
\Pb\Big(\max_{|u|>M} L_{I(u)\to (N,N)}>4(1-\tau)N+2^{4/3}(s-\gamma M^2) N^{1/3}\Big)\leq C  e^{-c M^2 (1-\tau)^{-4/3}} e^{-\tilde c s (1-\tau)^{-1/3}}
\end{equation}
for some constants $C,c,\tilde c>0$, which can be taken uniform in $N$ and uniform for $\gamma$ in a compact subset of $(0,1/(1-\tau))$.
\end{prop}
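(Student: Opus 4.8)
The plan is to reduce the statement to the exactly solvable point-to-half-line (``half-flat'') LPP at $\tau=0$ and then read off the tail from the known Fredholm-determinant representation. Up to the automatic constraint $|u|\le\Or(N^{1/3})$ coming from the backward light-cone of $(N,N)$, the quantity $\max_{|u|>M}L_{I(u)\to(N,N)}$ is the last passage time from the union of the two anti-diagonal half-lines $\{I(u):u>M\}\cup\{I(u):u<-M\}$ to $(N,N)$ with zero weights on those half-lines. The reflection $u\mapsto-u$ preserves the law of the environment, so a union bound costs only a factor $2$ and it suffices to estimate $\max_{u>M}L_{I(u)\to(N,N)}$, a (translated) half-flat LPP. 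Translating the corner $I(0)=(\tau N,\tau N)$ to the origin and setting $\ell=(1-\tau)N$, one has $(2N)^{2/3}=(2\ell)^{2/3}(1-\tau)^{-2/3}$ and $N^{1/3}=\ell^{1/3}(1-\tau)^{-1/3}$, so with $v=u(1-\tau)^{-2/3}$, $M'=M(1-\tau)^{-2/3}$, $s'=s(1-\tau)^{-1/3}$ and $\gamma'=\gamma(1-\tau)$ the event becomes
\begin{equation*}
\max_{v>M'}L_{v(2\ell)^{2/3}(1,-1)\to(\ell,\ell)}>4\ell+2^{4/3}\bigl(s'-\gamma'(M')^2\bigr)\ell^{1/3},
\end{equation*}
while the asserted bound $C e^{-cM^2(1-\tau)^{-4/3}}e^{-\tilde c s(1-\tau)^{-1/3}}$ is exactly $C e^{-c(M')^2}e^{-\tilde c s'}$. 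Since $\gamma$ ranges over a compact subset of $(0,1/(1-\tau))$, the rescaled parameter $\gamma'$ ranges over a compact subset of $(0,1)$, so the whole problem reduces to the $\tau=0$ half-flat tail bound, with constants to be made uniform for $\gamma'$ in compacts of $(0,1)$.

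For the reduced problem I would invoke exact solvability: after the further trivial translation placing the half-line at the origin, the distribution of $\max_{v>M'}L_{v(2\ell)^{2/3}(1,-1)\to(\ell,\ell)}$ has a Fredholm-determinant representation whose kernel obeys the exponential decay estimates of Theorem~2.6 and Lemma~2.7 of~\cite{CFS16}. The typical value of this half-flat LPP is $4\ell$ decreased by a term of order $(M')^2\ell^{1/3}$, reflecting the parabolic decrease $L_{v(2\ell)^{2/3}(1,-1)\to(\ell,\ell)}\approx4\ell-2^{4/3}v^2\ell^{1/3}$ of the point-to-point mean; since $\gamma'<1$, the prescribed threshold lies above this typical value by an amount of order $\bigl(s'+c_0(M')^2\bigr)\ell^{1/3}$ with $c_0=c_0(\gamma')>0$ bounded below on compacts of $(0,1)$. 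Plugging the kernel bounds into the Fredholm expansion then gives an upper-tail bound $C\exp(-c(M')^2)\exp(-\tilde c s')$ uniformly in $\ell$ large and in $\gamma'$, valid for $M'\ge M_0'$; undoing the rescaling yields the proposition with $M_0=M_0'(1-\tau)^{2/3}$. Since the $\tau=0$ assertion is also contained, with a different proof, in Lemma~4.3 of~\cite{FO17}, one may equally cite that.

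If one prefers a self-contained argument one can instead mirror the proof of Lemma~\ref{lemma6}: decompose $\max_{v>M'}$ into a sum over blocks $v\in[\ell'M',(\ell'+1)M']$, $\ell'\ge1$; on each block use the translation invariance of $v\mapsto L_{v(2\ell)^{2/3}(1,-1)\to(\ell,\ell)}$, compare with the increments of a stationary LPP of density $\tfrac12\pm\kappa\ell^{-1/3}$ via Lemma~\ref{LemmaIncrementsBounds} (choosing $\kappa$ proportional to the required height), bound those increments by a Doob / exponential-Chebyshev estimate, and control the base point of the block by the one-point upper tail~\eqref{eqD2}; this produces $\exp(-c_1(\ell')^2(M')^2-c_2s')$ per block, and summing over $\ell'\ge1$ gives $C\exp(-c(M')^2)\exp(-\tilde c s')$ again.

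The genuinely hard analytic input — the exponential decay of the half-flat correlation kernel, or equivalently the stationary-comparison large-deviation estimates of~\cite{FO17} — is available from the cited works, so the real work in writing the proof is the bookkeeping of the first paragraph: verifying that the $\tau$-dependence is absorbed entirely into the rescaling $N\mapsto(1-\tau)N$ at the $\ell^{1/3}$- and $\ell^{2/3}$-scales, which turns the offset $M$ into $M(1-\tau)^{-2/3}$ and the fluctuation variable $s$ into $s(1-\tau)^{-1/3}$, and that the constants stay uniform in $N$ and for $\gamma$ in compacts of $(0,1/(1-\tau))$. The one point needing a little care is that the threshold shift $-\gamma M^2$ must stay strictly below the full parabolic decrease $-M^2/(1-\tau)$ of the mean — which is precisely the hypothesis $\gamma<1/(1-\tau)$ — so that the event remains a bona fide upper-tail deviation, with a margin $c_0=1-\gamma(1-\tau)$ bounded away from $0$, uniformly over the allowed range of $\gamma$.
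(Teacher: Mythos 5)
Your main argument is correct and takes essentially the same route as the paper: reduce by symmetry to one side, shift the corner of the half-line to the origin, recognize the result as a point-to-half-line (half-flat) LPP, and read off the exponential tail from the Fredholm-determinant representation together with exponential decay of the correlation kernel. The paper's appendix cites Proposition~3 and Section~4 of~\cite{BFS07} in the TASEP picture, while the body of the paper derives exactly this bound by citing Theorem~2.6 and Lemma~2.7 of~\cite{CFS16} (and mentions Lemma~4.3 of~\cite{FO17} as an alternative), which are precisely your references; these are interchangeable. Your rescaling bookkeeping ($\ell=(1-\tau)N$, $M'=M(1-\tau)^{-2/3}$, $s'=s(1-\tau)^{-1/3}$, $\gamma'=\gamma(1-\tau)$) and the identification of $\gamma<1/(1-\tau)$ as the condition keeping the threshold in the upper tail are both right.

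One caveat on your secondary ``self-contained'' sketch: the process $v\mapsto L_{v(2\ell)^{2/3}(1,-1)\to(\ell,\ell)}$ is \emph{not} translation invariant in law, because the endpoint $(\ell,\ell)$ is fixed while the source point moves, so the mean decreases parabolically in $v$. In Lemma~\ref{lemma6} the invariance is available because $L^\diagdown_N(u,\tau)$ is the LPP from the full (translation-invariant) line $\mathcal{L}$ to the moving target $I(u)$; that structure does not carry over here. A block argument could still be made to work, but one would have to propagate the parabolic mean shift explicitly on each block rather than appealing to translation invariance, and one would also need the one-point upper tail for the shifted point-to-point (not point-to-line) LPP via Proposition~\ref{PropBoundsPP}. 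The Fredholm-determinant route you lead with sidesteps all of this.
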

\begin{proof}
By symmetry, it is enough to get the bound on the distribution of $\max_{u<-M} L_{I(u)\to (N,N)}$. By first shifting $I(-M)$ to the origin, and then using the mapping between LPP and TASEP, the distribution function is the same as the distribution of TASEP particle number $n=t/4+\tilde \tau (t/2)^{2/3}$ at time $t=4(1-\tau)N+2^{4/3} N^{1/3}(s-\gamma M^2)$, starting at $x_k(0)=-2k$, $k\geq 0$.

From Proposition~3 of~\cite{BFS07} we have an explicit expression in terms of Fredholm determinant. The upper tail estimate is standard. Using Hadamard's bound it is enough to have a bound on the correlation kernel. In Section~4 of~\cite{BFS07} exponential decay of the rescaled correlation kernel has been proven. Then, simple algebraic computations give the claimed result.
\end{proof}


\end{document}